\documentclass[english,11pt]{article}
\usepackage[margin=1.0in]{geometry}
\usepackage[numbers]{natbib}
\bibliographystyle{plainnat}

\usepackage{algorithm}
\usepackage{algorithmic}

\usepackage[pdftex,colorlinks]{hyperref}
\usepackage{float}
\usepackage{amssymb}
\usepackage{amsmath}
\usepackage{amsfonts}
\usepackage{amssymb}
\usepackage{amsthm}
\usepackage{algorithm}
\usepackage{algorithmic}
\newtheorem{theorem}{Theorem}[section]
\newtheorem{lemma}[theorem]{Lemma}

\newtheorem{counter-example}[theorem]{Counter example}

\newtheorem{open question}[theorem]{Open question}
\newtheorem{corollary}[theorem]{Corollary}

\newtheorem{question}[theorem]{Question}
\newtheorem{definition}[theorem]{Definition}

\newtheorem{claim}{Claim}

\newcommand{\ch}{{\cal H}}

\newcommand{\cf}{{\cal F}}

\newcommand{\cy}{{\cal Y}}

\newcommand{\cs}{{\cal S}}

\DeclareMathOperator*{\Dim}{Dim}

\makeatletter

%%%%%%%%%%%%%%%%%%%%%%%%%%%%%% LyX specific LaTeX commands.
\floatstyle{ruled}
\newfloat{algorithm}{tbp}{loa}
\providecommand{\algorithmname}{Algorithm}
\floatname{algorithm}{\protect\algorithmname}

\makeatother

\usepackage{babel}
%\doublespacing

\begin{document}
\author{Amit Daniely\thanks{Dept. of Mathematics, The Hebrew University, Jerusalem, Israel}   \hspace{1cm} Michael Schapira\thanks{School of Computer Science and Engineering, The Hebrew University, Jerusalem, Israel.} \hspace{1cm}Gal Shahaf\thanks{Dept. of Mathematics, The Hebrew University, Jerusalem, Israel}
}

\setcounter{page}{0}

\title{Inapproximability of Truthful Mechanisms via Generalizations of the VC Dimension}

\maketitle
\thispagestyle{empty}
\begin{abstract}
Algorithmic mechanism design (AMD) studies the delicate interplay between computational efficiency, truthfulness, and optimality. We focus on AMD's paradigmatic problem: combinatorial auctions. We present a new generalization of the VC dimension to multivalued collections of functions, which encompasses the classical VC dimension, Natarajan dimension, and Steele dimension. We present a corresponding generalization of the Sauer-Shelah Lemma and harness this VC machinery to establish inapproximability results for deterministic truthful mechanisms. Our results essentially unify all inapproximability results for deterministic truthful mechanisms for combinatorial auctions to date and establish new separation gaps between truthful and non-truthful algorithms.
\end{abstract}

\newpage

\section{Introduction}

Algorithmic mechanism design (AMD) studies computational environments in which the input to the algorithm is provided by self-interested, strategic, parties, e.g., combinatorial auctions---the now paradigmatic problem of AMD: $m$ items $1,\ldots,m$ are being sold to $n$ bidders $1,\ldots,n$. Each bidder $i$ has a valuation function $v_i : 2^{[m]}\rightarrow R$, which specifies $i$'s ``maximum willingness to pay'' for every subset (``bundle'') of items $S\subseteq [m]$. The objective is to maximize social welfare, that is, to partition the items between the bidders in such a way that the ``social welfare'' $\Sigma_i v_i(S_i)$ is maximized, where $S_i$ is the bundle assigned to bidder $i$. A prominent line of research in AMD is exploring the complex interplay between three natural desiderata: (1) computational efficiency; (2) truthfulness, i.e., incentivizing bidders to reveal their actual valuations to the mechanism; and (3) approximation guarantees.

Past studies expose inherent tensions between these desiderata and establish large gaps between truthful and non-truthful mechanisms in various environments, including combinatorial auctions~\cite{dobzinski2011impossibility,dobzinski2012computational,papadimitriou2008hardness}---the paradigmatic setting of AMD. Yet, despite much effort along these lines, long-standing questions remain wide open, including the ``holy grail''~\cite{DobzinskiNS12}: designing a \emph{deterministic} truthful mechanism for general combinatorial auctions that matches the approximation guarantee of the best non-truthful algorithm (or prove that no such mechanism exists). Beyond the contribution of research along these lines to AMD, it also yielded ideas and insights that are of broader interest. One example is the interesting connection between the VC dimension and optimization over partial domains, presented in~\cite{papadimitriou2008hardness}. Papadimitriou et al.~\cite{papadimitriou2008hardness} showed how classical VC machinery (namely, the Sauer-Shelah Lemma~\cite{sauer1972density,shelah1972combinatorial}) can be used to prove inapproxiability results for deterministic truthful mechanisms in the context of combinatorial public projects. Subsequent work further developed this idea in the combinatorial public projects setting~\cite{papadimitriou2008hardness,buchfuhrer2010computation} and in the combinatorial auctions setting~\cite{dughmi2009amplified,buchfuhrer2009limits, buchfuhrer2010inapproximability,mossel2009vc}.

%To (informally) illustrate the connection between the VC dimension and complexity, consider the following decision problem~\cite{papadimitriou2008hardness}, called SAT$_L$: the input is a boolean formula $\phi$ on $n$ variables $x_1,\ldots,x_n$, and the goal is to determine whether $\phi$ has a satisfying assignment in a predetermined set of $L\subseteq \{0,1\}^n$ ($L$ is invariant across inputs). Intuitively, if $L$ is ``large'', i.e., exponential in $n$, this is intractabe. To prove this, \cite{papadimitriou2008hardness} regards the binary vectors in $L$ as specifying subsets of the universe $\{1,\ldots,n\}$ and applies existing VC machinery (namely, the Sauer-Shelah Lemma) to show that for a ``large'' subset of the variables all possible assignments appear in $L$. Hence, (conventional) SAT on a smaller (but polynomially related) number of variables can effectively be ``embedded'' in SAT$_L$, thus establishing that SAT$_L$ is computationally hard.

While combinatorial public projects%, similarly to SAT$_L$, 
naturally lend themselves to standard VC lower bounding techniques, other environments pose a more a complex challenge. \cite{dughmi2009amplified,buchfuhrer2009limits, buchfuhrer2010inapproximability,mossel2009vc} showed that sophisticated adaptations of existing VC machinery, namely the Sauer-Shelah Lemma, can yield inapproximability results for combinatorial auctions in a specific setting: VCG-based mechanisms for ``capped-additive valuations''. However, as combinatorial auctions deal with \emph{partitions} of a universe (as opposed to subsets), it is not clear how standard VC machinery can, in general, be applied to other auction contexts (general truthful mechanisms, other types of valuations, communication complexity,\ldots).

We present a new generalization of the VC dimenstion that is both natural from a combinatorial perspective, and encompasses the classical VC dimension, as well as past generalizations of the VC dimension: the Natarajan dimension~\cite{Natarajan89b} and the Steele dimension~\cite{steele1978existence}.  While many previously proposed generalizations of the VC dimension~\cite{sauer1972density,shelah1972combinatorial,Natarajan89b,haussler1995generalization,Ben-DavidCeHaLo95,AlonBeCsHa97,DanielySaBeSh11,DanielySh14} are motivated by machine learning applications---proving positive results for classification---our generalization is aimed at establishing negative results in AMD, and is thus of a very different flavour. We prove a corresponding generalization of the Sauer-Shelah Lemma that generalizes several known bounds~\cite{sauer1972density,shelah1972combinatorial, Natarajan89b,haussler1995generalization, steele1978existence} and apply it to obtain new inapproximability results for deterministic truthful mechanisms and also for simplifying and unifying existing inapproximability results.

\subsection{Generalizing the VC dimension and Implications for Auctions}\label{subsec:illustration}

Recall the classic Vapnik-Chervonenkis dimension \cite{Vapnik95}.

\begin{definition}
[\bf VC-dimension] Let $\mathcal{H}\subset \{0,1\}^X$. A subset $S\subset X$ is \textbf{shattered} by $\mathcal{H}$ if $\mathcal{H}|_S=\{0,1\}^S$. The \textbf{VC-dimension} of $\ch$, denoted $VC(\mathcal{H})$, is the maximal cardinality
of a subset $S\subset X$ that is shattered by $\mathcal{H}$.
\end{definition}

The prominence of the $VC$ dimension is largely due to the Sauer-Shelah Lemma \cite{sauer1972density,shelah1972combinatorial}:

\begin{lemma}
[\bf Sauer-Shelah] For every $\mathcal{H}\subset \{0,1\}^X$,
$|\ch |\le\sum_{i=1}^{VC(\ch)}\binom{|X|}{i}$
\end{lemma}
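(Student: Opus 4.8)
The plan is to establish the standard form of the bound, $|\mathcal{H}|\le\sum_{i=0}^{VC(\ch)}\binom{|X|}{i}$, by induction on $n:=|X|$, writing $d:=VC(\ch)$ throughout. The base cases are routine: if $n=0$ then $\{0,1\}^X$ is a singleton, and if $d=0$ then no coordinate is shattered, so all functions in $\ch$ agree on every coordinate and again $|\ch|\le 1=\sum_{i=0}^{d}\binom{n}{i}$. So I would assume $n\ge 1$ and $d\ge 1$.

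For the inductive step, fix any element $x\in X$ and set $Y:=X\setminus\{x\}$. I would pass from $\ch$ to two families on the smaller ground set $Y$: the \emph{trace} $\ch':=\ch|_Y=\{h|_Y:h\in\ch\}$, and the \emph{doubled part} $\ch''$, consisting of those $g\in\{0,1\}^Y$ whose two extensions to $X$ (the one sending $x$ to $0$ and the one sending $x$ to $1$) both belong to $\ch$. The first key step is the counting identity $|\ch|=|\ch'|+|\ch''|$: partitioning $\ch$ into the fibers of the restriction map $h\mapsto h|_Y$, every fiber has size $1$ or $2$, and it has size $2$ precisely over the elements of $\ch''\subseteq\ch'$, so $|\ch|=|\ch'|+|\ch''|$.

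The crux of the argument is bounding the VC dimensions of the two pieces. That $VC(\ch')\le d$ is immediate, since any subset of $Y$ shattered by the trace is already shattered by $\ch$. The delicate inequality is $VC(\ch'')\le d-1$: I would show that if $S\subseteq Y$ is shattered by $\ch''$, then $S\cup\{x\}$ is shattered by $\ch$. Indeed, given any pattern on $S\cup\{x\}$, its restriction to $S$ is realized by some $g\in\ch''$, and by definition both extensions of $g$ lie in $\ch$, one of which supplies the prescribed value on the coordinate $x$; hence $|S|+1\le d$. Unwinding the definition of shattering for $\ch''$ carefully is the one place where care is needed, though no new idea is required.

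Finally, I would apply the induction hypothesis to $\ch'$ and $\ch''$, which both live over the $(n-1)$-element set $Y$, obtaining $|\ch'|\le\sum_{i=0}^{d}\binom{n-1}{i}$ and $|\ch''|\le\sum_{i=0}^{d-1}\binom{n-1}{i}$. Adding these and applying Pascal's rule termwise,
\[
|\ch|\;\le\;\sum_{i=0}^{d}\binom{n-1}{i}+\sum_{i=0}^{d-1}\binom{n-1}{i}\;=\;\binom{n-1}{0}+\sum_{i=1}^{d}\left(\binom{n-1}{i}+\binom{n-1}{i-1}\right)\;=\;\sum_{i=0}^{d}\binom{n}{i},
\]
which completes the induction. (An alternative route I could take is the down-compression argument: repeatedly ``push'' the members of $\ch$ toward $0$ in each coordinate; this preserves $|\ch|$ and does not raise the VC dimension, and the resulting down-closed family satisfies the bound trivially, since a down-closed family shatters each of its own members and therefore has all members of size at most $d$.)
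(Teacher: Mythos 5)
Your proof is correct and is essentially the specialization to $k=|Y|=2$ of the recursion the paper uses to prove its generalization (Theorem~\ref{thm:sauer_gen}, via Lemma~\ref{lem:recursive}): fix a coordinate, split $\ch$ into its trace and its doubled part, bound $VC$ of each piece by $d$ and $d-1$ respectively, and close the induction with Pascal's rule. (The paper itself only cites Sauer--Shelah rather than proving it; note also that the summation in the statement should start at $i=0$, as in your argument, since otherwise the bound already fails for a single function and $VC(\ch)=0$.)
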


Our generalization of the VC dimension relies on the notion of ``shattering to $k$ values'':

\begin{definition}[\bf $k$-shattering]
Let $\ch\subset Y^X$ and $k\ge 2$. We say that $A\subset X$ is {\bf $k$-shattered} by $\ch$, if for every $a\in A$ there is a set $Y_a\subset Y$ of size $k$ such that the following holds: Every function $f:A\rightarrow Y$ that satisfies $f(a)\in Y_a$ for all $a\in A$ is a restriction of some function in $\ch$. That is, $\exists h\in \ch$ s.t. $f=h|_A$.
\end{definition}

Putting it differently, for every choice of elements $\{y_a\}_{a\in A}$ such that $y_a\in Y_a,$ there is some $h\in \ch$ such that $h(a)=y_a$ for all $a\in A$.

\begin{definition}[\bf $k$-dimension]
We define the {\bf $k$-dimension} of $\ch$, denoted $\Dim_k(\ch)$, as the maximal cardinality of a $k$-shattered set.
\end{definition}

This notion of (k-)shattering differs from past generalizations of shattering to non-binary domains (see, e.g., \cite{buchfuhrer2009limits,dughmi2009amplified}). We present the following generalization of the Sauer-Shelah Lemma:

\begin{theorem}\label{thm:sauer_gen}
For every $\ch\subset Y^X$ and for every $2\le k\le |Y|$,
\[
|\ch|\le \sum_{i=0}^{\Dim_k(\ch)} \binom{|X|}{i} (k-1)^{|X|-i}\binom{|Y|}{k}^{i}\le |X|^{\Dim_k(\ch)}|Y|^{k\Dim_k(\ch)}(k-1)^{|X|}
\]
\end{theorem}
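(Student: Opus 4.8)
The plan is to establish the first (main) inequality by induction on $|X|$, with the second inequality then following from elementary termwise estimates. Throughout set $n=|X|$, $q=\binom{|Y|}{k}$, $d=\Dim_k(\ch)$, and write $\Phi(n,d):=\sum_{i=0}^{d}\binom{n}{i}(k-1)^{n-i}q^{i}$ (with the convention $\Phi(n,-1):=0$); we may assume $\ch\neq\emptyset$. The only two facts about $\Phi$ we need are that it is nondecreasing in $d$ and that it satisfies the Pascal-type recursion $\Phi(n,d)=(k-1)\,\Phi(n-1,d)+q\,\Phi(n-1,d-1)$, both of which are routine to verify. We also use two monotonicity facts about the dimension: restricting the ground set cannot increase $\Dim_k$ (a function realizing an admissible pattern on $A\subseteq X'\subseteq X$ inside $\ch|_{X'}$ lifts to one in $\ch$), and the ``dimension-drop'' lemma stated below.

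For the inductive step, fix $x\in X$ and, for each $g$ in the restriction $\ch|_{X\setminus\{x\}}$, let $N(g)=\{y\in Y:\ g\cup(x\mapsto y)\in\ch\}$, so that $|\ch|=\sum_{g}|N(g)|$. Partition $\ch|_{X\setminus\{x\}}$ into $\mathcal{G}_{\mathrm{small}}=\{g:|N(g)|\le k-1\}$ and $\mathcal{G}_{\mathrm{big}}=\{g:|N(g)|\ge k\}$. Writing $|N(g)|=(k-1)+(|N(g)|-(k-1))$ on $\mathcal{G}_{\mathrm{big}}$ and $|N(g)|\le k-1$ on $\mathcal{G}_{\mathrm{small}}$ yields
\[
|\ch|\ \le\ (k-1)\,\bigl|\ch|_{X\setminus\{x\}}\bigr|\ +\ \sum_{g\in\mathcal{G}_{\mathrm{big}}}\bigl(|N(g)|-(k-1)\bigr).
\]
The first term is at most $(k-1)\,\Phi(n-1,d)$ by the induction hypothesis applied to $\ch|_{X\setminus\{x\}}$ together with the two monotonicity facts above. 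For the second term, use the elementary inequality $m-(k-1)\le\binom{m}{k}$ (valid for all $m\ge k\ge 1$): thus $|N(g)|-(k-1)$ is bounded by the number of $k$-subsets $Z\subseteq N(g)$, and exchanging the order of summation gives
\[
\sum_{g\in\mathcal{G}_{\mathrm{big}}}\bigl(|N(g)|-(k-1)\bigr)\ \le\ \sum_{Z\in\binom{Y}{k}}\bigl|\{\,g:\ Z\subseteq N(g)\,\}\bigr|\ =\ \sum_{Z\in\binom{Y}{k}}|\ch^{Z}|,
\]
where $\ch^{Z}:=\{\,g\in Y^{X\setminus\{x\}}:\ g\cup(x\mapsto z)\in\ch\ \text{for all }z\in Z\,\}$.

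The heart of the argument is the dimension-drop lemma: for every $k$-subset $Z\subseteq Y$ one has $\Dim_k(\ch^{Z})\le\Dim_k(\ch)-1$ (and $\ch^{Z}=\emptyset$ if $\Dim_k(\ch)=0$, in which case $\mathcal{G}_{\mathrm{big}}$ is empty). Indeed, if $S\subseteq X\setminus\{x\}$ is $k$-shattered by $\ch^{Z}$ via sets $\{Y_s\}_{s\in S}$, then $S\cup\{x\}$ is $k$-shattered by $\ch$ via $\{Y_s\}_{s\in S}$ together with $Y_x:=Z$: any admissible assignment on $S$ is realized by some $g\in\ch^{Z}$, and appending a chosen value $z\in Z=Y_x$ at $x$ produces $g\cup(x\mapsto z)\in\ch$. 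Granting the lemma, the induction hypothesis for each $\ch^{Z}$ over the ground set $X\setminus\{x\}$ gives $|\ch^{Z}|\le\Phi\bigl(n-1,\Dim_k(\ch^{Z})\bigr)\le\Phi(n-1,d-1)$, hence $\sum_{Z}|\ch^{Z}|\le q\,\Phi(n-1,d-1)$. Combining the two term bounds with the recursion for $\Phi$ gives $|\ch|\le(k-1)\Phi(n-1,d)+q\Phi(n-1,d-1)=\Phi(n,d)$, and the base case $n=0$ is immediate since then $|\ch|\le 1=\Phi(0,d)$. The second inequality follows from the first by the termwise bounds $\binom{|X|}{i}\le|X|^{i}$, $(k-1)^{|X|-i}\le(k-1)^{|X|}$, $\binom{|Y|}{k}^{i}\le|Y|^{ki}$ and summing the resulting (essentially geometric) series. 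I expect the dimension-drop lemma—and, more subtly, the realization that the ``excess'' $|N(g)|-(k-1)$ must be re-expressed as a count of $k$-subsets $Z\subseteq N(g)$ rather than bounded crudely (e.g.\ by $|Y|$, which would spoil the exponent $\binom{|Y|}{k}^{i}$)—to be the main conceptual obstacle; the rest is bookkeeping.
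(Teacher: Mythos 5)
Your proof is correct and takes essentially the same route as the paper's: both arguments remove a single ground-set element, split the restricted class according to how many values are realizable at that element, and establish a Pascal-type recursion $\Phi(n,d)\le(k-1)\Phi(n-1,d)+\binom{|Y|}{k}\Phi(n-1,d-1)$ via the same dimension-drop lemma (the paper states it as Claim~2, via $\Dim_k(\ch_{Y,1}|_{[m-1]})\le d-1$, which is exactly your $\Dim_k(\ch^{Z})\le d-1$). The only cosmetic difference is in the bookkeeping: the paper singles out one representative per $k$-subset of realizable values (the $\ch_{Y,2}$'s with $f(m)=\max Y$) and bounds $|\ch'|$ and $\sum_Y|\ch_{Y,2}|$ separately, whereas you double-count via $|N(g)|\le(k-1)+\binom{|N(g)|}{k}$ and swap the order of summation to arrive at the same quantity $\sum_{Z}|\ch^{Z}|$.
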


We point out that (1) taking $k=|Y|=2$, we get the $VC$ dimension and the Sauer-Shelah Lemma~\cite{sauer1972density,shelah1972combinatorial}; (2) for general $Y$ and $k=2$ we get Natarajan's dimension \cite{Natarajan89b} and the bound in \cite{haussler1995generalization}, which strengthen Natarajan's bound~\cite{Natarajan89b}; and (3) for $k=|Y|$, we get Steele's dimension and bound~\cite{steele1978existence}. Our proof of Theorem \ref{thm:sauer_gen} (the generalized Sauer-Shelah Lemma) relies on a careful application of techniques used for proving previous bounds (e.g., \cite{steele1978existence,Natarajan89b,haussler1995generalization}) and appears in Appendix~\ref{apx_sauer_gen}.

We now provide an intuitive exposition of the connection between our dimension/bound and combinatorial auctions. We focus, for ease of exposition, on the class of VCG-based, a.k.a. maximal-in-range (MIR) mechanisms, and on bidders with particularly simple ``single-minded'' valuations: Each bidder $i$ is only interested in a single bidder-specific bundle of items $T_i$, and assigns a value of $1$ to all sets of items containing $T_i$ and a value of $0$ to all other bundles of items. A maximal-in-range mechanism has a fixed, predetermined, ``bank of allocations'' of the items to the bidders, and for every input (reported valuations) outputs the best allocation in this bank. A trivial MIR mechanism is the mechanism that always allocates all items to the bidder who values them the most, i.e., the MIR mechanism whose bank of allocations consists of the $n$ partitions of the items of the form $(\emptyset,\ldots,\emptyset,[m],\emptyset,\ldots,\emptyset)$. A simple argument shows that the approximation ratio of this mechanism is $\min\{n,m\}$. We prove that this naive mechanism is essentially the best VCG-based (MIR) mechanism: no VCG-based mechanism has approximation ratio $m^{1-\epsilon}$ for any constant $\epsilon>0$ unless NP$\subseteq$P/poly.

Suppose, for point of contradiction, that there exists a computationally-efficient MIR mechanism $M$ with bank of allocations $\ch$ that has approximation ratio $m^{1-\epsilon}$. Observe that the allocations in $\ch$ can naturally be regarded as a collection of functions from $[m]$ to $[n]\cup\{*\}$; every allocation $A\in\ch$ is associated with function $f_A$ that maps every item in $[m]$ to a bidder in $[n]$, or leaves the item unallocated (mapping it to $*$), as in $A$. We now prove, via a subtle argument that utilizes Theorem~\ref{thm:sauer_gen}, that there exists a large (polynomial in $m$) set of bidders $X$ and a large (polynomial in $m$) set of items $Y$, such that all partitions of the items in $Y$ between bidders in $X$ appear in $\ch$ and, in this sense, $\ch$ shatters the pair $(X,Y)$. We can now conclude that, as $M$ optimizes exactly over the allocations in $\ch$, $M$ can compute the optimal allocation of $|X|$ items to $|Y|$ single-minded bidders---an NP-hard task (this is, in fact, precisely the classical SET-PACKING problem).

While simple, this example illustrates the strength of our VC machinery with respect to past utilizations of VC dimension arguments in the context of combinatorial auctions: (1) $\ch$ shatters a polynomial number of bidders and a polynomial number of items; and (2) every partition in which \emph{all} items in $X$ are assigned to bidders in $Y$ appears in $\ch$. Observe that shattering a constant number of bidders (e.g., $2$ bidders in~\cite{buchfuhrer2010inapproximability}) is insufficient, as combinatorial auctions with single-minded bidders are, in fact, tractable for a constant number of bidders. Also, if even a single allocation of all items in $X$ to bidders in $Y$ does not appear in $\ch$, the reduction from a combinatorial auction with $|X|$ items and $|Y|$ single-minded bidders is no longer possible (as the optimal allocation might simply not be in $M$'s bank).

\subsection{Inapproximability Results}

Our inapproximability results for combinatorial auctions are summarized in Table~\ref{table}. Our results are categorized in three dimensions:

\begin{table}[h]
\caption{Inapproximability results for truthful mechanisms}\label{table}
\begin{tabular}{|l|ll|ll|ll|ll|}
\hline
\textbf{}         & \multicolumn{2}{c|}{\textbf{Comp.}}                     & \multicolumn{2}{c|}{\textbf{Value}}                     & \multicolumn{2}{c|}{\textbf{Demand}}      & \multicolumn{2}{c|}{\textbf{Comm.}}       \\
\textbf{}         & \textbf{All}               & \textbf{VCG}               & \textbf{All}               & \textbf{VCG}               & \textbf{All} & \textbf{VCG}               & \textbf{All} & \textbf{VCG}               \\ \hline
\textbf{General}  & $m^{1-\epsilon}$           & $m^{1-\epsilon}$           & -                          & -                          & ?            & $m^{1-\epsilon}$           & ?            & $m^{1-\epsilon}$           \\
                  & {[}new{]}                  & {[}new{]}                  &                            &                            &              & {[}new{]}                  &              & {[}new{]}                  \\ \hline
\textbf{Single-}  & -                          & $m^{1-\epsilon}$           & -                          & -                          & -            & -                          & -            & -                          \\
\textbf{Minded}   &                            & {[}new{]}                  &                            &                            &              &                            &              &                            \\ \hline
\textbf{Submodular}     & $m^{\frac{1}{2}-\epsilon}$ & $m^{\frac{1}{2}-\epsilon}$ & $m^{\frac{1}{2}-\epsilon}$ & $m^{\frac{1}{2}-\epsilon}$ & ?            & $m^{\frac{1}{3}-\epsilon}$ & ?            & $m^{\frac{1}{3}-\epsilon}$ \\
                  & {[}revisit \citealp{dobzinski2012computational}{]}              & {[}rev. \citealp{dobzinski2012computational,buchfuhrer2010inapproximability}{]}              & {[}rev. \citealp{dobzinski2011impossibility}{]}               & {[}rev. \citealp{dobzinski2011impossibility}{]}               &              & {[}new{]}                  &              & {[}new{]}                  \\ \hline
\textbf{Capped-}  & ?                          & $m^{\frac{1}{2}-\epsilon}$ & -                          & -                          & -            & -                          & -            & -                          \\
\textbf{Additive} &                            & {[}rev. \citealp{buchfuhrer2010inapproximability}{]}            &                            &                            &              &                            &              &                            \\ \hline
\textbf{k-Dup}    & ?                          & $m^{1-\epsilon}$           & -                          & -                          & ?            & $m^{1-\epsilon}$           & ?            & $m^{1-\epsilon}$           \\
                  &                            & {[}new{]}                  &                            &                            &              & {[}new{]}                  &              & {[}new{]}                  \\ \hline
\end{tabular}
\\
{\em The table only includes lower bounds that rely on the truthfulness of the mechanisms. All results in the table are asymptotically tight, except the $m^{\frac{1}{3}-\epsilon}$ bound for VCG-based mechanisms w.r.t. submodular valuations. Question marks (``?'') indicate that there are no separation results between truthful and non-truthful mechanisms. Minus signs (``-'') indicate that inapproximability results for truthful mechanisms are irrelevant, either because of irrelevance of the model to the class of valuations (e.g., communication complexity when valuations are succinctly described), or because lower bounds for non-truthful algorithms match the existing upper bounds for truthful mechanisms.}
\end{table}

\begin{itemize}

\item {\bf Representation of the ``input''.} We consider the two standard models for accessing the ``input''---the valuation functions---in combinatorial auctions: (1) the computational complexity model, in which valuation functions are succinctly encoded, and mechanisms must run in time that is polynomial in the input length; and (2) the oracle model, in which valuations are treated as black boxes that can only answer a certain type of queries, and complexity is measured in terms of the number of queries. Three types of queries are commonly considered: (i) value queries; (ii) demand queries; and (iii) the communication complexity model, in which oracles can answer any type of query (addressed to a single valuation).

%Three types of queries are commonly considered: (i) a \emph{value query} to valuation $v_i$ is a subset %$S\subseteq [m]$ and the reply is the value $v_i(S)$; (ii) a \emph{demand query} to $v_i$ is a vector $p = %(p_1, \ldots p_m)$ of ``item prices'', and the reply is bidder $i$'s ``demand'' at these prices, that is, a %subset $S$ that maximizes the expression $v_i(S) − \Sigma_{j\in S} p_j$; and (iii) the \emph{communication %complexity} model, in which provide a unifying framework for essentially all existing inapproximability %results for combinatorial auctions to dateoracles can answer any kind of query (addressed to a single %valuation) and the complexity of the mechanism is measured in terms of the number of bits transmitted.

\item {\bf General deterministic mechanisms vs. VCG-based mechanisms.} The Vickrey-Clarke-Groves (VCG) scheme for designing truthful mechanisms is the only universal technique for designing truthful deterministic mechanisms. While a naive application of VCG is often computationally intractable, more clever uses of the VCG scheme provide the best deterministic truthful approximation algorithms for combinatorial auctions to date~\cite{dobzinski2005approximation,Holzman2004104}. We present inapproximability results for both general (unrestricted) deterministic mechanisms and for the important subcategory of VCG-based mechanisms

\item{\bf Classes of valuation functions.} Over the past decade, much effort has been invested in bounding the approximability guarantees of truthful mechanisms for different classes of valuation functions of interest, including general valuations~\cite{Holzman2004104}, submodular valuations~\cite{dobzinski2011impossibility,dobzinski2012computational}, single-minded valuations~\cite{lehmann2002truth}, capped additive valuations~\cite{buchfuhrer2010inapproximability}, and more. We present inapproximability results for several well-studied classes of valuation functions.
\end{itemize}

We shall now briefly highlight some of the new results in Table~\ref{table}:

\vspace{0.05in}\noindent{\bf Inapproximability for general deterministic mechanisms.} We prove that no computationally-efficient and truthful mechanism for general valuations can (asymptotically) outperform the trivial $m$-approximation mechanism that bundles all items together and assigns them to the bidder who values them most (in a 2nd-price auction). Specifically, for any choice of constant $\epsilon>0$, there exists a class of succinctly-described valuation functions such that (1) a non-truthful computationally-efficient algorithm achieves an approximation ratio of $m^\epsilon$ for this class ; but (2) no computationally-efficient and truthful algorithm (mechanism) can obtain an $O(m^{1-\epsilon})$-approximation unless NP $\subseteq P/poly$. Our proof of this separation gap between truthful and non-truthful algorithms relies on the ``direct hardness approach''~\cite{dobzinski2011impossibility} combined with a careful application of the Sauer-Shelah Lemma.

\vspace{0.05in}\noindent{\bf Results for VCG-based mechanisms.} We present several new inapproximability results for VCG-based mechanisms. Our proofs of these results rely on lower bounding the Steele dimension (yet another special case of our generalization). Our results establish, in particular (1) that the best deterministic approximation ratio for general valuations is $\frac{m}{\sqrt{\log(m)}}$, obtained via a simple VCG mechanism~\cite{Holzman2004104}, is essentially tight; (2) a $m^{\frac{1}{3}}$ lower bound for submodular valuations in the communication complexity model, improving upon the $m^{\frac{1}{6}}$-inapproximability result of Dobzinski and Nisan~\cite{dobzinski2007limitations}, obtained via a short and elegant proof that utilizes the Natarajan dimension (another special case of our generalized VC dimension); (3) that no VCG-based mechanism can match the approximation guarantees of truthful non-VCG mechanisms in combinatorial auctions with single-minded bidders~\cite{lehmann2002truth} and in combinatorial auctions with multiple duplicates of each item~\cite{bartal2003incentive}.\vspace{0.05in}

\noindent{\bf Unifying all inapproximability results for deterministic mechanisms.} We show that essentially all inapproximability results for deterministic truthful mechanisms in the combinatorial auctions setting (to date) can be proved in our generalized VC dimension framework. Inapproximability results that fit in our framework include the inapproximability results for (1) submodular valuations in the value queries model~\cite{dobzinski2011impossibility}; (2) submodular valuations in the computational complexity model~\cite{dobzinski2012computational}; and (3) capped additive valuations~\cite{buchfuhrer2010inapproximability}. We believe that, in this sense, our new approach marks the borderline of the state of the art and that further progress along these lines must thus entail inherently new ideas.

\subsection{Organization}

We discuss the connections between Theorem~\ref{thm:sauer_gen} and partitions in Section~\ref{sec:partitions}. We then explain how our techniques can be applied to prove inapproximability results for VCG-based and unrestricted truthful deterministic mechanisms in Section~\ref{sec:vcg-techniques} and Section~\ref{sec:general-techniques}, resprectively. See Appendix for the proof of Theorem~\ref{thm:sauer_gen}, background on mechanism design and combinatorial auctions (Section~\ref{sec:background}) and a detailed exposition of our inapproximability results for truthful mechanisms (Sections \ref{sec:results-first}-\ref{sec:results-last}). 

\section{Shattering vs. Approximation}\label{sec:partitions}

We now discuss the connections between Theorem \ref{thm:sauer_gen} (the generalized Sauer-Shelah Lemma) and so called ``approximate-partitions''. We consider collections $\ch$ of allocations of an item set $X$ to a set of indices $Y$. We will show that if $\ch$ (even very loosely) in some sense approximates the collection of all partitions, then there are large subsets $S\subset X$ and $A\subset Y$ for which {\em all} the partitions of items in $S$ to indices in $A$ belong to $\ch$.

We first define the notions of shattering by collections of allocations (and in particular we accommodate scenarios that involve duplicate elements). Then, we define two approximation notions for collections of allocations, and prove shattering results.

\subsection{Shattering Allocations}

We extend the notion of shattering to allocations, which are more natural than functions in the context of combinatorial auctions.
Let $X$ be a set of items and $Y$ a set of indices. An {\em allocation} is a pairwise disjoint collection $\{S_y\}_{y\in Y}$ of subsets of $X$. If $\cup_{y\in Y}S_y=X$, we say that the allocation is a {\em partition}.
We denote the collection of all allocations of $X$ to the indices $Y$ by $P(X,Y)$.
The collection of allocations naturally corresponds to the collection of functions $f:X\to Y\cup \{*\}$ (here, $f^{-1}(*)$ is the set of items that were not allocated to any index). We will freely alternate between these two representations. We say that a collection $\ch$ of allocations {\em shatters} a pair of sets $S\subset X$ and $A\subset Y$ if all partitions of $S$ to indices in $A$ are induced by allocations from $\ch$. Namely,
\begin{definition}
Let $\ch\subset \left(Y\cup\{*\}\right)^{X}$ be a collection of allocations. We say that a pair of subsets $S\subset X$ and $A\subset Y$ is {\em shattered} if $A^{S}\subset \ch|_{S}$.
\end{definition}
We wish to accommodate scenarios in which each item have $d$ (identical) copies. To this end we define:
\begin{definition}
Let $d\ge 1$. A \textbf{d-duplicate allocation} of a set $X$ to indices $Y$ is a collection $\{S_y\}_{y\in Y}$ of subsets of $X$ such that every $x\in X$ belongs to $\le d$ subsets from $\{S_y\}_{y\in Y}$.
\end{definition}
Note that $1$-duplicate allocation is just a simple allocation. We denote by $P_d(X,Y)$ the set of all $d$-duplicate allocations. As with standard allocations, we say that a collection $\ch$ of $d$-duplicate allocations {\em shatters} a pair of sets $S\subset X$ and $A\subset Y$ if every partition of $S$ to the indices $A$ is induced by $\ch$. Namely, we let $\mathcal{H}|_{S,A}:=\{f:S\to A|\exists \{S_y\}_{y\in Y}\in \mathcal{H} \ \forall y\in A,\;f^{-1}(y)=S_y\cap S\}$,
and define:
\begin{definition} \textbf{(Shattering of allocations)}
Let $\ch$ be a collection of $d$-duplicate allocations of a set $X$ to indices $Y$.
A pair of subsets $S\subset X$, $A\subset Y$ is {\bf shattered} by $\mathcal{H}$, if $A^{S}=\mathcal{H}|_{S,A}$.
\end{definition}
We will also use the notion of shattering of a single index.
\begin{definition}
Let $\ch$ be a collection of $d$-duplicate allocations of a set $X$ to indices $Y$.
A pair $S\subset X$, $a\subset Y$ is {\bf shattered} by $\mathcal{H}$, if for every $T\subset S$ there is $f\in\mathcal{H}|_{S,Y}$ with $T=f^{-1}(a)$.
\end{definition}
Note that if a pair $S\subset X,A\subset Y$ is shattered then, for every $a\in A$, the pair $S,a$ is shattered.

\subsection{Approximate Containment}
Let $X$ be a set of $m$ items and $Y$ a set of $n$ indices.
\begin{definition}
Let $\ch\subset P_d(X,Y)$ be a collection of allocations and let $\alpha\ge 1$. We say that $\ch$ has the {\bf $\alpha$-containment property} if for every allocation $\{T_y\}_{y\in Y}$ of $X$, there is an allocation $\{S_y\}_{y\in Y}\in \ch$ such that $\frac{1}{\alpha}$ percent of the non-empty sets in $\{T_y\}_{y\in Y}$ are covered by the corresponding set in $\{S_y\}_{y\in Y}$. Namely,
$|\{y\mid \emptyset \ne T_y\subset S_y\}|\ge \frac{1}{\alpha}|\{y\mid \emptyset \ne T_y\}|$
\end{definition}

\begin{theorem}\label{thm:shatter_single_minded_0}
Let  $\epsilon>0$, $\mathcal{H}\subset P_d(X,Y)$ and assume that $n\geq m^{1-3\epsilon/4}$
and $\mathcal{H}$ has the $m^{1-\epsilon}$-containment property. There exists a shattered pair $S\subset X$, $A\subset Y$ of sizes  $\tilde{\Omega}(m^{3\epsilon/4})$ and $m^{\epsilon/4}$.
\end{theorem}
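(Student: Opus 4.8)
\emph{Overall approach.} The idea is to turn the containment property into a statement about a large family of ordinary partitions, feed that family into the Steele special case ($k=|Y|$) of Theorem~\ref{thm:sauer_gen}, and read off the shattered pair. First I would restrict attention to a subset $Y'\subseteq Y$ of $n':=\lceil m^{1-3\epsilon/4}\rceil\le n$ indices: this is the ``right'' scale, since then $n'/m^{1-\epsilon}\ge m^{\epsilon/4}$ while the typical part size is $m/n'=\Theta(m^{3\epsilon/4})$. Let $\Pi$ be the set of all surjections $\pi:X\to Y'$, each regarded as the partition $\{\pi^{-1}(y)\}_{y\in Y'}$ of $X$ (padded by $\emptyset$ on $Y\setminus Y'$). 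Because $m/n'=\Theta(m^{3\epsilon/4})$ dwarfs $\log n'$, almost every function $X\to Y'$ is onto, so $|\Pi|\ge\tfrac12(n')^{m}$. For each $\pi\in\Pi$ the $m^{1-\epsilon}$-containment property yields an allocation $h_\pi=\{S^{\pi}_y\}_{y\in Y}\in\mathcal{H}$ and a set $A_\pi\subseteq Y'$ with $|A_\pi|\ge n'/m^{1-\epsilon}\ge m^{\epsilon/4}$ such that $\emptyset\ne\pi^{-1}(y)\subseteq S^{\pi}_y$ for every $y\in A_\pi$.

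\emph{A common index set and the link to shattering.} Double counting, $\sum_{A:\,|A|=m^{\epsilon/4}}|\{\pi\in\Pi:A\subseteq A_\pi\}|=\sum_{\pi}\binom{|A_\pi|}{m^{\epsilon/4}}\ge|\Pi|$, so dividing by the number $\binom{n'}{m^{\epsilon/4}}\le(n')^{m^{\epsilon/4}}$ of candidates gives a single $A^{\ast}\subseteq Y'$, $|A^{\ast}|=m^{\epsilon/4}$, with $\Pi^{\ast}:=\{\pi\in\Pi:A^{\ast}\subseteq A_\pi\}$ of size $|\Pi^{\ast}|\ge(n')^{m}/\bigl(2(n')^{m^{\epsilon/4}}\bigr)$. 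The point that links $\Pi^{\ast}$ back to $\mathcal{H}$ is: if $\pi\in\Pi^{\ast}$ and $S\subseteq X$ satisfies $\pi(S)\subseteq A^{\ast}$, then (in the case $d=1$) $S^{\pi}_y\cap S=(\pi|_S)^{-1}(y)$ for every $y\in A^{\ast}$, so $\pi|_S\in\mathcal{H}|_{S,A^{\ast}}$. Indeed, for $x\in S$ with $\pi(x)=y\in A^{\ast}\subseteq A_\pi$ we get $x\in\pi^{-1}(y)\subseteq S^{\pi}_y$, while disjointness of the parts of the $1$-duplicate allocation $h_\pi$ prevents $x$ from also lying in $S^{\pi}_{y'}$ for any other $y'\in A^{\ast}$.

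\emph{Applying the generalized Sauer--Shelah bound.} Now apply Theorem~\ref{thm:sauer_gen} to the function class $\Pi^{\ast}\subseteq(Y')^{X}$ with $k=|Y'|=n'$, the Steele regime in which $\binom{n'}{n'}=1$: with $D:=\Dim_{n'}(\Pi^{\ast})$,
\[
|\Pi^{\ast}|\ \le\ \sum_{i=0}^{D}\binom{m}{i}(n'-1)^{m-i}\ \le\ (D+1)\,m^{D}\,(n'-1)^{m}.
\]
Comparing with the lower bound on $|\Pi^{\ast}|$ yields $(D+1)m^{D}\ge\tfrac12(n')^{-m^{\epsilon/4}}\bigl(1+\tfrac{1}{n'-1}\bigr)^{m}\ge\tfrac12(n')^{-m^{\epsilon/4}}e^{m/n'}$; since $m/n'=\Theta(m^{3\epsilon/4})$ exceeds $m^{\epsilon/4}\log n'$ by a polynomial factor, taking logarithms gives $D\log m=\Omega(m^{3\epsilon/4})$, i.e.\ $D=\tilde\Omega(m^{3\epsilon/4})$. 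Let $S^{\ast}$ be a set of size $D$ that is $n'$-shattered (equivalently, fully shattered) by $\Pi^{\ast}$, so $\Pi^{\ast}|_{S^{\ast}}=(Y')^{S^{\ast}}\supseteq(A^{\ast})^{S^{\ast}}$. For any $f:S^{\ast}\to A^{\ast}$ choose $\pi\in\Pi^{\ast}$ with $\pi|_{S^{\ast}}=f$; then $\pi(S^{\ast})\subseteq A^{\ast}$, so by the previous paragraph $f=\pi|_{S^{\ast}}\in\mathcal{H}|_{S^{\ast},A^{\ast}}$. Hence $(A^{\ast})^{S^{\ast}}\subseteq\mathcal{H}|_{S^{\ast},A^{\ast}}$, i.e.\ $(S^{\ast},A^{\ast})$ is shattered by $\mathcal{H}$, with $|S^{\ast}|=\tilde\Omega(m^{3\epsilon/4})$ and $|A^{\ast}|=m^{\epsilon/4}$ (shrink $S^{\ast}$ to the exact target size if desired).

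\emph{The main obstacle.} The crux is the arithmetic in the last step: the pigeonhole wastes a factor $(n')^{m^{\epsilon/4}}$, and this must be swallowed by the multiplicative gap $\bigl(1+\tfrac1{n'-1}\bigr)^{m}=e^{\Theta(m/n')}$ between $(n')^{m}$ and $(n'-1)^{m}$ in Sauer--Shelah. This works precisely because the part size $m/n'=\Theta(m^{3\epsilon/4})$ beats $m^{\epsilon/4}\log n'$, which is exactly why one works with $\Theta(m^{1-3\epsilon/4})$ indices instead of all of $Y$ (were $n$ much larger, the parts would be too small and the gap too weak). A secondary, more technical, issue is that for $d\ge2$ the witness $h_\pi$ may place an item of $S$ into several parts indexed by $A^{\ast}$, so the identity $S^{\pi}_y\cap S=(\pi|_S)^{-1}(y)$ can fail; extending the argument to genuine duplicates needs an additional step (for instance passing to a subcollection of $\mathcal{H}$ whose relevant parts are disjoint over $S$), but the $d=1$ analysis already carries the essential idea.
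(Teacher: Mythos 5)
Your proof is correct (at the level of rigor of the paper) and reaches the same quantitative conclusion, but by a noticeably different route. The paper's proof is probabilistic and ``local'': it draws a uniformly random balanced partition $\{T_y\}$ and an independent random $k$-subset $A$, uses the containment property to argue that all of $A$ is covered with probability $\ge m^{-k}$, then \emph{conditions on} $S:=\cup_{y\in A}T_y$ to deduce $|\mathcal{H}_{S,A}|\ge k^{m^\epsilon}m^{-k}m^{-\epsilon k}$, and finally applies Theorem~\ref{thm:sauer_gen} in the Steele regime with $k=|A|=m^{\epsilon/4}$ to the restricted class $\mathcal{H}_{S,A}\subseteq A^{S}$. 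You instead keep the entire item set $X$, pigeonhole deterministically over all surjections $X\to Y'$ to fix $A^{\ast}$ alone, apply Steele ($k=n'$) to the auxiliary class $\Pi^{\ast}$ of surjections covering $A^{\ast}$, and then transfer the resulting fully-shattered $S^{\ast}$ back to $\mathcal{H}$ through the explicit observation that $\pi(S)\subseteq A^{\ast}$ forces $\pi|_S\in\mathcal{H}|_{S,A^{\ast}}$. The arithmetic works out to the same $\tilde\Omega(m^{3\epsilon/4})$ bound because both proofs ultimately exploit the same multiplicative gap $(n'/(n'-1))^{m}\approx e^{m/n'}$ between the Steele bound and the size of the ambient function class; your version makes the ``transfer'' claim explicit and avoids the conditioning-on-$S$ step and the counting of balanced functions, at the cost of carrying around the much larger class $\Pi^{\ast}\subseteq(Y')^{X}$.

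One point worth flagging: the $d\ge 2$ difficulty you raise in your final paragraph (the containment witness gives $T_y\subseteq S^{\pi}_y$ but the definition of $\mathcal{H}|_{S,A}$ demands the \emph{equality} $f^{-1}(y)=S^{\pi}_y\cap S$, which can fail when parts of $h_\pi$ overlap) is not specific to your argument --- the paper's proof makes exactly the same identification when it passes from ``$T_y\subseteq S_y$ for $y\in A$'' to ``$f\in\mathcal{H}_{S,A}$''. So your caveat is honest but applies equally to the original; your write-up is actually the only one of the two to acknowledge it.
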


\begin{proof}
For the sake of simplicity, let us restrict our attention to the case that the number of indices, $n$, is $m^{1-\frac{3\epsilon}{4}}$. We assume also that $m^{\frac{3\epsilon}{4}}=\frac{m}{n}$ and $k:=m^{\frac{\epsilon}{4}}$ are integers.

Consider a random partition, $\{T_y\}_{y\in Y}$, of $X$ that gives exactly $m^{\frac{3\epsilon}{4}}$ items to every index $y\in Y$.  Since $\ch$ has the $m^{1-\epsilon}$-containment property, at least $k$ sets in $\{T_y\}_{y\in Y}$ must be covered by some allocation in $\ch$. Namely, there is some $\{S_y\}_{y\in Y}\in\ch$ such that $T_y\subset S_y$ for at least $k$ indices in $Y$.

In addition to the random partition, consider now a subset $A\subset Y$ of $k$ indices, chosen uniformly at random and independently from $\{T_y\}_{y\in Y}$. Since $k$ indices must be covered (i.e. there is an allocation in $\ch$ such that the sets corresponding to these indices contain the sets sampled to these indices), the probability that all indices in $A$ (over the choice of both $A$ and $\{T_y\}_{y\in Y}$) are covered is $\ge \frac{1}{\binom{n}{k}}\ge \frac{1}{\binom{m}{k}}\ge m^{-k}$.
Hence, there exists a (fixed) set $A$ of $k$ indices which are covered w.p. $\ge m^{-k}$ when their corresponding sets are sampled at random (by the above distribution).

By conditioning on the set which is the union of the sets corresponding to the indices in $A$, it follows that there exists a set $S\subset X$ with $|S|=k\cdot m^{\frac{3\epsilon}{4}}=m^{\epsilon}$ such that if a function $f:S\to A$ is sampled uniformly at random among all functions with $\forall i,j\in A,\;\;|f^{-1}(i)|=|f^{-1}(j)|$, then w.p. $\ge m^{-k}$, it holds that $f\in \ch_{S,A}$.

It follows that $|\ch_{S,A}|\ge k^{m^{\epsilon}}\cdot m^{-k}\cdot m^{-\epsilon k}$ (the term $m^{-\epsilon k}$ is a lower bound on the probability that a uniformly chosen function from $S$ to $A$ will satisfy $\forall i,j\in A,\;\;|f^{-1}(i)|=|f^{-1}(j)|$).

By Theorem \ref{thm:sauer_gen}, be have that $|\ch_{S,A}|\le (k-1)^{m^{\epsilon}}m^{\epsilon \Dim_k(\ch_{S,A})}$. It follows that
\begin{eqnarray*}
m^{\epsilon \Dim_k(\ch_{S,A})} &\ge & \left(\frac{k}{k-1}\right)^{m^{\epsilon}}\cdot m^{-k}\cdot m^{-\epsilon k}
\\
&\ge &\left(\frac{k}{k-1}\right)^{m^{\epsilon}}\cdot m^{-2k}
\\
&=& e^{\ln\left(\frac{k}{k-1}\right)m^{\epsilon}-2k\ln(m)}
\\
&=& e^{\ln\left(1+\frac{1}{k-1}\right)m^{\epsilon}-2k\ln(m)}
\\
&\ge & e^{\frac{1}{2k} m^{\epsilon}-2k\ln(m)}
\end{eqnarray*}
The last inequality is correct for large $k$ since $\ln(1+x)=x+o(x)$. Taking logarithms we conclude that
$\Dim_k(\ch_{S,A})=\tilde{\Omega}(m^{\frac{3\epsilon}{4}})$.
\end{proof}

\subsection{Approximate Intersection}
Let $X$ be a set of $m$ items and $Y$ a set of $n$ indices.
\begin{definition}
Let $\ch\subset P_d(X,Y)$ be a collection of allocations and let $\alpha\ge 1$. We say that $\ch$ has the {\bf $\alpha$-intersection property} if for every allocation $\{T_y\}_{y\in Y}$ of $X$, there is an allocation $\{S_y\}_{y\in Y}\in \ch$ that agree with $\{T_y\}_{y\in Y}$ on $\frac{1}{\alpha}$ percent of the items. Namely,
\[
\sum_{y\in Y}|T_y\cap S_y|\ge \frac{1}{\alpha} \sum_{y\in Y}|T_y|
\]
\end{definition}
\begin{theorem}\label{thm:shatter_additive_0}
Suppose $n\ge m^{\frac{1}{k+1}-\epsilon}$ and $\ch\subset P(X,Y)$ has the $\frac{n}{1+2k}$-intersection property. There is a shattered pair $S\subset X,\;A\in\binom{Y}{k}$ with $|S|\ge \frac{m^{(k+1)\epsilon}}{\log_2(m)}$.
\end{theorem}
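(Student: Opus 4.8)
\emph{Overview and why I would avoid the obvious approach.} I would not mirror the proof of Theorem~\ref{thm:shatter_single_minded_0} directly. A random balanced partition of $X$ into $n$ parts of size $m/n$, followed by an averaging step over the local agreements $|T_y\cap S_y|$, does locate $k$ indices each agreed with (by some $\{S_y\}\in\ch$) on $\Omega(m/n^2)$ items; but, unlike the ``containment'' setting of Theorem~\ref{thm:shatter_single_minded_0} in which \emph{whole parts} are covered and the partition restricted to the agreed set is uniformly balanced, here the agreed set is an adversarially chosen $\Theta(1/n)$-fraction of each part, so there is no common domain on which to count realized functions, and a naive pigeonhole over the $\binom{m}{\Theta(m/n^2)}$ possible agreed sets loses more than it gains. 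Instead the plan is to (i) lower-bound $|\ch|$ by a covering argument, (ii) feed this into the generalized Sauer--Shelah Lemma (Theorem~\ref{thm:sauer_gen}) to obtain a large $k$-dimension, and (iii) read off the shattered pair.

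\emph{Step (i): $|\ch|$ is exponential in $m/n$.} It is convenient to first replace $\ch$ by the collection $\ch'$ of its ``completions'', obtained by dumping all unallocated items of each member into one fresh dummy index; $\ch'$ still has the $\tfrac{n}{1+2k}$-intersection property (agreement only increases) and consists of genuine partitions of $X$ into an $(n{+}1)$-element index set $Z$. Now fix $\{S_y\}\in\ch'$ and let $\{T_y\}$ be a uniformly random partition of $X$ into equal parts: it agrees with $\{S_y\}$ on $m/n$ items in expectation, and since $1+2k>e$ a Chernoff/entropy estimate bounds the probability that it agrees on the required $\tfrac{1+2k}{n}m$ items by $e^{-\Omega(m/n)}$. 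As $\ch'$ must serve \emph{every} partition, $|\ch'|\ge e^{\Omega(m/n)}$.

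\emph{Steps (ii)--(iii): Sauer--Shelah and pigeonholing.} Viewing $\ch'\subset Z^X$, Theorem~\ref{thm:sauer_gen} with parameter $2$ gives $|\ch'|\le m^{\Dim_2(\ch')}(n{+}1)^{2\Dim_2(\ch')}$ --- the ``$(k'-1)^{|X|}$'' factor equals $1$ only for $k'=2$, which is why only parameter $2$ is usable and where the eventual $1/\log_2 m$ loss originates --- whence $\Dim_2(\ch')=\tilde\Omega(m/n)$. Let $B\subset X$ be a $2$-shattered set of that many items, each carrying its value pair $Y_x\subset Z$; pigeonholing over the $\le(n{+}1)^2$ possible pairs, some fixed pair $\{y_1,y_2\}$ is attached to $\tilde\Omega(m/n^3)$ items $B'$. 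Extending any $B'\to\{y_1,y_2\}$ to a $Y_x$-respecting map on $B$ and using $2$-shattering shows that the pair $(B',\{y_1,y_2\})$ is shattered in the allocation sense (if $y_1$ or $y_2$ is the dummy index one instead gets a shattered pair with $|A|=1$, which is the relevant conclusion for $k=1$). For $k\le 2$, since $m/n^{3}\ge m^{(k+1)\epsilon}$ at $n=m^{1/(k+1)-\epsilon}$ and $\tilde\Omega(m^{(k+1)\epsilon})\ge m^{(k+1)\epsilon}/\log_2 m$, this already proves the theorem.

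\emph{General $k$, and the main obstacle.} For $k\ge 3$ the above yields only $|A|\le 2$, whereas the statement demands $|A|=k$; the missing ingredient is an iteration. Having produced a shattered pair $(S_j,A_j)$ with $|A_j|=j$, I would pass to a sub-collection of $\ch$ that is still of size $e^{\Omega(m/n)}$ and still carries a residual intersection-type property on an item set shrunk by a further factor $\Theta(1/n)$, re-run Steps (i)--(iii) on it to commit one more index, and repeat; $k-1$ rounds then give $|A|=k$ and $|S|=\tilde\Omega(m/n^{k+1})=\tilde\Omega(m^{(k+1)\epsilon})$, which is exactly where the exponent $(k+1)$ comes from. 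I expect this iteration --- isolating, after each round, the right large sub-collection that remains richly enough constrained to restart the counting argument (and carrying the dummy-index bookkeeping through all rounds) --- to be the main technical obstacle; by contrast the covering bound of Step (i) and the single invocation of Theorem~\ref{thm:sauer_gen} are routine.
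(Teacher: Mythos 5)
Your plan stalls at exactly the point you flag as the ``main obstacle,'' and the paper closes that gap with an ingredient you rule out by mistake. You assert that only parameter $k'=2$ is usable in Theorem~\ref{thm:sauer_gen} because the factor $(k'-1)^{|X|}$ must equal $1$. That is true if you insist on counting $\ch$ over the full item set $X$ of size $m$, where $(k-1)^m$ swamps your covering lower bound $e^{\Omega(m/n)}$. But the paper does not count over $X$. It first invokes Lemma~\ref{lem:buch_umans} (Lemma~5 of \citealp{buchfuhrer2009limits}): the $\tfrac{n}{1+2k}$-intersection property already forces a \emph{restricted} set $U\subset X$ of size $m/n$ with $|\ch_{U,Y}|\ge (k+1)^{m/n}$. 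Counting $\ch_{U,Y}$ over $U$ rather than $\ch$ over $X$, the $(k-1)^{|U|}=(k-1)^{m/n}$ factor no longer dominates, since $(k+1)^{m/n}/(k-1)^{m/n}=\left(\tfrac{k+1}{k-1}\right)^{m/n}$ is still exponential in $m/n$. So Theorem~\ref{thm:sauer_gen} can be applied with parameter $k$ directly on $U$, giving $\Dim_k(\ch_{U,Y})=\Omega\!\left(\tfrac{m/n}{\log m}\right)$ in one stroke. A single pigeonhole over the at most $\binom{n}{k}$ witness $k$-sets $Y_a$ then fixes one $A\in\binom{Y}{k}$ attached to $\tfrac{m/n^{k+1}}{\log_2 m}\ge \tfrac{m^{(k+1)\epsilon}}{\log_2 m}$ items, yielding the shattered pair. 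No iteration is needed, and no dummy index is ever introduced.

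Concretely, then, your proof as written has two genuine gaps. First, for $k\ge 3$ the argument is absent: you describe an iteration that should ``re-run Steps (i)--(iii) on a sub-collection that remains richly enough constrained,'' but you give no mechanism for preserving the intersection property (or any quantitative substitute) after committing an index, and in fact the straightforward attempts lose a factor that kills the $e^{\Omega(m/n)}$ bound. Second, even for $k\le 2$ the dummy-index bookkeeping is not benign: the theorem asks for $A\in\binom{Y}{k}$, and if your pigeonholed pair $\{y_1,y_2\}$ contains the dummy you get only a shattered singleton, which is not the claimed conclusion for $k=2$. The paper sidesteps both issues simultaneously: by working on the restricted domain $U$ of size $m/n$ and using parameter $k$ in the generalized Sauer--Shelah Lemma, it produces a genuine $k$-shattered set in the original index set $Y$ directly. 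The lemma you are missing is not incidental; it is the lever that makes the ``one-shot'' argument possible.
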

We will use the following lemma from \cite{buchfuhrer2009limits} (their lemma 5).

\begin{lemma}[\cite{buchfuhrer2009limits}]\label{lem:buch_umans}
Assume that  $\ch\subset P(X,Y)$ has the $\frac{n}{1+2k}$-intersection property. There exists a set $S\subset X$ of size $|S|\ge \frac{m}{n}$ such that $|\ch_{S,Y}|\ge (1+k)^{\frac{m}{n}}$
\end{lemma}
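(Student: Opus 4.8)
\section*{Proof proposal for Lemma~\ref{lem:buch_umans}}

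The plan is to reprove the lemma of \cite{buchfuhrer2009limits} by an averaging argument over a random partition, exactly in the spirit of the proof of Theorem~\ref{thm:shatter_single_minded_0} above, but now tracking \emph{agreement on items} rather than \emph{containment of nonempty parts}. First I would fix a random partition $\{T_y\}_{y\in Y}$ of $X$ into $n$ blocks of equal size $m/n$ (assume divisibility for simplicity, as is done throughout this section). By the $\frac{n}{1+2k}$-intersection property there is some $\{S_y\}_{y\in Y}\in\ch$ with $\sum_{y} |T_y\cap S_y| \ge \frac{1+2k}{n}\sum_y |T_y| = \frac{(1+2k)m}{n}$. Since the $S_y$ are pairwise disjoint, the sets $T_y\cap S_y$ partition a subset $S:=\bigcup_y (T_y\cap S_y)\subset X$ of size $\ge \frac{(1+2k)m}{n}$, and the allocation $\{S_y\}$ restricted to $S$ is exactly the function $f\in\ch_{S,Y}$ sending each element of $T_y\cap S_y$ to $y$. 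Thus every partition drawn from this distribution yields, via $\ch$, a point of $\ch_{S,Y}$ for the associated $S$; the difficulty is that $S$ itself is random, so I cannot yet speak of $|\ch_{S,Y}|$ for a fixed $S$.

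The key step is the same fix-the-support trick used in the proof above: condition on the random set $S$. There are at most $2^m$ possible values of $S$, so some particular set $S_0$ of size $s:=|S_0|\ge \frac{(1+2k)m}{n}$ arises with probability $\ge 2^{-m}$ over the choice of the random partition; moreover, conditioned on $S=S_0$, the induced map $f|_{S_0}$ is (close to) uniform over the functions $S_0\to Y$ with all fiber sizes equal to $m/n$ — more precisely, over those functions whose fibers each have size $m/n$ and whose support is contained in the conditioning block structure. The point is that the number of distinct functions $S_0\to Y$ that can occur this way is at least the number of ways to choose which $s$ of the $m$ coordinates of the partition land in $S_0$ with the prescribed block sizes, divided by $2^m$; a direct count shows this is at least $(1+k)^{s}$ once $s\ge \frac{m}{n}$, because each of the $\ge 1+2k$ ``agreeing'' blocks of size $m/n$ can independently be any of $\ge 1+k$ relevant indices — here I would be careful to extract the exact combinatorial estimate $|\ch_{S_0,Y}|\ge (1+k)^{m/n}$ as claimed, shrinking $S_0$ down to a sub-block of size exactly $\lceil m/n\rceil$ if needed so that the counting is clean.

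Concretely, to get the clean bound I would not keep all of $S_0$ but instead argue as follows: among the $\ge 1+2k$ indices $y$ on which $T_y\subset S_y$-agreement occurred, pick any one block $T_{y_0}$ of size $m/n$ that lies entirely in the agreement; repeating the averaging over random partitions, one block's worth of coordinates can be mapped, by varying which allocation of $\ch$ is used, to at least $1+k$ distinct index-values at each of its $m/n$ positions in a product fashion, which forces $|\ch_{S,Y}|\ge (1+k)^{m/n}$ on the set $S=T_{y_0}$ of size $m/n$. I expect the main obstacle to be exactly this last bookkeeping: making the ``product over the $m/n$ positions'' rigorous, i.e. showing that the restricted family realizes a full product set $\{y_1,\dots,y_{1+k}\}^{S}$ (or something of comparable cardinality) rather than merely $1+k$ functions total. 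The resolution is the intersection property applied not to a single random partition but to a family of $1+k$ carefully chosen partitions that differ only by relabeling one block's worth of indices, combined with the pairwise disjointness of the $S_y$'s to guarantee the realized functions are genuinely distinct; this is the technical heart of Lemma~5 in \cite{buchfuhrer2009limits}, and I would reproduce their argument for this step.
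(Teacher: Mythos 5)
The paper does not actually prove this lemma---it is imported verbatim from \cite{buchfuhrer2009limits} (their Lemma 5) and used as a black box in the proof of Theorem~\ref{thm:shatter_additive_0}---so there is no in-paper argument to compare yours against; your proposal has to stand on its own, and as written it does not. The decisive problem is that you never actually prove the statement: you correctly set up the averaging over random equipartitions and correctly identify that the whole difficulty is establishing the product structure (that the restricted family realizes on the order of $(1+k)^{m/n}$ genuinely \emph{distinct} functions, not merely $1+k$ of them), but at exactly that point you write that you ``would reproduce their argument for this step.'' That step is the entire content of the lemma; everything before it is routine. A proposal that defers the technical heart to the very reference being reproved is a restatement of the claim, not a proof.

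Two of your intermediate claims are also wrong or unsupported. First, you assert that one can ``pick any one block $T_{y_0}$ of size $m/n$ that lies entirely in the agreement.'' The $\frac{n}{1+2k}$-intersection property only guarantees $\sum_y|T_y\cap S_y|\ge(1+2k)\frac{m}{n}$ \emph{in total}; this agreement can be spread as a $\frac{1+2k}{n}$ fraction of every block, so no block need be contained in its corresponding $S_y$. You are importing the \emph{containment} property (single-minded valuations, Theorem~\ref{thm:shatter_single_minded_0}) into the \emph{intersection} setting, where it fails. Second, the assertion that ``each of the $\ge 1+2k$ agreeing blocks can independently be any of $\ge 1+k$ relevant indices'' is not an argument but a restatement of the desired conclusion, and the conditioning step preceding it (some $S_0$ occurs with probability $\ge 2^{-m}$) only shows that many \emph{partitions} map into $\ch_{S_0,Y}$, not that they yield many \emph{distinct} restrictions---distinct partitions can easily produce the same element of $\ch_{S_0,Y}$, and a naive count of partitions versus (allocation, agreement-set) pairs lower-bounds $|\ch|$ rather than $|\ch_{S,Y}|$ for a small $S$. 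Closing this gap requires the specific combinatorial construction of \cite{buchfuhrer2009limits} (a family of partitions differing only on one block, with a careful accounting of which induced restrictions must differ), which is precisely what is absent from your write-up.
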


\begin{proof}(of theorem \ref{thm:shatter_additive_0})
For simplicity, we assume that $n= m^{\frac{1}{k+1}-\epsilon}$ (otherwise, we will look on the restriction of $\ch$ to a fixed set of $m^{\frac{1}{k+1}-\epsilon}$ indices).
By lemma \ref{lem:buch_umans}, there is a subset $U$ of size $\ge\frac{m}{n}$ such that $|\ch_{U,Y}|\ge (k+1)^{\frac{m}{n}}$. Applying theorem \ref{thm:sauer_gen} we get that
\begin{equation*}
(k+1)^{\frac{m}{n}} \le
m^{\Dim_k(\ch)}n^{k\Dim_k(\ch)}(k-1)^{\frac{m}{n}} \le m^{2\Dim_k(\ch)}(k-1)^{\frac{m}{n}}
\end{equation*}
Taking logarithm we get a $k$-shattered set $T\subset U$ of size $\frac{\log_2\left(\frac{k+1}{k-1}\right)\frac{m}{n}}{2\log_2(m)}\ge\frac{\frac{m}{n}}{2(k-1)\log_2(m)}$. Let $\{Y_a\}_{a\in T}$ be a collection of subsets of $Y$ that indicates that $T$ is $k$-shattered. Since there are at most $\binom{n}{k}\le \frac{n^k}{(k-1)2}$ possible options for each $Y_a$, by the pigeonhole principle, there is a subset $S\subset T$ of size $\frac{\frac{m}{n^{k+1}}}{\log_2(m)}\ge \frac{\frac{m}{m^{1-(k+1)\epsilon}}}{\log_2(m)}\ge
\frac{m^{(k+1)\epsilon}}{\log_2(m)}$ such that all the subsets $\{Y_a\}_{a\in S}$ are the same and equal to some $A\in\binom{Y}{k}$. This shows that the pair $(S,A)$ is shattered.
\end{proof}

\section{Inapproximability for VCG-Based Mechanisms}\label{sec:vcg-techniques}

We now show how our VC dimension arguments can be applied lower bound the approximability of VCG-based mechanisms. Consider the classical combinatorial auction setting: $m$ items $1,\ldots,m$, are sold to $n$ bidders $1,\ldots,n$, and each bidder $i$ has a private valuation function over bundles of items $v_i: 2^{[m]}\rightarrow R$. The objective is to partition the items between the bidders so as to maximize $\Sigma_i v_i(S_i)$, where $S_i$ is the bundle assigned to bidder $i$. A more general setting is that of a {\em combinatorial auction with duplicates}, in which $d$ identical units of each item are available but each bidder desires at most 1 unit of each item. The mechanism can now output a {\em $d$-duplicate} allocation, i.e., a collection $S=\{S_i\}_{i\in [n]}$ of subsets of $[m]$ such that each $x\in [m]$ belongs to at most $d$ of these subsets.

A VCG-based, aka \textit{maximal-in-range} ({=\em MIR}), mechanism has fixed {\em bank} of allocations $\ch\subset P_d([m],[n])$ such that the output of $M$ for every $n$-tuple of valuation function $v_1,\ldots,v_n$ is an allocation $S$ that maximizes $\sum_{i=1}^n v_i(S_i)$ over all allocations $S\in\ch$. Charging VCG prices from the bidders ensures that such mechanisms are truthful. 

We lower bound the approximability of MIR mechanisms in two steps: (1) using the machinery developed in the previous section to show that a ``good'' approximation ratio implies shattering of a large number of bidders and items by $\ch$; and (2) embedding a hard (e.g., NP-hardness, under the Unique Games Conjecture, and communication complexity) problem in this shattered domain.

\vspace{0.05in}\noindent{\bf Step I: Shattering a large number of bidders and items (the VC step).} Consider the following two of the simplest classes of valuation functions:

\begin{itemize}
\item $\mathbf{Single-Minded}$ (0/1): all valuations of the form $v(S)=1[T\subset S]$ for some $T\subset [m]$.
\item $\mathbf{Additive}$ (0/1): all valuations of the form $v(S)=w^T\chi(S)$, where $w\in\{0,1\}^n$.
\end{itemize}

Intuitively, these valuations correspond to the containment and the intersection properties defined in the previous section. Shattering is thus immediately guaranteed by the following corollaries of Theorem~\ref{thm:shatter_single_minded_0} and Theorem~\ref{thm:shatter_additive_0}, respectively:

\begin{corollary}\label{thm:shatter_single_minded}
Let  $\epsilon>0$, and let $\mathcal{H}\subset P_d([m],[n])$ such that $n\geq m^{1-3\epsilon/4}$
and $\mathcal{H}$ has an approximation ratio of $m^{1-\epsilon}$ w.r.t. single minded valuations. Then, there exist subsets $S\subset [m]$ and $A\subset [n]$ of sizes  $\tilde{\Omega}(m^{3\epsilon/4})$ and $m^{\epsilon/4}$ that are shattered by $\mathcal{H}$.
\end{corollary}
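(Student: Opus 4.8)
The plan is to derive Corollary~\ref{thm:shatter_single_minded} directly from Theorem~\ref{thm:shatter_single_minded_0} by exhibiting that an MIR mechanism with bank $\mathcal{H}$ and approximation ratio $m^{1-\epsilon}$ with respect to single-minded valuations forces $\mathcal{H}$ to have the $m^{1-\epsilon}$-containment property. Once that implication is in hand, the sizes $\tilde{\Omega}(m^{3\epsilon/4})$ and $m^{\epsilon/4}$ of the shattered pair are exactly what Theorem~\ref{thm:shatter_single_minded_0} delivers (under the hypothesis $n\ge m^{1-3\epsilon/4}$, which is assumed identically here), so no further work is needed on that end.

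First I would fix an arbitrary allocation $\{T_y\}_{y\in Y}$ of the items $[m]$ among the indices $[n]$ and build the corresponding single-minded valuation profile: for each bidder $y$ with $T_y\ne\emptyset$, let bidder $y$'s valuation be $v_y(S)=1[T_y\subset S]$, and give bidders with $T_y=\emptyset$ the identically-zero single-minded valuation (say, demanding the whole set $[m]$, or any bundle they cannot obtain). The optimal social welfare on this profile is at least $|\{y : T_y\ne\emptyset\}|$, since the allocation $\{T_y\}$ itself is feasible (it is a genuine allocation, hence in particular a $d$-duplicate allocation) and awards value $1$ to every bidder with a nonempty demand. Because $M$ is an MIR mechanism optimizing over $\mathcal{H}$ and has approximation ratio $m^{1-\epsilon}$, the allocation $\{S_y\}_{y\in Y}\in\mathcal{H}$ that $M$ outputs must have welfare at least $\frac{1}{m^{1-\epsilon}}|\{y:T_y\ne\emptyset\}|$; but the welfare of $\{S_y\}$ under this profile is precisely $|\{y : \emptyset\ne T_y\subset S_y\}|$, which is therefore $\ge \frac{1}{m^{1-\epsilon}}|\{y:T_y\ne\emptyset\}|$. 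This is exactly the defining inequality of the $m^{1-\epsilon}$-containment property with $\alpha=m^{1-\epsilon}$.

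Having established that $\mathcal{H}$ has the $m^{1-\epsilon}$-containment property, I would simply invoke Theorem~\ref{thm:shatter_single_minded_0} with $X=[m]$, $Y=[n]$, and the same $\epsilon$, which yields a shattered pair $S\subset[m]$, $A\subset[n]$ of sizes $\tilde{\Omega}(m^{3\epsilon/4})$ and $m^{\epsilon/4}$ respectively, completing the proof.

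The only genuine subtlety — and the step I would treat most carefully — is the translation between ``approximation ratio of the mechanism'' and ``$\alpha$-containment of the bank'': one must be sure that the optimum used in the approximation-ratio definition is the optimum over all allocations (so that $\{T_y\}$ is a valid comparison point and contributes the full count $|\{y:T_y\ne\emptyset\}|$), and that zero-valued bidders do not inflate the MIR output's welfare in a way that weakens the bound. Both are handled by the observation that single-minded bidders contribute either $0$ or $1$, so the welfare of any allocation under this profile literally counts covered nonempty demands; there is nothing to lose. Everything else is a direct appeal to the already-proven Theorem~\ref{thm:shatter_single_minded_0}.
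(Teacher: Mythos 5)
Your proposal is correct and follows essentially the same route as the paper: the paper's own proof (given for the identical statement Theorem \ref{thm_apx:shatter_single_minded} in the appendix) simply remarks that the $\alpha$-containment property is equivalent to an approximation ratio of $\alpha$ with respect to single-minded valuations, and then invokes Theorem \ref{thm:shatter_single_minded_0}; you spell out the one direction of that equivalence which is actually used. One small imprecision worth noting: there is no ``identically-zero single-minded valuation'' in the class $\{v:v(S)=1[T\subset S]\}$ (taking $T=[m]$ still yields $v([m])=1$), so a bidder with $T_y=\emptyset$ could in principle contribute to the welfare of the allocation the MIR mechanism returns; but since $\{S_y\}$ is a $d$-duplicate allocation, at most $d$ such bidders can be awarded a superset of whatever common demanded bundle you assign them, and, more to the point, the proof of Theorem \ref{thm:shatter_single_minded_0} only ever invokes the containment property for partitions in which every part is nonempty, so this boundary case never actually arises.
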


\begin{corollary}\label{thm:shatter_additive}
Suppose $n\ge m^{\frac{1}{k+1}-\epsilon}$ and $\ch\subset P([m],[n])$ has an approximation ratio of $m^{\frac{1}{k+1}-\epsilon}$ w.r.t. 0/1-additive valuations. Then, there exists a shattered pair $S\subset [m],\;A\in\binom{[n]}{k}$ with $|S|\ge \frac{m^{(k+1)\epsilon}}{\log_2(m)}$.
\end{corollary}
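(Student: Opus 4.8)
The plan is to derive Corollary~\ref{thm:shatter_additive} directly from Theorem~\ref{thm:shatter_additive_0}; the only ingredient not already supplied by that theorem is the observation that a maximal-in-range bank $\ch$ that $m^{\frac{1}{k+1}-\epsilon}$-approximates social welfare on $0/1$-additive valuations automatically has the intersection property that Theorem~\ref{thm:shatter_additive_0} consumes. So I would first establish this reduction, and then invoke the theorem with the same parameters $k$ and $\epsilon$.

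For the reduction, fix an arbitrary allocation $\{T_y\}_{y\in[n]}$ of $[m]$ and feed $\ch$ the $0/1$-additive profile in which bidder $y$ has weight vector $\chi(T_y)\in\{0,1\}^m$, so that $v_y(S)=|S\cap T_y|$. Because $\{T_y\}$ is an allocation, the $T_y$ are pairwise disjoint, so every allocation $\{U_y\}$ of $[m]$ has welfare $\sum_y|U_y\cap T_y|\le\sum_y|T_y|$, with equality for $U_y=T_y$; hence the optimal social welfare of this instance is exactly $\sum_y|T_y|$ (this is where disjointness is used, and note that $\{T_y\}$ need not be a partition---the bound obtained below is naturally stated in terms of $\sum_y|T_y|$, matching the definition of the intersection property). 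An approximation ratio of $\alpha:=m^{\frac{1}{k+1}-\epsilon}$ then forces some $\{S_y\}_{y\in[n]}\in\ch$ with $\sum_y|S_y\cap T_y|\ge\tfrac{1}{\alpha}\sum_y|T_y|$, and since $\{T_y\}$ was arbitrary this is precisely the $\alpha$-intersection property. As the $\beta$-intersection property implies the $\beta'$-intersection property for every $\beta'\ge\beta$, $\ch$ in particular has the $\tfrac{n}{1+2k}$-intersection property whenever $\tfrac{n}{1+2k}\ge\alpha$.

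At that point Theorem~\ref{thm:shatter_additive_0}, applied with this $k$ and $\epsilon$, hands back a shattered pair $S\subset[m]$, $A\in\binom{[n]}{k}$ with $|S|\ge\frac{m^{(k+1)\epsilon}}{\log_2(m)}$, which is exactly the assertion. I do not expect a genuine obstacle: the substantive combinatorics (Lemma~\ref{lem:buch_umans} together with the generalized Sauer--Shelah bound, Theorem~\ref{thm:sauer_gen}) is internal to Theorem~\ref{thm:shatter_additive_0}, and here the only point to keep track of is the constant $1+2k$ separating $\alpha$ from the threshold $\tfrac{n}{1+2k}$. Since $k$ is a fixed constant this is absorbed harmlessly---e.g.\ by reading the index-count hypothesis as $n\ge(1+2k)m^{\frac{1}{k+1}-\epsilon}$ (still polynomial in $m$), or by decreasing $\epsilon$ by an $o(1)$ amount---neither of which affects the polynomial-in-$m$ size of $S$. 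Finally, the companion Corollary~\ref{thm:shatter_single_minded} would be obtained by the identical route with ``intersection'' replaced by ``containment'': a single-minded profile whose desired bundles form a given allocation $\{T_y\}$ has optimal welfare $|\{y:T_y\ne\emptyset\}|$, so an $m^{1-\epsilon}$-approximate bank must contain an allocation covering an $m^{-(1-\epsilon)}$-fraction of the nonempty $T_y$---i.e.\ it has the $m^{1-\epsilon}$-containment property---after which Theorem~\ref{thm:shatter_single_minded_0} finishes.
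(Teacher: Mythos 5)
Your proof is correct and matches the paper's: reduce approximation on a $0/1$-additive profile built from a given allocation $\{T_y\}$ (weight vector $\chi(T_y)$ for bidder $y$) to the $\alpha$-intersection property, then invoke Theorem~\ref{thm:shatter_additive_0}; the paper's appendix (Theorem~\ref{thm:shatter_additive_apx}) asserts exactly this equivalence in one line and applies the theorem immediately. You additionally noticed, and correctly dispatched, the $1+2k$ constant-factor slack between the assumed approximation ratio $m^{\frac{1}{k+1}-\epsilon}$ and the threshold $\frac{n}{1+2k}$ that Theorem~\ref{thm:shatter_additive_0} actually requires---a small imprecision the paper silently glosses over.
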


Clearly, these corollaries assert that 
if mechanism $M$ obtains a suitable approximation ratio w.r.t. to any collection of valuations $\cf$ that contains one of these two basic valuation classes, the (appropriate) shattering bound holds. 

\vspace{0.05in}\noindent{\bf Step II: Reducing from a hard problem.} As the mechanisms considered are MIR, the mechanism can optimize exactly over the shattered set of bidders and items---an intractable task for many classes of valuations. We leverage this to prove both computational complexity and communication complexity results. We next illustrate this idea in specific contexts. See our results in the appendix for other classes of valuation functions. We prove the following theorem:

\begin{theorem}\label{thm:SM,NP}
No efficient MIR mechanism has approximation ratio $m^{1-\epsilon}$ for any constant $\epsilon>0$ w.r.t. single-minded valuations, unless NP$\subseteq$ P/poly.
\end{theorem}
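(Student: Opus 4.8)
The plan is to argue by contradiction and make rigorous the reduction sketched in Section~\ref{subsec:illustration}. Assume there is a deterministic, computationally efficient maximal-in-range mechanism $M$ whose range on $m$ items and $n$ bidders is a bank $\ch_{m,n}\subset P([m],[n])$, and that achieves approximation ratio $m^{1-\epsilon}$ against single-minded valuations. We will place SET-PACKING -- equivalently, exact optimal social welfare with single-minded bidders -- in $\mathrm{P}/\poly$, which yields $\mathrm{NP}\subseteq\mathrm{P}/\poly$. Given a SET-PACKING instance with $q$ sets over a ground set $U$ (we may assume $|U|\le 3q$, e.g.\ by reducing from $3$-set packing and padding with dummy singletons; tiny instances are solved by brute force), instantiate the auction with $m$ of size $q^{4/\epsilon+o(1)}$, which is polynomial in the instance size for the fixed constant $\epsilon$, and $n=\lceil m^{1-3\epsilon/4}\rceil$ bidders. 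Since $M$ has the stipulated approximation ratio for this $(m,n)$, Corollary~\ref{thm:shatter_single_minded} supplies a pair $S\subset[m]$, $A\subset[n]$ shattered by $\ch_{m,n}$, with $|A|=m^{\epsilon/4}\ge q$ and $|S|=\tilde{\Omega}(m^{3\epsilon/4})\ge |U|$ (the last inequality holds for large $q$ since $3/\epsilon>1$).

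The crucial point for non-uniformity is that the shattered pair $(S,A)$ depends only on $\ch_{m,n}$, hence only on $M$ and the input length $m$, and not on the SET-PACKING instance. We therefore hard-wire $(S,A)$ as polynomial-size advice for all inputs of that size. This is precisely why the conclusion is $\mathrm{NP}\subseteq\mathrm{P}/\poly$ rather than $\mathrm{P}=\mathrm{NP}$: the proof of Theorem~\ref{thm:shatter_single_minded_0}, hence of the corollary, is non-constructive, and even a constructive version would a priori have to inspect the exponentially large bank $\ch$.

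For the reduction, identify the ground set $U$ with $|U|$ distinct items of $S$ and the $q$ sets with $q$ distinct bidders of $A$; the bidder $a\in A$ attached to a set $T_a\subseteq U\subset S$ receives the single-minded valuation $v_a(\cdot)=1[T_a\subset\cdot]$, and every remaining bidder (the unused ones in $A$ and all bidders outside $A$) receives the zero valuation -- which is w.l.o.g., or, if one insists on the literal single-minded form, one makes them single-minded for the full bundle $[m]$, which can never help more than one bidder and is thus inert. Then the welfare of any allocation $\sigma\in\ch_{m,n}$ equals $|\{a\in A:\ T_a\subset\sigma_a\}|$, which depends only on the restriction $\sigma|_{S,A}$. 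Because $(S,A)$ is shattered, $\sigma|_{S,A}$ ranges over all of $A^{S}$ as $\sigma$ ranges over $\ch_{m,n}$, so $\max_{\sigma\in\ch_{m,n}}\mathrm{welfare}(\sigma)$ equals the maximum packing value; and the maximizer output by $M$ (it is maximal-in-range, so it optimizes exactly over $\ch_{m,n}$) yields, after restriction to $S$ and $A$, a family of bidders whose desired sets are pairwise disjoint (they sit in disjoint bundles) and form a maximum packing. Since $M$ runs in $\poly(m)=\poly(q)$ time, the advice $(S,A)$ together with this construction solves SET-PACKING in deterministic polynomial time, so $\mathrm{NP}\subseteq\mathrm{P}/\poly$.

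I expect the only real subtleties to be: (i) the bookkeeping at the interface of the embedded sub-instance -- fixing the valuations of the bidders and items outside $S\times A$ so that they cannot distort the welfare and $M$'s global optimum faithfully encodes the optimal packing on $(S,A)$; and (ii) verifying that SET-PACKING restricted to the achievable parameter window ($q$ sets, $\tilde{\Omega}(q^{3/\epsilon})$ ground elements, $m=\poly(q)$) is still NP-hard, which follows from hardness of $3$-set packing plus trivial padding. Everything else -- the running-time accounting, and the fact that VCG payments make $M$ truthful, which we do not even use, only the maximal-in-range property -- is routine.
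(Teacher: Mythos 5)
Your proof is correct and follows essentially the same route as the paper, which obtains Theorem~\ref{thm:SM,NP} as the $d=1$ case of Theorem~\ref{single minded duplicates}: apply Corollary~\ref{thm:shatter_single_minded} to extract a shattered pair, then embed SET-PACKING (the $k=2$ case of the $k$-packing promise problem, so no UGC is needed when $d=1$). You spell out several bookkeeping points the paper leaves implicit and gets right only in spirit: choosing $m=\poly(q)$ so that the shattered pair is large enough to host an arbitrary $q$-set instance, the observation that the shattered pair is a function of $M$ and the input length alone and hence legitimate non-uniform advice (which is exactly why the conclusion is $\mathrm{NP}\subseteq\mathrm{P}/\poly$), and the fact that the ``zero valuation'' the paper assigns to inactive bidders is not literally of the form $1[T\subset S]$, for which your fix (make them single-minded for $[m]$, so they contribute at most $1$ to welfare and are inert whenever $C\ge 2$, with $C=1$ trivial) is fine. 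One tiny nit: the inequality $|S|\ge|U|$ follows because $m^{3\epsilon/4}=q^3$ dominates $|U|=O(q)$, not because ``$3/\epsilon>1$''; the exponent comparison that matters is $3>1$, and the arithmetic is otherwise sound.
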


Theorem~\ref{thm:SM,NP} is a corollary of the following theorem for auctions with $d$-duplicates of each item:

\begin{theorem}\label{single minded duplicates}
Under the UGC, for every constant $d$ and every $\epsilon>0$, no efficient VCG-based mechanism achieves an approximation ratio of $m^{1-\epsilon}$ w.r.t. single-minded valuations unless  $NP\subseteq P/poly$.
\end{theorem}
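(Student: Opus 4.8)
We argue by contradiction. Suppose $M$ is an efficient VCG-based (maximal-in-range) mechanism with bank $\ch\subseteq P_d([m],[n])$ that attains approximation ratio $m^{1-\epsilon}$ with respect to single-minded valuations, on instances with, say, $n=\lceil m^{1-3\epsilon/4}\rceil$ bidders. Following the two-step template of this section, the proof consists of a ``VC step'' that extracts a large shattered pair of bidders and items from $M$'s range, and a ``reduction step'' that plants a hard instance inside this shattered domain.

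For the VC step I would invoke Corollary~\ref{thm:shatter_single_minded}. Its hypothesis holds because an $m^{1-\epsilon}$-approximation guarantee for single-minded valuations forces the $m^{1-\epsilon}$-containment property: feeding $M$ the single-minded valuations $v_y(R)=\mathbf 1[T_y\subseteq R]$ induced by an arbitrary partition $\{T_y\}_{y}$ of $[m]$ into non-empty blocks, the optimum is $n$, so $M$'s output must ``cover'' at least an $m^{-(1-\epsilon)}$ fraction of the blocks. Corollary~\ref{thm:shatter_single_minded} then yields $S\subseteq[m]$ with $|S|=\tilde\Omega(m^{3\epsilon/4})$ and $A\subseteq[n]$ with $|A|=m^{\epsilon/4}$ such that every partition of $S$ among the indices in $A$ is induced by some allocation in $\ch$.

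For the reduction step I plant an instance of a hard problem on $(S,A)$, using a description of $(S,A)$ as non-uniform advice; this non-constructivity of the shattered pair is exactly why the conclusion is $NP\subseteq P/\mathrm{poly}$ rather than $P=NP$, as in~\cite{papadimitriou2008hardness}. Identify $A$ with $K:=|A|$ agents and $S$ with a ground set, make each agent $a$ single-minded for a bundle $T_a\subseteq S$, and give every bidder outside $A$ the null valuation. Since $M$ is maximal-in-range, the welfare of its output equals $\max_{h\in\ch}|\{a\in A:\,T_a\subseteq h_a\}|$, which is at least the largest \emph{pairwise-disjoint} sub-collection of $\{T_a\}$ (because all partitions of $S$ among $A$ lie in $\ch$) and at most the largest $d$-duplicate packing of $\{T_a\}$ (because $\ch\subseteq P_d$). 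When $d=1$ these two bounds coincide: $M$ then solves \textsc{Set-Packing} on $K$ sets exactly, and its NP-hardness already gives $NP\subseteq P/\mathrm{poly}$ — this recovers Theorem~\ref{thm:SM,NP}, with no appeal to the UGC. For a general constant $d$ I would choose the $T_a$'s to encode independent set in a $(d{+}1)$-uniform hypergraph $H$ on $K$ vertices — one item of $S$ per hyperedge, placed in the bundles of its $d{+}1$ endpoints — so that the largest $d$-duplicate packing equals the independence number $\alpha(H)$, while a greedy/edge-colouring argument (hyperedges of size $d{+}1$, degree at most $d$) shows the pairwise-disjoint optimum is within a factor $c_d$ depending only on $d$ of $\alpha(H)$. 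Hence $M$'s output is a $c_d$-approximation of $\alpha(H)$, and the Unique-Games-based hardness of approximating independent set in bounded-uniformity hypergraphs (equivalently, a multi-colouring problem) yields the contradiction. The parameters fit: to accommodate $K$ agents it suffices to take $m=\mathrm{poly}(K)$, since $|A|=m^{\epsilon/4}$ and $|S|=\tilde\Omega(m^{3\epsilon/4})$ leave room for the at most $\binom{K}{d+1}$ planted items, so the reduction runs in polynomial time given the advice.

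The crux is the reduction step for $d\ge 2$. Shattering only certifies that all ``honest'' partitions of $S$ appear in $\ch$, not the ``overbooked'' $d$-duplicate allocations, so the mechanism's output on the planted instance is merely sandwiched between the $1$- and $d$-duplicate optima rather than equal to one combinatorial quantity. The real work is to choose the planted structure so that this sandwich has width $c_d=O(1)$ and that approximating the enclosed quantity within that width is still hard — which is where the UGC enters, in contrast with the clean exact reduction available for $d=1$. A secondary, more routine, task is the verification of the containment property and the parameter bookkeeping needed to apply Corollary~\ref{thm:shatter_single_minded} for every $\epsilon>0$.
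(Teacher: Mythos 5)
Your VC step is exactly the paper's: apply Corollary~\ref{thm:shatter_single_minded} to obtain a shattered pair of sizes $\tilde\Omega(m^{3\epsilon/4})$ and $m^{\epsilon/4}$, and you correctly verify the hypothesis via the $m^{1-\epsilon}$-containment property. Your $d=1$ reduction (exact SET-PACKING, no UGC needed) also matches. The divergence, and the gap, is in the reduction for $d\ge 2$.

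You correctly spot the ``sandwich'' issue---shattering only certifies that exact partitions of $S$ among $A$ are in $\ch$, so $M$'s welfare is only lower-bounded by the best disjoint packing and upper-bounded by the best $d$-duplicate packing---but your fix does not close it. You claim, via an unstated greedy/edge-colouring argument, that the disjoint optimum is within a factor $c_d=O(1)$ of the $d$-duplicate optimum (i.e. of $\alpha(H)$). This is false without degree control, which your construction does not provide: take $K$ distinguished vertices plus, for every pair of them, a fresh $(d{+}1)$-edge through that pair and $d-1$ auxiliary vertices. The distinguished set is independent (every edge misses $d-1$ of its vertices), so $\alpha(H)=\Omega(K)$, yet any two distinguished vertices share an edge, so the disjoint packing among them is $1$. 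The parenthetical ``degree at most $d$'' you invoke is not forced by the construction and would have to be imposed and justified, at which point the hardness of the resulting restricted hypergraph independent-set problem would itself need a citation. The paper sidesteps the sandwich entirely by reducing from the \emph{$(d{+}1)$-packing promise problem}: given sets $S_1,\ldots,S_r\subset U$ and a target $C$, distinguish ``some $C$ sets are pairwise disjoint'' from ``every $C$ sets cover some element $\ge d{+}1$ times.'' In the YES case the disjoint family is a genuine partition, so shattering puts it in $\ch$ and welfare is $\ge C$; in the NO case the very definition of the promise rules out \emph{any} $d$-duplicate allocation satisfying $C$ bidders (the element covered $d{+}1$ times would violate the duplicate bound), so welfare is $< C$. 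No approximation-factor analysis is needed, and UGC enters only to establish hardness of the promise problem itself, via the hypergraph-independent-set gap of~\cite{bansal2010inapproximability} (strong independent set in YES vs.\ no ordinary independent set in NO), which is precisely the gap structure your sandwich needed and your construction failed to supply.
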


\begin{proof}
Suppose that $M$ is MIR and obtains an approximation ratio of $m^{1-\epsilon}$  for  combinatorial auctions with $d$ duplicates w.r.t. single-minded valuations. Let $\mathcal{H}\subset P_d ([m],[n])$ denote its bank of allocations, and let $S\in P([m],[n])$ represent single minded bidders (bidder $i$ desires the bundle $S_i$). The approximation ratio implies that there must be some $T\in \mathcal{H}$ such that $n\leq m^{1-\epsilon}\cdot | \{i|S_i\subset T_i\}|~.$ Therefore, by corollary \ref{thm:shatter_single_minded}, there is a shattered pair $B\subset [n]$, $X\subset [m]$ of sizes $m^{\epsilon/4}$, $\Omega(m^{3\epsilon/4})$.

We will show computational hardness by a reduction from the \textbf{k-packing promise problem}. In this problem, the input consists of $r$ sets $S_1,....,S_r\subset U$ and
an integer $C>0$, and the goal is to distinguish between the following two options:
\begin{itemize}
	\item \textbf{Positive:} There are $C$ sets in $S_1,....,S_n$ that are pairwise disjoint.
	\item \textbf{Negative:} For every collection $\mathcal{F}$ of $C$ sets out of $S_1,....,S_n$, there is some $x\in U$ that is covered by $k$ sets of $\mathcal{F}$.
\end{itemize}
Observe that setting $k=2$ yields the famous "SET PACKING" problem, which is known to be NP-hard. We shall make use of the fact that assuming UGC, this problem is NP-hard for every constant $k$. (The proof, based on \cite{bansal2010inapproximability}, can be found in the appendix).

The reduction is given by taking $k=d+1$, and using the shattered $X$ set as the universe $U$: Given $S_1,...,S_{m^{\epsilon/4}}\subset X$ and an integer $C>0$, suppose that each bidder $i\in B$ desires $S_i$, and all bidders not in $B$ have the zero valuation.
Now, in the ``YES'' cases, since there is a collection of $C$ pairwise disjoint subsets from $S_1,...,S_{m^{\epsilon/4}}$, and since the pair $X,B$ is shattered and $M$ is MIR, the social welfare, $\sum_{i} v_i(S_i)$, will be at least $C$. On the other hand, in the ``NO'' cases, it is impossible to satisfy $C$ bidders by {\em any} $d$-duplicate allocation, and therefore, the social welfare will be less that $C$.
\end{proof}

We prove the analogue of Theorem~\ref{single minded duplicates} in the communication complexity model. We first define \textbf{multi-minded valuations} (0/1), which are are all valuations $v:2^{[m]}\to\{0,1\}$.

\begin{theorem} \label{multi-minded thm}
For every constant $d$ and $\epsilon>0$, any VCG-based mechanism for combinatorial auctions with $d$-duplicates that obtains an approximation ratio of $m^{1-\epsilon}$ w.r.t. multi-minded valuations must use exponential communication.
\end{theorem}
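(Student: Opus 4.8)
The plan is to follow the template of the proof of Theorem~\ref{single minded duplicates}, replacing its NP-hardness reduction with a communication-complexity reduction. First I would run the VC step for free: since multi-minded valuations generalize single-minded valuations, a VCG-based mechanism $M$ with bank $\ch\subset P_d([m],[n])$ that achieves an $m^{1-\epsilon}$-approximation w.r.t. multi-minded valuations in particular achieves it w.r.t. single-minded valuations, so Corollary~\ref{thm:shatter_single_minded} applies and yields a pair $B\subset[n]$, $X\subset[m]$ of sizes $m^{\epsilon/4}$ and $\tilde{\Omega}(m^{3\epsilon/4})$ that is shattered by $\ch$; that is, every partition of $X$ among the indices of $B$ is induced by some allocation in $\ch$.

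For Step~II I would exploit that an MIR mechanism is, by definition, an exact maximizer over its fixed bank. Feeding $M$ multi-minded bidders whose valuations are supported on bundles of $X$, indexed by $B$ (all other bidders taking the zero valuation), the value $M$ reports equals $\max_{S\in\ch}\sum_{b\in B}v_b(S_b)$, which is at least the best welfare over all partitions of $X$ among $B$ (these are all available in $\ch$) but could a priori exceed it, since $\ch$ may contain further $d$-duplicate allocations. To pin this down I would, exactly as in Theorem~\ref{single minded duplicates}, design the valuations via a packing-style gadget so that in the ``YES'' case some partition of $X$ among $B$ already attains a target welfare $C$, while in the ``NO'' case no $d$-duplicate allocation of $X$ among $B$ (in particular, nothing in $\ch$) attains welfare $C$. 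Then $M$'s output decides the ``YES''/``NO'' question, so $M$'s communication is at least the communication complexity of that question.

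The question itself I would instantiate as the $2$-party set-disjointness problem over a universe of \emph{exponential} size. Split a sub-block $X_0\subseteq X$ into two halves and index the coordinates of the disjointness instance by the subsets of one half, so the universe has size $2^{\Theta(|X|)}$; the two active bidders $b_1,b_2\in B$ hold monotone $0/1$ valuations over $X$ (a special case of multi-minded valuations) encoding the input families $\mathcal{A},\mathcal{B}$, arranged by the standard combinatorial-auction gadget (as in Nisan / Dobzinski--Nisan) so that $b_1$ and $b_2$ can be simultaneously satisfied by a partition of $X_0$ iff $\mathcal{A}\cap\mathcal{B}\neq\emptyset$. To make this work for an arbitrary constant $d$, I would add $d-1$ ``blocking'' bidders from $B$, each single-mindedly demanding all of $X_0$: satisfying them consumes $d-1$ of the $d$ available copies of every item of $X_0$, so among the remaining capacity $b_1$ and $b_2$ effectively receive a genuine partition of $X_0$, i.e. the no-duplicates situation. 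Since $2$-party disjointness over a universe of size $N$ needs $\Omega(N)$ bits, $M$ must use $2^{\tilde{\Omega}(m^{3\epsilon/4})}$ communication, which is exponential.

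The main obstacle, as in the previous proof, is the robustness of Step~II to the mechanism's freedom in choosing $\ch$: shattering only supplies the partitions of $X$ among $B$, so the NO-case argument must rule out \emph{every} $d$-duplicate allocation (not merely every partition) achieving welfare $C$, and the disjointness gadget must be assembled so that this holds simultaneously with the $d-1$ blocking bidders in place. Getting the gadget's YES/NO gap to survive both the extra duplicate copies and the unknown extra allocations in $\ch$ is the delicate part; once it is set up, the communication lower bound follows immediately from the classical disjointness bound together with $|X|=\tilde{\Omega}(m^{3\epsilon/4})$.
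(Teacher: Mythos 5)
Your Step~I is exactly what the paper does (invoke Corollary~\ref{thm:shatter_single_minded} via the single-minded sub-class; the paper then just restricts to a shattered sub-pair with $|B|=d+1$). The divergence, and the gap, is in Step~II. You reduce from \emph{two}-party disjointness and insert $d-1$ ``blocking'' bidders each demanding all of a block $X_0\subseteq X$. For the MIR mechanism to decide the instance, its bank $\ch$ must, on YES inputs, contain an allocation of welfare $d+1$. But shattering only guarantees that \emph{partitions} of $X$ among $B$ lie in $\ch$: each item is assigned to exactly one bidder. The YES-witness you describe---hand all of $X_0$ to every blocking bidder while simultaneously partitioning $X_0$ between $b_1$ and $b_2$---assigns each item of $X_0$ to $d$ distinct bidders and is therefore not a partition; nothing in the hypothesis places it in $\ch$. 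Restricted to partitions, the best welfare a YES instance can force from the bank is $2$ (satisfy $b_1,b_2$, no blocker), whereas on a NO instance a $d$-duplicate allocation can still achieve welfare $d$ (all $d-1$ blockers plus one of $b_1,b_2$, or $d-2$ blockers plus both of $b_1,b_2$ using two copies). So the mechanism outputs welfare $\ge 2$ on YES and $\le d$ on NO, which does not separate the two cases for any $d\ge 2$. You flag exactly this issue as ``the delicate part,'' but it is precisely where the argument collapses, and neither rescaling the valuations nor invoking the $m^{1-\epsilon}$ approximation guarantee (which is far too weak at these small welfares) closes it.

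The paper's proof is structurally different precisely to avoid this. It reduces from \emph{$(d+1)$-party} approximate disjointness. Lemma~\ref{int. prop. col.} constructs exponentially many \emph{partitions} $\{D^s\}_{s\in[t]}$ of $X$ into $d+1$ parts with the $(d+1)$-wise intersection property ($\cap_i D_i^{s_i}=\emptyset$ iff $s_1=\cdots=s_{d+1}$), and each of the $d+1$ bidders is multi-minded on $\{D_i^s:s\in A_i\}$. On a YES instance a common index $s$ exists and the single \emph{partition} $D^s$ simultaneously satisfies all $d+1$ bidders, so the shattering guarantee puts welfare $d+1$ in the bank. On a NO instance, certifying indices $s_i$ would be pairwise distinct, and the intersection property then forces some item into $d+1$ of the allocated bundles, contradicting the $d$-duplicate constraint, so welfare $\le d$. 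Passing from two-party to $(d+1)$-party disjointness is not cosmetic: it is what makes the YES witness a partition, which is the only thing shattering actually delivers. To repair your argument you would need to replace the blocking-bidder gadget with this $(d+1)$-wise partition construction (or prove, from some other source, that $\ch$ contains your non-partition YES witness, which the hypotheses do not support).
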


\begin{proof}
The proof of Theorem~\ref{multi-minded thm} involves a reduction from the communication \textbf{approximate disjointness problem} and appears in Appendix~\ref{multi-minded thm_apx}.
\end{proof}

\section{Inapproximability for General Mechanisms}\label{sec:general-techniques}

We next describe our approach to proving inapproximability results for unrestricted deterministic mechanisms. We illustrate these ideas by outlining the proof of a new separation gap between truthful and nontruthful deterministic mechanisms for combinatorial auctions with general valuations (see full proof in Appendix~\ref{sec:general_comp_apx}). Our proof applies the ``direct hardness approach'', introduced by Dobzinsky \cite{dobzinski2011impossibility} together with the VC dimension arguments.

We first define the specific valuation class that we will consider.
\begin{definition}
A valuation $v:2^{[m]}\to \mathbb R$ is called {\bf $k$-local} if there is  $T\in\binom{[m]}{k}$ such that, for all $S$, $v(S\cap T)\ge v(S)-\frac{1}{2m^2}v([m])$.
\end{definition}
For $k=k(m)$ we denote that class of $k$-local valuations by $\mathbf{k-local}$. The corresponding bidding language will be circuits.

\begin{theorem}\label{thm:general valuations}
For any $\epsilon>0$,
\begin{itemize}
\item Unless $NP\subseteq P/poly$, no efficient truthful mechanism has an approximation ratio of $m^{1-\epsilon}$ w.r.t. $m^\epsilon$-local valuations.
\item There is an efficient non-truthful mechanism with an approximation ratio of $2k$ w.r.t. $k$-local valuations. Moreover, the auctioneer communicates with the bidders only trough value queries.
\end{itemize}
\end{theorem}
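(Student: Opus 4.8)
The plan is to treat the two parts separately; the algorithmic upper bound is routine, and the bulk of the work is the hardness statement, which combines the ``direct hardness'' method with the generalized Sauer--Shelah machinery of Section~\ref{sec:partitions}.

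\medskip
\noindent\emph{The $2k$-approximation.} The observation driving the algorithm is that value queries suffice to distil, from any (monotone) $k$-local valuation, a near-optimal bundle of size at most $k$. For bidder $i$ I would start from the bundle $[m]$ and repeatedly delete any item whose deletion drops $v_i$ by at most $\tfrac{1}{2m^{2}}v_i([m])$ (each test is one value query). Using monotonicity together with the defining inequality of $k$-locality one checks (i) that after at most $m$ deletions the total drop is at most $\tfrac{1}{2m}v_i([m])$, and (ii) that no item outside the core $T_i$ can survive, since for every bundle $B$ and $y\in B\setminus T_i$ the marginal $v_i(B)-v_i(B\setminus\{y\})$ is at most $\tfrac{1}{2m^{2}}v_i([m])$ (combine $k$-locality at $B$ with monotonicity at $B\setminus\{y\}$). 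This yields $B_i\subseteq T_i$ with $|B_i|\le k$ and $v_i(B_i)\ge(1-\tfrac{1}{2m})\max_S v_i(S)$. The allocation rule is then a recursive greedy: while some bidder is unserved, re-run the pruning of each unserved bidder on the currently available items, serve the bidder whose resulting bundle has the largest value, and recurse. For the analysis, fix an optimal allocation, set $C_i=S^{*}_i\cap T_i$, and note that by $k$-locality $\mathrm{OPT}\le\sum_i v_i(C_i)+o(\mathrm{OPT})$, the $C_i$ are pairwise disjoint, and $|C_i|\le k$. Charging each $v_i(C_i)$ partly to bidder $i$'s own received value and partly, through the items of $C_i$ that earlier winners removed, to the rounds in which those items were taken, and using that $|C_i|,|B_i|\le k$ so that every round is charged at most $k$ times by values no larger than that round's own value, gives $\mathrm{OPT}\le 2k\cdot\mathrm{ALG}$ after absorbing the pruning slack. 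The delicate point is bounding the value of the ``lost prefix'' of $C_i$ by quantities controlled by the greedy rounds; the factor $2$ beyond the set-size $k$ is spent there and on the $\tfrac1{2m}$ slack.

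\medskip
\noindent\emph{The hardness for general truthful mechanisms.} Assume for contradiction that $M$ is truthful, polynomial-time, and $m^{1-\epsilon}$-approximate on $m^{\epsilon}$-local valuations; put $n=m^{1-3\epsilon/4}$ and restrict to single-minded valuations, which are $m^{\epsilon}$-local whenever their core has size $\le m^{\epsilon}$ (and all cores arising below will). \emph{Step I (the VC step).} Feed $M$ profiles of pairwise-disjoint single-minded bidders; then $\mathrm{OPT}$ equals the number of bidders, so the approximation guarantee forces $M$ to satisfy at least a $1/m^{1-\epsilon}$ fraction of them. Hence the collection $\mathcal H\subseteq P([m],[n])$ of allocations $M$ outputs on such profiles has the $m^{1-\epsilon}$-containment property, and Corollary~\ref{thm:shatter_single_minded} (itself a consequence of Theorem~\ref{thm:sauer_gen}) produces a shattered pair $S\subseteq[m]$, $A\subseteq[n]$ with $|S|=\tilde{\Omega}(m^{3\epsilon/4})$ and $|A|=m^{\epsilon/4}$. \emph{Step II (direct hardness \cite{dobzinski2011impossibility}).} By the taxation principle, for fixed reports of all bidders but one, $M$ allocates that bidder its utility-maximizing entry in a fixed menu of (bundle, price) pairs. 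Using this together with the shattering of $(S,A)$, I would argue that the reports of all bidders but a single designated $a_{0}\in A$ can be fixed so that $a_{0}$'s menu is \emph{rich over $S$}: for every $B\subseteq S$ it contains, at a uniformly bounded price, a bundle whose restriction to $S$ equals $B$. \emph{Step III (planting the hard instance).} Exactly as in the proof of Theorem~\ref{single minded duplicates}, encode the $k$-packing promise problem on the ground set $S$ into the circuit describing $a_{0}$'s valuation --- a monotone $m^{\epsilon}$-local valuation since $|S|\le m^{\epsilon}$, with item values large enough to dominate the menu prices --- so that $a_{0}$'s utility-maximizing menu entry is an optimal packing whenever one exists. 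Since $M$ computes this entry in polynomial time and $k$-packing is NP-hard for $k=2$ (and, under the UGC, for every constant $k$), this yields $\mathrm{NP}\subseteq\mathrm{P}/\mathrm{poly}$. The non-truthful side of the separation is the $2k$-approximation above with $k=m^{\epsilon}$.

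\medskip
\noindent\emph{Main obstacle.} The hard step is Step~II: turning a global approximation/shattering guarantee into a statement about one \emph{fixed} bidder's menu, even though that menu changes with the other bidders' reports --- the maximal-in-range case (Theorem~\ref{single minded duplicates}) is exactly the degenerate one in which this step is free because the ``menu'' is literally the fixed bank of allocations. One also has to manage the exponents so that a polynomial-size NP-hard instance embeds into an $m^{\epsilon}$-local valuation while $|S|$ stays polynomially large (this forces $3\epsilon/4\le\epsilon$, comfortably true, but pins down the constants).
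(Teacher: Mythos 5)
Your treatment of the two halves has very different status.

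\textbf{The algorithmic part.} Your greedy-with-pruning scheme is the same idea as Algorithm~\ref{alg:non_truth}, and the implementation of the pruning step via value queries is identical in spirit. Two caveats: (i) your side claim ``no item outside the core $T_i$ can survive'' is false --- the deletion rule may well delete items of $T_i$ along the way, so the surviving bundle need not sit inside $T_i$; the correct (and sufficient) claim is just that whenever $|B|>k$ some deletable item exists, so pruning can always reach $|B|\le k$ with total loss $\le \frac{1}{2m}v_i([m])$. (ii) The analysis of the approximation factor in the paper is substantially cleaner than your charging sketch: sort the optimal sets by value, note that after $j-1$ rounds only $(j-1)k$ items have been removed, so one of the top $(j-1)k+1$ optimal sets is still fully available, then compare the greedy winner's value against the $k$-block average $\frac1k\sum_{i=1}^k v_{(j-1)k+i}(S_{(j-1)k+i})$ and sum over rounds. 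Your ``charge each lost item to the round that removed it'' argument is plausible but you leave the ``delicate point'' unresolved; the paper's sorted-prefix argument avoids it entirely.

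\textbf{The hardness part.} Here you have a genuine gap, which you correctly identify as the main obstacle but do not close. Your Step I applies Corollary~\ref{thm:shatter_single_minded} (containment property $\Rightarrow$ shattered pair) to the set $\mathcal H$ of allocations the mechanism \emph{outputs}. That corollary and the whole containment machinery is designed for MIR mechanisms, where $\mathcal H$ is the bank the mechanism literally optimizes over. For a general truthful mechanism, the collection of its outputs is not a range over which it optimizes, so shattering of a pair $(S,A)$ by that collection does not let you plant a hard optimization problem. Your Step II --- ``fix the reports of all bidders but $a_0$ so that $a_0$'s menu is rich over $S$'' --- is not a consequence of the Step-I shattering; different entries of $\mathcal H$ arise from different profiles $v_{-a_0}$, so no single choice of $v_{-a_0}$ need expose a rich menu. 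The paper bypasses shattering-of-outputs altogether: Lemma~\ref{lem:shatter_truthful} fixes \emph{one} profile $v_{-i}$ of almost-single-minded bidders and shows, by a direct probabilistic argument (random disjoint demanded bundles plus the approximation guarantee), that some bidder's structured submenu $\cs(v_{-i},k,p)$ is exponentially large. It then applies the ordinary (binary) Sauer--Shelah lemma to that fixed submenu to get a shattered item set $U$, and reduces from VERTEX-COVER using a carefully constructed valuation $v_i$ that exploits the menu prices $p_{v_{-i}}$ and Lemma~\ref{lem:truthful_comp} (poly-size circuits for menu membership and prices). So the generalized Sauer--Shelah theorem and the containment property are not used here at all; they belong to the MIR results. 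Without a substitute for Lemma~\ref{lem:shatter_truthful}, your proposed proof of the first bullet does not go through.
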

We start by describing the efficient non-truthful mechanism.
\begin{algorithm}[th]\caption{} \label{alg:non_truth}
\begin{algorithmic}[1]
\STATE {\bf Input:} Oracle access to $k$-local valuations $v_1,\ldots,v_n$.
\STATE {\bf Output:} An allocation $\{S_i\}_{i\in [n]}$
\STATE Initialize $U=[m]$ and $B=[n]$.
\WHILE {$U\ne \emptyset$ and $B\ne \emptyset $}
\STATE Choose $i\in B$ that maximizes $v_{i}(U)$.
\STATE\label{step:1} Allocate $i$ a subset $S_{i}\subset U$ of size $k$ with $v_i(S_i)\ge v_i(U)-\frac{1}{2m}v_i([m])$
\STATE Set $B:=B\setminus\{i\}$ and $U:=U\setminus S_i$
\ENDWHILE
\end{algorithmic}
\end{algorithm}

\begin{lemma}
Algorithm \ref{alg:non_truth} achieves an approximation ratio of $2k$  w.r.t. $k$-local valuations.
\end{lemma}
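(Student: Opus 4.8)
The goal is to show that Algorithm~\ref{alg:non_truth} produces an allocation whose social welfare is within a factor $2k$ of the optimal allocation for $k$-local valuations. The natural approach is a charging argument: fix an optimal allocation $\{O_i\}_{i\in[n]}$, and for each bidder $i$ active at the moment the greedy loop picks it, bound the loss incurred relative to the optimal contributions of the bidders whose optimal bundles overlap $S_i$. The key structural fact to exploit is that each $S_i$ has size exactly $k$, so across the whole run the greedy allocation removes at most $k$ items per iteration; this is what will ultimately produce the factor involving $k$.

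First I would set up notation: let $i_1,i_2,\ldots,i_t$ be the bidders chosen by the algorithm, in order, and let $U_\ell$ be the value of $U$ just before $i_\ell$ is chosen, so $U_1=[m]$ and $U_{\ell+1}=U_\ell\setminus S_{i_\ell}$. Since $|S_{i_\ell}|=k$, we have $|[m]\setminus U_{t+1}|\le kt$, and the loop terminates either because all bidders are exhausted ($t=n$) or because $U$ is empty. The first observation is a ``greedy domination'' inequality: when $i_\ell$ is chosen, it maximizes $v_j(U_\ell)$ over all still-active $j$, so for every bidder $j$ still active at step $\ell$ (in particular every bidder not yet served), $v_j(U_\ell)\le v_{i_\ell}(U_\ell)$. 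Combined with Step~\ref{step:1}, this gives $v_{i_\ell}(S_{i_\ell})\ge v_{i_\ell}(U_\ell)-\frac{1}{2m}v_{i_\ell}([m])\ge v_j(U_\ell)-\frac{1}{2m}v_{i_\ell}([m])$ for all such $j$.

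The main work is to relate the optimal welfare $\mathrm{OPT}=\sum_i v_i(O_i)$ to $\sum_\ell v_{i_\ell}(S_{i_\ell})$. For a bidder $j$ with optimal bundle $O_j$, I would use $k$-locality: there is $T_j\in\binom{[m]}{k}$ with $v_j(O_j\cap T_j)\ge v_j(O_j)-\frac{1}{2m^2}v_j([m])$. The plan is to show that for each $j$, either $j$ was served by the algorithm, or $T_j$ (and hence, up to the small additive slack, its useful value) was mostly consumed by the greedy bundles before $j$ could be picked --- more precisely, that by the time the loop ended, $T_j\setminus U_{\mathrm{last}}$ is nonempty, so $j$'s optimal value is ``charged'' to some greedy step that removed part of $T_j$. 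Since each $T_j$ has size $k$, and each greedy iteration removes $k$ items, a counting argument (each removed item lies in the $T_j$'s of a bounded number of optimal bidders, but actually the cleaner route is to charge each $j$ to the single step at which the first element of $T_j$ was removed, and note each step removes only $k$ items hence is charged by at most $k$ optimal bidders) yields $\mathrm{OPT}\le \sum_\ell \big(k\cdot v_{i_\ell}(S_{i_\ell}) + (\text{error terms})\big)$ up to the factor of $2$ absorbed by the $\frac{1}{2m}$ and $\frac{1}{2m^2}$ slacks summed over at most $n\le m$ bidders. Dividing through gives the $2k$ bound.

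The step I expect to be the main obstacle is making the charging precise when the loop terminates with $B=\emptyset$ but $U\neq\emptyset$: in that case some optimal bidder $j$ may simply never be considered, and one must argue its contribution is nonetheless bounded --- here one uses that $j$ was active throughout, so $v_j([m])=v_j(U_1)\le v_{i_1}(S_{i_1})+\frac{1}{2m}v_{i_1}([m])$ by greedy domination at the first step, and $v_j(O_j)\le v_j([m])$ up to the locality slack, so all such bidders are collectively charged to the first few greedy picks; the bookkeeping to keep this within the $2k$ factor, rather than $2k+o(1)$, is where the additive error terms $\frac{1}{2m^2}v_j([m])$ in the definition of $k$-local (summing to at most $\frac{1}{2m}\mathrm{OPT}$-ish quantities) must be tracked carefully. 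I would also double-check that Step~\ref{step:1} is always executable, i.e.\ that a size-$k$ subset of $U$ with the stated value guarantee exists whenever $|U|\ge k$ --- this follows from $k$-locality applied to $S=U$ together with the fact that we may restrict $T$ to (a superset of size $k$ inside $U$ of) $T\cap U$, but the edge case $|U|<k$ when the last bidder is served needs a separate trivial remark.
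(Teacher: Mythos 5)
Your approach is genuinely different from the paper's. The paper sorts the optimal bidders by decreasing value $v_1(S_1)\ge\cdots\ge v_n(S_n)$, and observes that at the start of iteration $j$ only $(j-1)k$ items have been removed, so (the $S_i$'s being disjoint) at least one of $S_1,\ldots,S_{(j-1)k+1}$ is still wholly inside $U$; greedy domination then gives a per-iteration value of at least $v_{(j-1)k+1}(S_{(j-1)k+1})-\frac{1}{2m}\mathrm{OPT}$, which it lower bounds by $\frac{1}{k}\sum_{i=1}^{k}v_{(j-1)k+i}(S_{(j-1)k+i})-\frac{1}{2m}\mathrm{OPT}$ and sums over at most $m/k$ iterations. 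Note the paper does not need locality of the \emph{optimal} bundles at all in this step --- only the disjointness of the $S_i$'s and the fact that each iteration removes $k$ items --- so it avoids the bookkeeping over the $T_j$'s entirely.

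Your charging argument as described has a concrete gap: you charge each $j$ to the step that removed the first element of $T_j$ and claim each step is hit by at most $k$ bidders because it removes $k$ items. But the locality sets $T_j$ are \emph{not} disjoint (nothing in the definition of $k$-local prevents many bidders from sharing the same $T$), so a single removed item can be the ``first element of $T_j$ removed'' for arbitrarily many $j$'s simultaneously, and the $\le k$-charges-per-step bound fails. To make your route work you need to charge through the sets $O_j\cap T_j$ instead: these are pairwise disjoint (as subsets of the disjoint $O_j$'s), have size $\le k$, and carry essentially all of $v_j(O_j)$ by locality, which is what restores the at-most-$k$-charges-per-step count. Separately, your ``main obstacle'' paragraph has the termination cases backwards: when $B=\emptyset$ every bidder \emph{has} been served; the worrying case is $U=\emptyset$ with $B\ne\emptyset$. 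And the fallback you sketch there --- charging all never-served bidders to the first greedy pick via $v_j([m])\le v_{i_1}(S_{i_1})+\frac{1}{2m}v_{i_1}([m])$ --- charges possibly $\Omega(n)$ bidders to a single step and cannot give a factor anywhere near $2k$; you need the per-step charge cap (via the $O_j\cap T_j$ disjointness, or via the paper's sorting argument) even in this case.
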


Next, we prove the second part of theorem \ref{thm:general valuations}.
We begin by defining a class of valuations:

\begin{definition}
A valuation $v:2^{[m]}\to \mathbb R$ is called {\bf almost single minded} if there is  $T\subset [m]$ such that $\forall S\subset [m],\;\;v(S)=1[T\subset S]+\frac{1}{m^3}|S|$.
\end{definition}

These valuations will serve us in showing existence of a big \textit{menu} for one of the bidders: Given the valuations $v_{-i}$ of all bidders except bidder $i$, the \textit{menu} $R_{v_{-i}}$ of bidder $i$, is defined to be all possible bundles that may be allocated to $i$, that is
\[
R_{v_{-i}}:=\{T\subset [m]|\exists v_i\text{ for which }T\text{ is allocated to $i$ under the valuations } v_i,v_{-i}\}
\]
We recall the {\em taxation principle}: Given valuations $v_{-i}$, there is a nondecreasing function $p_{v_{-i}}:R_{v_{-i}}\to \mathbb R_+$ such that the $i$'th bidder is allocated a set that maximizes $v_i(S)-p_{v_{-i}}(S)$ over all the sets in $R_{v_{-i}}$ and pays $p_{v_{-i}}(S)$.

\begin{definition}[structured menu, \cite{dobzinski2012computational}]
Given valuations $v_{-i}$, a number $0\le k\le m$ and $0\le p$ we define the {\em structured submenu} $\cs(v_{-i},k,p)$ as all the sets $S\subset [m]$ with
\begin{itemize}
\item $S\in R_{v_{-i}}$, $|S|=k$, $p-\frac{1}{m^5}< p_{v_{-i}}(S)\le p$
\item For all $T\in R_{v_{-i}}$ such strictly contains $S$, $p_{v_{-i}}(T)\ge p_{v_{-i}}(S)+\frac{1}{m^3}$
\end{itemize}
\end{definition}

\begin{lemma}\label{lem:shatter_truthful}
Let $M$ be a mechanism with an approximation ratio of $m^{1-\epsilon}$ w.r.t. almost single-minded, $m^{\epsilon}$-local valuations and assume that $n\ge m^{1-\frac{1}{2}\epsilon}$.
There exists a bidder $i$, almost single minded $m^{\epsilon}$-local valuations $v_{-i}$, and numbers $1\le k\le m$ and $0\le p\le 2$ such that $|\cs(v_{-i},k,p)|\ge \frac{2^{\frac{m^{\frac{1}{2}\epsilon}}{2}}}{2m^6}$.
\end{lemma}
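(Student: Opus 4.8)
The plan is to leverage the $m^{1-\epsilon}$-approximation hypothesis to first produce a shattered pair of bidders and items, exactly as in the VCG arguments of Section~\ref{sec:vcg-techniques}, and then to translate the combinatorial shattering statement into a statement about the size of a single bidder's menu. More precisely, I would first restrict attention to instances in which all bidders hold almost single-minded valuations; since an almost single-minded valuation is (up to the $\frac{1}{m^3}|S|$ perturbation, which is negligible against the $\frac{1}{2m^2}v([m])$ slack) both single-minded and $m^\epsilon$-local, such instances are admissible for $M$. Running the same containment argument that underlies Corollary~\ref{thm:shatter_single_minded} on the bank-of-allocations analogue for a general truthful mechanism --- here the role of $\ch$ is played by the collection of allocations $M$ actually outputs across the relevant valuation profiles --- yields a set $B\subset [n]$ of $\tilde\Omega(m^{\epsilon/4})$ bidders and a set $X\subset[m]$ of items such that every partition of $X$ among $B$ is realized by $M$ for some profile of almost single-minded valuations supported on $X$ and $B$.

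Next I would fix all bidders outside $B$ to the zero valuation and, among the bidders in $B$, single out one bidder $i$ by an averaging/pigeonhole argument: since $M$ realizes all partitions of $X$ into $|B|$ parts, for a typical bidder $i\in B$ the number of distinct bundles that $i$ receives as the other bidders' almost single-minded valuations range over all admissible choices must be large --- at least roughly $2^{|X|/|B|}$ many (a partition of $X$ into $|B|$ blocks gives $i$ a block, and different fixings of $v_{-i}$ route different blocks to $i$). This shows $|R_{v_{-i}}|$ is large for a suitable choice of $v_{-i}$ consisting of almost single-minded $m^\epsilon$-local valuations. With $n\ge m^{1-\frac12\epsilon}$ and $|X|$ polynomial in $m$, the exponent $|X|/|B|$ works out to $\gtrsim m^{\frac12\epsilon}$, matching the $2^{m^{\frac12\epsilon}/2}$ in the statement.

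The final and most delicate step is to pass from a large menu $R_{v_{-i}}$ to a large \emph{structured} submenu $\cs(v_{-i},k,p)$. By the taxation principle there is a nondecreasing price function $p_{v_{-i}}$ on $R_{v_{-i}}$; I would bucket the menu elements according to their size $k\in\{0,\dots,m\}$ (at most $m+1$ buckets) and according to the value of $p_{v_{-i}}$ rounded up to the nearest multiple of $\frac{1}{m^5}$ in the range $[0,2]$ (at most $O(m^5)$ buckets, using that prices are bounded by the $O(1)$ maximal welfare of almost single-minded valuations). A bucket of size $\ge |R_{v_{-i}}|/(2m^6)$ survives. Within a fixed $(k,p)$-bucket, the first defining condition of $\cs$ is automatic; the second condition --- every $T\in R_{v_{-i}}$ strictly containing $S$ has $p_{v_{-i}}(T)\ge p_{v_{-i}}(S)+\frac{1}{m^3}$ --- is where I expect the real work, and is the main obstacle. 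The point is that the almost single-minded structure forces price \emph{jumps}: when the extra $\frac{1}{m^3}|S|$ term makes the relevant bidder strictly prefer a larger set, the price must rise by at least $\frac{1}{m^3}$ to restore indifference; one shows that within a size-$k$, price-window-$\frac{1}{m^5}$ bucket no menu element can be a strict subset of another up to such a jump, so \emph{all} surviving elements satisfy the second condition. Making this last implication precise --- quantifying the interaction between the $\frac{1}{m^3}$ additive term in the valuations, the $\frac{1}{m^5}$ price window, and the $\frac{1}{m^3}$ required jump --- is the crux; once it is in place, the surviving bucket is exactly a structured submenu of the required size, and choosing $v_{-i}$ to be the particular almost single-minded $m^\epsilon$-local profile produced in the previous step completes the proof.
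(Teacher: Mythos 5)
Your plan diverges from the paper's actual proof in a way that creates a real gap in the middle. The paper does \emph{not} prove this lemma by first producing a shattered pair of bidders and items and then ``reading off'' a large menu; instead it uses Dobzinski's direct hardness approach. Concretely, the paper defines $\cs_{v_{-i}}$ to be the bundles that bidder $i$ can win by bidding some almost single-minded valuation against $v_{-i}$, cites a claim from~\cite{dobzinski2011impossibility} that such bundles already satisfy the structured-submenu price-jump condition and have prices in $[0,2]$ (so that the $(k,p)$-bucketing step you describe is just a pigeonhole over $m\cdot 2m^5$ buckets), and then proves that for some $v_{-i}$, $|\cs_{v_{-i}}|$ is large by contradiction: assume all $|\cs_{v_{-i}}|$ are small, draw $n$ disjoint random demand sets $S_1,\dots,S_n$ of size $m/2n$, and show by a union bound over the (few) sets in $\cs_{v_{-n}}$ that with positive probability no bidder gets its demand with a bundle of size $\le m/4$; this caps the achievable welfare at $\le 5$ against an optimum of $\ge m^{1-\epsilon/2}$, contradicting the assumed ratio.

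The gap in your proposal is in the passage from ``shattering'' to ``a single large menu $R_{v_{-i}}$.'' You invoke a shattering statement of the form ``every partition of $X$ among $B$ is realized by $M$ for \emph{some} profile of almost single-minded valuations,'' and then assert that for a typical bidder $i$ this forces $|R_{v_{-i}}|$ to be large ``as the other bidders' valuations range over all admissible choices.'' But $R_{v_{-i}}$ is the menu at a \emph{fixed} $v_{-i}$; the shattering statement only guarantees that each partition is realized by some profile, and the profiles realizing different partitions can differ in all coordinates, including $v_{-i}$. Counting distinct bundles $i$ receives as $v_{-i}$ varies does not give a lower bound on $|R_{v_{-i}}|$ for any one $v_{-i}$. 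There is also a prior difficulty: Corollary~\ref{thm:shatter_single_minded} and Theorem~\ref{thm:shatter_single_minded_0} are proved for a fixed bank of allocations (MIR); for a general truthful mechanism $M$ the ``collection of allocations $M$ actually outputs'' is not a bank over which $M$ optimizes, so the shattering conclusion in the strong form you need (a pair $(X,B)$ with \emph{all} partitions realizable, uniformly in the other bidders' valuations) does not follow from the containment property alone. The paper avoids both of these issues entirely by never invoking the shattering machinery in this lemma; the shattering (via the Sauer--Shelah Lemma applied to the structured submenu) is used only \emph{after} Lemma~\ref{lem:shatter_truthful}, in the proof of Lemma~\ref{vertex cover}. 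Your final bucketing step is essentially right and matches the paper, except that you try to re-derive the price-jump property from scratch, whereas the paper attributes it to an existing claim of~\cite{dobzinski2011impossibility} about bundles obtained by almost single-minded bids.
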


We proceed by applying the Sauer-Shelah Lemma, which provides us with a subset of items $X\subseteq[m]$ of size
$|X|=\Omega\left(m^{\frac{\epsilon}{4}}\right)$ that is shattered by some $\cs(v_{-i},k,p)$. The fact that $X$ is shattered by a structured submenu with the same level of prices, allows us to solve a hard computational problem by considering a valuation that relies on these prices. In order to define such valuation, we extend the prices from $R_{v_{-i}}$ to all subsets of $[m]$. We do so as follows:
Given  $T\subset[m]$,
set $p_{v_{-i}}(T):=\min_{T\subset S,\, S\in R_{v_{-i}}}p_{v_{-i}}(S)$.
We will need the following two facts:
\begin{lemma}[\cite{dobzinski2012computational}]\label{lem:truthful_comp}
Let $M$ be an efficient truthful mechanism and let $v_{-i}$ be some valuations with polynomial description. Then membership in $\cs(v_{-i},k,p)$ and calculation of (the extended) $p_{v_{-i}}$ can be realized by a circuit of polynomial size.
\end{lemma}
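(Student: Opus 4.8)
The plan is to let the circuit use the mechanism $M$ as its only substantial gadget: on a given query we run $M$ a constant or linear number of times with the reported valuations of all bidders other than $i$ hard-wired to $v_{-i}$, and bidder $i$ fed a simple ``probe'' valuation, and we read the answer off $M$'s output allocation and VCG payments. Since $M$ is efficient it is computed by a circuit of size $\poly(m)$ (with $v_{-i}$, which has polynomial description, hard-wired), and each probe valuation will itself be an $O(m)$-size circuit, so the resulting object is a $\poly(m)$ circuit; the numbers involved (valuations and VCG prices) have polynomially many bits, so all comparisons are also of polynomial size. Throughout, $k$, $p$ and $v_{-i}$ are fixed parameters.

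The first building block is an evaluator for the extended price. Fix a constant $H$ exceeding the range of prices in the relevant valuation class (in our setting valuations, and hence prices, are bounded by $O(1)$), and, given $T\subseteq[m]$, let bidder $i$ report $v_i^T(S):=H\cdot 1[T\subseteq S]$. By the taxation principle $M$ allocates $i$ a bundle maximizing $v_i^T(\cdot)-p_{v_{-i}}(\cdot)$ over $R_{v_{-i}}$: if some menu bundle contains $T$, this forces the bundle $S^\ast$ allotted to $i$ to be a minimum-price bundle with $T\subseteq S^\ast$, so $i$'s payment equals $\min\{p_{v_{-i}}(S):T\subseteq S\in R_{v_{-i}}\}$, i.e. the extended price $p_{v_{-i}}(T)$; if the bundle allotted to $i$ does not contain $T$, then no menu bundle contains $T$ and $p_{v_{-i}}(T)=\infty$. (Ties among minimum-price superbundles of $T$ are harmless — they all carry the same payment.) Thus ``on input $T$, output $p_{v_{-i}}(T)$'' is realized by a single call to $M$.

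It remains to reduce membership in $\cs(v_{-i},k,p)$ to polynomially many extended-price evaluations. The conditions $|S|=k$ and $p-\tfrac1{m^5}<p_{v_{-i}}(S)\le p$ are checked directly once $p_{v_{-i}}(S)$ is known (for $S\in R_{v_{-i}}$ the extended price agrees with $p_{v_{-i}}(S)$, since $p_{v_{-i}}$ is nondecreasing). The two remaining requirements — $S\in R_{v_{-i}}$, and $p_{v_{-i}}(U)\ge p_{v_{-i}}(S)+\tfrac1{m^3}$ for every $U\in R_{v_{-i}}$ with $U\supsetneq S$ — are handled jointly via the identity $\min\{p_{v_{-i}}(U):U\in R_{v_{-i}},\,U\supsetneq S\}=\min_{x\notin S}p_{v_{-i}}(S\cup\{x\})$ (monotonicity of $p_{v_{-i}}$), which yields the equivalence: $S\in R_{v_{-i}}$ and the gap condition hold $\iff$ $p_{v_{-i}}(S)<\infty$ and $\min_{x\notin S}p_{v_{-i}}(S\cup\{x\})\ge p_{v_{-i}}(S)+\tfrac1{m^3}$. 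Hence membership in $\cs(v_{-i},k,p)$ is decided by evaluating the extended price at $S$ and at the at most $m$ sets $S\cup\{x\}$, followed by $O(1)$ arithmetic comparisons, all composed with the single-query circuit above.

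The crux of the argument is this last reformulation: replacing the universal quantifier ``for every menu bundle strictly containing $S$'' by a minimum over the $m$ one-element extensions of $S$, while simultaneously certifying $S\in R_{v_{-i}}$ — which hinges on the monotonicity of $p_{v_{-i}}$ and the precise definition of the extended price (if the minimizer witnessing $p_{v_{-i}}(S)$ were a proper superset of $S$, it would already violate the gap inequality, forcing $S$ itself into the menu). The only other points needing care are mechanical: choosing $H$ above the price range, detecting the empty-extended-menu case from $M$'s output bundle, checking that the probe valuations $v_i^T$ lie in the class $M$ operates on, and bounding the total circuit size; none of these presents a real difficulty.
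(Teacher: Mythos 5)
The paper does not prove this lemma---it cites it from \cite{dobzinski2012computational}---so there is no in-paper proof to compare against. Evaluating your argument on its own terms, the skeleton is appealing: probe $M$ as a black box with $v_{-i}$ hard-wired, read the extended price off the VCG payment, and reduce $\cs$-membership to $O(m)$ extended-price queries via the one-element-extension identity. That identity, and the joint certification of $S\in R_{v_{-i}}$ with the gap condition, are correct; I checked both directions of the equivalence.

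However, the step you dismiss as ``mechanical''---\emph{checking that the probe valuations $v_i^T$ lie in the class $M$ operates on}---is in fact where the argument breaks. The taxation principle only constrains $M$'s output on valuations in the class $\cf$ over which $M$ is truthful; on any other input $M$ is just some polynomial-time algorithm whose allocation and payment need not be the utility-maximizer over $R_{v_{-i}}$ nor $p_{v_{-i}}(\cdot)$. Your probe $v_i^T(S)=H\cdot 1[T\subseteq S]$ is a scaled single-minded valuation. It is not submodular (the class in \cite{dobzinski2012computational}), and in this paper's use (Lemma~\ref{vertex cover}) the relevant class is $m^\epsilon$-local almost single-minded valuations: $v_i^T$ is not almost single-minded (wrong scale, missing the $\tfrac{1}{m^3}|S|$ term), and is not $m^\epsilon$-local at all when $|T|>m^\epsilon$. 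Since Lemma~\ref{vertex cover_apx} needs $p_{v_{-i}}(S)$ evaluated on sets of size $>k$ with $k$ not bounded by $m^\epsilon$, the probe is genuinely out of class there, so the payment you read off is uncontrolled and your circuit computes an unrelated quantity. Nor is the fix a simple substitution: if you replace $v_i^T$ by the in-class probe $1[T\subseteq S]+\tfrac{1}{m^3}|S|$, the extra additive term means the maximizer of $v_i(S)-p_{v_{-i}}(S)$ is no longer a minimum-price superbundle of $T$, so you do not recover the extended price. Designing a probe that is simultaneously in $\cf$ and forces $M$ onto a minimum-price superbundle of $T$ is the real content of the lemma, and it is missing here.

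A second, smaller point worth flagging: you implicitly assume $\emptyset\in R_{v_{-i}}$ with $p_{v_{-i}}(\emptyset)=0$ (individual rationality / normalization) when detecting the ``empty extended menu'' case from the allocated bundle not containing $T$. This is standard, but should be stated, since otherwise the argument ``utility $\le 0$ from non-$T$-containing bundles while $>0$ from $T$-containing ones'' does not by itself tell you what $M$ outputs when no bundle contains $T$.
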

We can now prove the second part of Theorem~\ref{thm:general valuations}.

\begin{lemma}\label{vertex cover}
No efficient deterministic truthful mechanism provides an approximation ratio of $m^{1-\epsilon}$ w.r.t. $m^{\epsilon}$-local valuations unless $NP\subseteq P/poly$.
\end{lemma}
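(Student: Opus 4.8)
The plan is to assemble the pieces already developed in this section into a reduction from a graph problem whose inapproximability is known under $NP\subseteq P/poly$. Starting from an efficient truthful mechanism $M$ with approximation ratio $m^{1-\epsilon}$ w.r.t. $m^{\epsilon}$-local valuations, I would first invoke Lemma~\ref{lem:shatter_truthful} (noting that almost single minded valuations are in particular $m^{\epsilon}$-local, so the hypothesis of the lemma is met, and that we may assume $n\ge m^{1-\epsilon/2}$ by padding with zero bidders) to obtain a bidder $i$, almost single minded $m^{\epsilon}$-local valuations $v_{-i}$, and parameters $k,p$ with $|\cs(v_{-i},k,p)|\ge 2^{m^{\epsilon/2}/2}/(2m^6)$. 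Applying the Sauer--Shelah Lemma to this structured submenu — viewed as a set system on $[m]$ — yields a shattered set $X\subseteq[m]$ with $|X|=\Omega(m^{\epsilon/4})$, exactly as stated in the paragraph preceding the lemma.

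Next I would exploit the defining properties of a structured submenu: every $S\in\cs(v_{-i},k,p)$ lies in the menu $R_{v_{-i}}$, has the same size $k$, and has price in the narrow window $(p-1/m^5,\,p]$, while any set properly containing such an $S$ (within $R_{v_{-i}}$) costs at least $1/m^3$ more. The idea is to design bidder $i$'s valuation $v_i$ so that maximizing $v_i(S)-p_{v_{-i}}(S)$ over the (extended) menu forces the mechanism to pick, among the shattered bundles, one that solves an optimization problem on $X$. Concretely, given a graph $G$ on vertex set $X$, I would let $v_i$ reward independent sets (or vertex covers) in $G$: because $X$ is shattered by $\cs(v_{-i},k,p)$, \emph{every} subset of $X$ appears as $T\cap X$ for some $T$ in the submenu, all at essentially the same price, so the price term is (up to $o(1/m^3)$) constant across all these options and the mechanism's choice is governed purely by $v_i$. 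The $\frac{1}{m^3}$ gap for strict supersets, together with the additive $\frac{1}{m^3}|S|$ term folded into $v_i$, ensures the mechanism cannot cheat by grabbing a larger bundle. Since $M$ is efficient and truthful, Lemma~\ref{lem:truthful_comp} lets us compute membership in $\cs(v_{-i},k,p)$ and the extended prices by polynomial-size circuits, so the whole reduction is a P/poly reduction; hence if $M$ were efficient we could solve (say) the gap version of maximum independent set / minimum vertex cover on $X$, which is NP-hard, giving $NP\subseteq P/poly$.

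The main obstacle I anticipate is \textbf{the bookkeeping around the extended prices and the $k$-locality constraint}. One has to check that extending $p_{v_{-i}}$ to all of $2^{[m]}$ via $p_{v_{-i}}(T)=\min_{T\subset S\in R_{v_{-i}}}p_{v_{-i}}(S)$ does not create spurious cheap large bundles that let the mechanism escape the shattered region; this is precisely what the ``strict superset costs $\ge 1/m^3$ more'' clause in the definition of a structured submenu is there to prevent, and the additive $\frac{1}{m^3}|S|$ perturbation in the almost-single-minded valuations must be tuned so that the optimal bundle for $v_i$ is forced to sit inside $X$ and have size exactly $k$. Getting the scales to line up — $1/m^5$ price resolution, $1/m^3$ superset gap, $1/m^3$ additive bonus, $1/(2m^2)$ in the $k$-locality definition — is the delicate part; the combinatorial hardness input (NP-hardness of gap vertex cover / set packing on a universe of polynomial size) is standard once the shattered instance is in hand. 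The rest (padding to make $n$ large enough, absorbing $\log m$ factors into $\Omega(m^{\epsilon/4})$, and the observation that $m^{\epsilon/4}$ items suffice because vertex cover is NP-hard to approximate on instances of every polynomial size) is routine.
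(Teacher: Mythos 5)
Your plan follows the paper's own route step by step: invoke Lemma~\ref{lem:shatter_truthful} to get a large structured submenu, apply Sauer--Shelah to shatter a set $U$ of size $\Omega(m^{\epsilon/4})$, then use the taxation principle together with Lemma~\ref{lem:truthful_comp} to embed a hard graph problem in the shattered region. The paper also reduces from VERTEX-COVER. So the architecture matches.

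The gap is in the part you flag as ``the delicate part'': the concrete construction of $v_i$. This is not routine bookkeeping --- it is the heart of the proof --- and the sketch you give of it is off in two places. First, you lean on the claim that the price term is ``essentially constant'' over the shattered options and on the $1/m^3$ \emph{strict-superset} gap to keep the mechanism honest. But the structured-submenu price bound controls only the sets \emph{inside} $\cs(v_{-i},k,p)$ and their strict supersets; it says nothing about cheaper \emph{proper subsets} of shattered bundles or other sets of size $\ne k$ in $R_{v_{-i}}$, which may have arbitrarily small prices. A perturbation on the scale of $1/m^3$ cannot make the mechanism prefer the expensive shattered bundles over those. The paper handles this by giving $v_i$ a large flat reward ($6m^2$) for any $S$ whose trace $S\cap U$ is a vertex cover; this swamps all price variation (prices live in $[0,2]$) so that the competition is only among vertex-cover-trace bundles. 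A second, small piece $v_i^2\in\{0,3\}$ then steers the mechanism toward $\cs$-sets of trace size exactly $t$ while remaining monotone, by also rewarding bundles of size $>k$ with extended price $>p$ (those cannot win because they cost strictly more). Second, the $\frac{1}{m^3}|S|$ additive term is not something to ``fold into $v_i$'': it lives in the almost-single-minded valuations $v_{-i}$, and its job is to establish the strict-superset property of the submenu (so Lemma~\ref{lem:shatter_truthful} goes through), not to shape bidder $i$'s optimization.

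Finally, you also need to argue that the resulting $v_i$ is nondecreasing, $m^\epsilon$-local, and (via Lemma~\ref{lem:truthful_comp}) polynomial-size, and that the mechanism's output unambiguously decides whether a size-$t$ cover exists. None of that is automatic, and it is exactly where the $6m^2$ scaling, the $v_i^2$ clause for big expensive bundles, and the choice $|U|\le m^{\epsilon/4}\le m^\epsilon$ are used. Spelling out this construction is the missing step.
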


Our proof of Lemma~\ref{vertex cover} relies on a reduction from VERTEX-COVER: Given a graph whose vertices are the items in the shattered sets, we define the valuation $v_i$ such that $v_i$ assigns high values to all vertex covers but, through the menu prices, makes covers of size $\geq t$ unattractive. We then leverage the taxation principle to show that if there is a cover $C$ of size $t$ of $U$, the mechanism will allocate bidder $i$ a set $T$ such that $T\cap U$ is a cover of size $t$.

\subsubsection*{Acknowledgements}
Amit Daniely's research was supported, in part, by the Google Europe Fellowship in Learning Theory. We thank Subhash Khot for referring us to~\cite{bansal2010inapproximability} and pointing out that the problem studied there can be reduced to $k$-packing promise problem (see the proof of Theorem~\ref{single minded duplicates}).
We thank Guy Kindler and Nati Linial for valuable discussions.

\bibliography{bib}

\begin{thebibliography}{30}
\providecommand{\natexlab}[1]{#1}
\providecommand{\url}[1]{\texttt{#1}}
\expandafter\ifx\csname urlstyle\endcsname\relax
  \providecommand{\doi}[1]{doi: #1}\else
  \providecommand{\doi}{doi: \begingroup \urlstyle{rm}\Url}\fi

\bibitem[Alon et~al.(1997)Alon, Ben-David, Cesa-Bianchi, and
  Haussler]{AlonBeCsHa97}
N.~Alon, S.~Ben-David, N.~Cesa-Bianchi, and D.~Haussler.
\newblock Scale-sensitive dimensions, uniform convergence, and learnability.
\newblock \emph{Journal of the ACM (JACM)}, 44\penalty0 (4):\penalty0 615--631,
  1997.

\bibitem[Bansal and Khot(2010)]{bansal2010inapproximability}
Nikhil Bansal and Subhash Khot.
\newblock Inapproximability of hypergraph vertex cover and applications to
  scheduling problems.
\newblock In \emph{Automata, Languages and Programming}, pages 250--261.
  Springer, 2010.

\bibitem[Bartal et~al.(2003)Bartal, Gonen, and Nisan]{bartal2003incentive}
Yair Bartal, Rica Gonen, and Noam Nisan.
\newblock Incentive compatible multi unit combinatorial auctions.
\newblock In \emph{Proceedings of the 9th conference on Theoretical aspects of
  rationality and knowledge}, pages 72--87. ACM, 2003.

\bibitem[Ben-David et~al.(1995)Ben-David, Cesa-Bianchi, Haussler, and
  Long]{Ben-DavidCeHaLo95}
S.~Ben-David, N.~Cesa-Bianchi, D.~Haussler, and P.~Long.
\newblock Characterizations of learnability for classes of
  $\{0,\ldots,n\}$-valued functions.
\newblock \emph{Journal of Computer and System Sciences}, 50:\penalty0 74--86,
  1995.

\bibitem[Buchfuhrer et~al.(2010{\natexlab{a}})Buchfuhrer, Dughmi, Fu,
  Kleinberg, Mossel, Papadimitriou, Schapira, Singer, and
  Umans]{buchfuhrer2010inapproximability}
Dave Buchfuhrer, Shaddin Dughmi, Hu~Fu, Robert Kleinberg, Elchanan Mossel,
  Christos Papadimitriou, Michael Schapira, Yaron Singer, and Chris Umans.
\newblock Inapproximability for vcg-based combinatorial auctions.
\newblock In \emph{Proceedings of the Twenty-First Annual ACM-SIAM Symposium on
  Discrete Algorithms}, pages 518--536. Society for Industrial and Applied
  Mathematics, 2010{\natexlab{a}}.

\bibitem[Buchfuhrer et~al.(2010{\natexlab{b}})Buchfuhrer, Schapira, and
  Singer]{buchfuhrer2010computation}
Dave Buchfuhrer, Michael Schapira, and Yaron Singer.
\newblock Computation and incentives in combinatorial public projects.
\newblock In \emph{Proceedings of the 11th ACM conference on Electronic
  commerce}, pages 33--42. ACM, 2010{\natexlab{b}}.

\bibitem[Buchfuhrer and Umans(2009)]{buchfuhrer2009limits}
David Buchfuhrer and Christopher Umans.
\newblock Limits on the social welfare of maximal-in-range auction mechanisms.
\newblock In \emph{Electronic Colloquium on Computational Complexity (ECCC)},
  volume~16, page~68, 2009.

\bibitem[Daniely and Shalev-Shwartz(2014)]{DanielySh14}
A.~Daniely and S.~Shalev-Shwartz.
\newblock Optimal learners to multiclass problems.
\newblock In \emph{COLT}, 2014.

\bibitem[Daniely et~al.(2011)Daniely, Sabato, Ben-David, and
  Shalev-Shwartz]{DanielySaBeSh11}
A.~Daniely, S.~Sabato, S.~Ben-David, and S.~Shalev-Shwartz.
\newblock Multiclass learnability and the erm principle.
\newblock In \emph{COLT}, 2011.

\bibitem[Dobzinski(2011)]{dobzinski2011impossibility}
Shahar Dobzinski.
\newblock An impossibility result for truthful combinatorial auctions with
  submodular valuations.
\newblock In \emph{Proceedings of the 43rd annual ACM symposium on Theory of
  computing}, pages 139--148. ACM, 2011.

\bibitem[Dobzinski and Dughmi(2009)]{dobzinski2009power}
Shahar Dobzinski and Shaddin Dughmi.
\newblock On the power of randomization in algorithmic mechanism design.
\newblock In \emph{Foundations of Computer Science, 2009. FOCS'09. 50th Annual
  IEEE Symposium on}, pages 505--514. IEEE, 2009.

\bibitem[Dobzinski and Nisan(2007)]{dobzinski2007limitations}
Shahar Dobzinski and Noam Nisan.
\newblock Limitations of vcg-based mechanisms.
\newblock In \emph{Proceedings of the thirty-ninth annual ACM symposium on
  Theory of computing}, pages 338--344. ACM, 2007.

\bibitem[Dobzinski and Schapira(2005)]{dobzinski2005optimal}
Shahar Dobzinski and Michael Schapira.
\newblock Optimal upper and lower approximation bounds for k-duplicates
  combinatorial auctions.
\newblock Technical report, Citeseer, 2005.

\bibitem[Dobzinski and Vondr{\'a}k(2012)]{dobzinski2012computational}
Shahar Dobzinski and Jan Vondr{\'a}k.
\newblock The computational complexity of truthfulness in combinatorial
  auctions.
\newblock In \emph{Proceedings of the 13th ACM Conference on Electronic
  Commerce}, pages 405--422. ACM, 2012.

\bibitem[Dobzinski et~al.(2005)Dobzinski, Nisan, and
  Schapira]{dobzinski2005approximation}
Shahar Dobzinski, Noam Nisan, and Michael Schapira.
\newblock Approximation algorithms for combinatorial auctions with
  complement-free bidders.
\newblock In \emph{Proceedings of the thirty-seventh annual ACM symposium on
  Theory of computing}, pages 610--618. ACM, 2005.

\bibitem[Dobzinski et~al.(2012)Dobzinski, Nisan, and Schapira]{DobzinskiNS12}
Shahar Dobzinski, Noam Nisan, and Michael Schapira.
\newblock Truthful randomized mechanisms for combinatorial auctions.
\newblock \emph{J. Comput. Syst. Sci.}, 78\penalty0 (1):\penalty0 15--25, 2012.

\bibitem[Dughmi et~al.(2009)Dughmi, Fu, and Kleinberg]{dughmi2009amplified}
Shaddin Dughmi, Hu~Fu, and Robert Kleinberg.
\newblock Amplified hardness of approximation for vcg-based mechanisms.
\newblock \emph{arXiv preprint arXiv:0907.1948}, 2009.

\bibitem[Haussler and Long(1995)]{haussler1995generalization}
David Haussler and Philip~M Long.
\newblock A generalization of sauer's lemma.
\newblock \emph{Journal of Combinatorial Theory, Series A}, 71\penalty0
  (2):\penalty0 219--240, 1995.

\bibitem[Holzman et~al.(2004)Holzman, Kfir-Dahav, Monderer, and
  Tennenholtz]{Holzman2004104}
Ron Holzman, Noa Kfir-Dahav, Dov Monderer, and Moshe Tennenholtz.
\newblock Bundling equilibrium in combinatorial auctions.
\newblock \emph{Games and Economic Behavior}, 47\penalty0 (1):\penalty0 104 --
  123, 2004.
\newblock ISSN 0899-8256.
\newblock \doi{http://dx.doi.org/10.1016/S0899-8256(03)00184-2}.
\newblock URL
  \url{http://www.sciencedirect.com/science/article/pii/S0899825603001842}.

\bibitem[Lehmann et~al.(2002)Lehmann, O{\'c}allaghan, and
  Shoham]{lehmann2002truth}
Daniel Lehmann, Liadan~Ita O{\'c}allaghan, and Yoav Shoham.
\newblock Truth revelation in approximately efficient combinatorial auctions.
\newblock \emph{Journal of the ACM (JACM)}, 49\penalty0 (5):\penalty0 577--602,
  2002.

\bibitem[Mossel et~al.(2009)Mossel, Papadimitriou, Schapira, and
  Singer]{mossel2009vc}
Elchanan Mossel, Christos Papadimitriou, Michael Schapira, and Yaron Singer.
\newblock Vc v. vcg: Inapproximability of combinatorial auctions via
  generalizations of the vc dimension.
\newblock \emph{arXiv preprint arXiv:0905.1995}, 2009.

\bibitem[Natarajan(1989)]{Natarajan89b}
B.~K. Natarajan.
\newblock On learning sets and functions.
\newblock \emph{Mach. Learn.}, 4:\penalty0 67--97, 1989.

\bibitem[Nisan(2002)]{nisan2002communication}
Noam Nisan.
\newblock The communication complexity of approximate set packing and covering.
\newblock In \emph{Automata, Languages and Programming}, pages 868--875.
  Springer, 2002.

\bibitem[Nisan(2007)]{nisan2007algorithmic}
Noam Nisan.
\newblock \emph{Algorithmic game theory}.
\newblock Cambridge University Press, 2007.

\bibitem[Nisan and Segal(2006)]{nisan2006communication}
Noam Nisan and Ilya Segal.
\newblock The communication requirements of efficient allocations and
  supporting prices.
\newblock \emph{Journal of Economic Theory}, 129\penalty0 (1):\penalty0
  192--224, 2006.

\bibitem[Papadimitriou et~al.(2008)Papadimitriou, Schapira, and
  Singer]{papadimitriou2008hardness}
Christos Papadimitriou, Michael Schapira, and Yaron Singer.
\newblock On the hardness of being truthful.
\newblock In \emph{Foundations of Computer Science, 2008. FOCS'08. IEEE 49th
  Annual IEEE Symposium on}, pages 250--259. IEEE, 2008.

\bibitem[Sauer(1972)]{sauer1972density}
Norbert Sauer.
\newblock On the density of families of sets.
\newblock \emph{Journal of Combinatorial Theory, Series A}, 13\penalty0
  (1):\penalty0 145--147, 1972.

\bibitem[Shelah(1972)]{shelah1972combinatorial}
Saharon Shelah.
\newblock A combinatorial problem; stability and order for models and theories
  in infinitary languages.
\newblock \emph{Pacific J. Math}, 41\penalty0 (1):\penalty0 247--261, 1972.

\bibitem[Steele(1978)]{steele1978existence}
J~Michael Steele.
\newblock Existence of submatrices with all possible columns.
\newblock \emph{Journal of Combinatorial Theory, Series A}, 24\penalty0
  (1):\penalty0 84--88, 1978.

\bibitem[Vapnik(1995)]{Vapnik95}
V.N. Vapnik.
\newblock \emph{The Nature of Statistical Learning Theory}.
\newblock Springer, 1995.

\end{thebibliography}

\section*{Appendix}

\appendix

\section{Proof of Theorem \ref{thm:sauer_gen}}\label{apx_sauer_gen}

Our proof of Theorem \ref{thm:sauer_gen} (the generalized Sauer-Shelah Lemma) relies on a careful application of techniques used for proving previous bounds (e.g., \cite{steele1978existence,Natarajan89b,haussler1995generalization}). We explain the interesting connections between this VC machinery and so called ``approximate-partitions'' in Section~\ref{sec:partitions}. We use the following notation:

\begin{definition}
For $m\ge d\ge 0$ and $n\ge k\ge 2$ let $M_{n,k}(d,m)$ be the maximal cardinality of a class $\ch\subset [n]^{[m]}$ with $\Dim_k(\ch)\le d$.
\end{definition}
Theorem \ref{thm:sauer_gen} can be restated as
\[
M_{n,k}(d,m)\le \sum_{i=0}^{d} \binom{m}{i} (k-1)^{m-i}\binom{n}{k}^{i}~.
\]
We use the following recursive inequality:
\begin{lemma}\label{lem:recursive}
For every $2\le k\le n$ and $m>d>0$ we have
\[
M_{n,k}(d,m)\le (k-1)\cdot M_{n,k}(d,m-1)+\binom{n}{k}\cdot M_{n,k}(d-1,m-1)
\]
\end{lemma}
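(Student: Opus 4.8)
The plan is to prove the recursive inequality $M_{n,k}(d,m)\le (k-1)M_{n,k}(d,m-1)+\binom{n}{k}M_{n,k}(d-1,m-1)$ by the standard ``element-deletion'' (shifting) argument adapted to $k$-shattering, and then to observe that Theorem~\ref{thm:sauer_gen} follows from this recursion together with the easy base cases by induction on $m$. For the recursion itself, fix $\ch\subset[n]^{[m]}$ with $\Dim_k(\ch)\le d$ of maximal cardinality, single out the last coordinate $m$, and consider the ``projection'' map $\pi:\ch\to[n]^{[m-1]}$ that forgets the value at $m$. The image $\ch':=\pi(\ch)$ lives in $[n]^{[m-1]}$, and the fibers of $\pi$ have size at most $n$; the content is to do the bookkeeping so that fibers of size $\ge 2$ are charged against the $M_{n,k}(d-1,m-1)$ term.

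The key step is to introduce, for each $(k-1)$-subset $W\subset[n]$, the ``$W$-collapsed'' subcollection
\[
\ch_W:=\{h'\in\ch' : \text{at least }|W^c|=n-(k-1)\text{ values are realizable, i.e. for each }y\notin W\text{ there is }h\in\ch\text{ with }\pi(h)=h', h(m)=y\}.
\]
More precisely, one wants: a function $h'\in[n]^{[m-1]}$ belongs to a ``doubly covered'' set $\cg$ if the fiber $\pi^{-1}(h')$ has size $\ge k$, i.e.\ there are at least $k$ distinct values $y$ with $h'\cup\{m\mapsto y\}\in\ch$. First one bounds $|\ch|\le (k-1)|\ch'|+\binom{n}{k}\,|\cg|$: assign to each $h'\in\ch'$ up to $k-1$ of its fiber elements ``for free'', and the remaining fiber elements (which exist only when $h'\in\cg$, and then number at most $n-(k-1)\le\binom{n}{k}\cdot 1$ — actually one must be slightly more careful and bound the excess fiber size by $n-k+1$, absorbing it into $\binom{n}{k}$ since $\binom{n}{k}\ge n-k+1$ for $k\ge 2$, e.g.\ $\binom{n}{k}\ge\binom{n}{1}\ge n>n-k+1$ when $k\le n$). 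This gives $|\ch|\le (k-1)|\ch'|+\binom{n}{k}|\cg|$. Since $\ch'\subset[n]^{[m-1]}$ and $\Dim_k(\ch')\le\Dim_k(\ch)\le d$ (restricting coordinates cannot increase $k$-dimension), we get $|\ch'|\le M_{n,k}(d,m-1)$.

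The heart of the argument — and the step I expect to be the main obstacle — is showing $\Dim_k(\cg)\le d-1$, so that $|\cg|\le M_{n,k}(d-1,m-1)$. Suppose $A\subset[m-1]$ is $k$-shattered by $\cg$, witnessed by sets $\{Y_a\}_{a\in A}$ of size $k$. I claim $A\cup\{m\}\subset[m]$ is $k$-shattered by $\ch$, using $Y_m:=$ any fixed $k$-subset of $[n]$ (for instance, one containing enough realizable values): given any selection $(y_a)_{a\in A}$ with $y_a\in Y_a$ and any $y_m\in Y_m$, pick $h'\in\cg$ with $h'|_A=(y_a)$; since $h'\in\cg$, the fiber over $h'$ has $\ge k$ elements, so if $Y_m$ is chosen inside the set of realizable values this covers every $y_m\in Y_m$ — hence there is $h\in\ch$ with $h|_A=(y_a)$ and $h(m)=y_m$. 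The subtlety is that the set of realizable values may depend on $h'$, so one cannot fix a single $Y_m$ globally; the fix is to first pass, by pigeonhole, from $\cg$ to a subcollection $\cg^\star$ for which a common $k$-set $Y_m$ of realizable values works for all of $A$'s witnesses simultaneously, at the cost of a constant factor — but since we only need $\Dim_k(\cg)\le d-1$ (a dimension statement, not a cardinality one) we are free to restrict attention to any single choice of selection and argue pointwise, so in fact a cleaner route is: for a $k$-shattered $A$ in $\cg$ with witnesses $\{Y_a\}$, just observe directly that for each full selection $(y_a)_{a\in A}$ the corresponding $h'\in\cg$ has a fiber of size $\ge k$ in $\ch$, pick $Y_m$ to be $k$ of those fiber values — this shows each $(A,\{Y_a\}\cup\{Y_m\})$-selection extends, but $Y_m$ varies; to get a genuine $k$-shattering of $A\cup\{m\}$ one does need one $Y_m$ for all selections, which again is pigeonhole over the $\le\binom{n}{k}$ choices combined with the maximality/structure of $\cg$. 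I would resolve this by defining $\cg$ slightly differently — as the set of $h'$ for which some fixed $k$-subset is realizable — and summing over that fixed subset, exactly paralleling Haussler--Long \cite{haussler1995generalization}; this is where the $\binom{n}{k}^i$ factor in the final bound comes from. Granting the recursion, Theorem~\ref{thm:sauer_gen} follows by induction on $m$: the base case $d=0$ or $m=d$ is immediate ($\Dim_k(\ch)=0$ forces $|\ch|\le(k-1)^m$ since no single coordinate can take $k$ distinct values, matching the $i=0$ term), and the inductive step substitutes the two bounds $M_{n,k}(d,m-1)$ and $M_{n,k}(d-1,m-1)$ into the recursion and checks the binomial identity $\binom{m}{i}=\binom{m-1}{i}+\binom{m-1}{i-1}$ term by term; the coarser bound $|X|^{\Dim_k(\ch)}|Y|^{k\Dim_k(\ch)}(k-1)^{|X|}$ then follows by bounding each $\binom{m}{i}\le m^i=m^{\Dim_k}$-ish and $\binom{n}{k}^i\le n^{ki}\le n^{k\Dim_k}$ and summing the $\le d+1\le m$ terms into the $(k-1)^{m-i}\le(k-1)^m$ envelope.
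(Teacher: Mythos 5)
Your proposal, once you make the self-correction you describe, is essentially the paper's argument. The paper's proof works exactly by parameterizing over $k$-subsets $Y$ of $[n]$: it defines $\ch_{Y,1}$ to be the functions $f\in\ch$ whose fiber over $f|_{[m-1]}$ contains all of $Y$, takes one representative per fiber (the one with $f(m)=\max Y$) to form $\ch_{Y,2}$, and sets $\ch'=\ch\setminus\bigcup_Y\ch_{Y,2}$. Each fiber of $\ch'$ then has size $\le k-1$ (else $k$ of its elements would exhibit a $k$-subset $Y$ with one of them landing in $\ch_{Y,2}$), giving $|\ch'|\le(k-1)M_{n,k}(d,m-1)$; and $\Dim_k(\ch_{Y,1}|_{[m-1]})\le d-1$ precisely by the ``append $m$ with witness set $Y_m:=Y$'' argument you give, yielding $|\ch_{Y,2}|=|\ch_{Y,1}|_{[m-1]}|\le M_{n,k}(d-1,m-1)$ for each $Y$.

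You correctly identify that your first formulation (a single global $\cg$ of projections with fiber size $\ge k$) does not directly give $\Dim_k(\cg)\le d-1$, because the $k$ realizable values vary with the fiber, so no single $Y_m$ witnesses the shattering of $A\cup\{m\}$. Your proposed fix --- index the collapsed class by the specific $k$-subset $Y$ realized, and sum over $Y$ --- is precisely what the paper does, and it makes the dimension argument airtight. The only cosmetic difference is the bookkeeping: the paper partitions $\ch$ itself into $\ch'$ and $\bigcup_Y\ch_{Y,2}$ rather than charging excess fiber sizes against $\binom{n}{k}|\cg|$, but the counts come out the same once you note $j-(k-1)\le\binom{j}{k}$ for fibers of size $j\ge k$.
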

\begin{proof}
Let $\ch\subset [n]^{[m]}$ be a class of $k$-dimension $d$ with $|\ch|=M_{n,k}(d,m)$. For every subset $Y\subset [n]$ of size $k$ let:
\begin{itemize}
\item $\ch_{Y,1}$ be the class of all function $f\in \ch$ such that $f(m)\in Y$ and for every $y\in Y$ there is $f'\in \ch$ such that $f|_{[m-1]}=f'|_{[m-1]}$ and $f'(m)=y$.
\item $\ch_{Y,2}$ be the class of all function $f\in \ch_{Y,1}$ such that $f(m)= \max(Y)$
\end{itemize}
Finally, let
\begin{itemize}
\item $\ch'=\ch\setminus\left(\cup_{Y\in\binom{[n]}{k}}\ch_{Y,2}\right)$.
\end{itemize}
Clearly,
\[
|\ch|\le |\ch'|+\sum_{Y\in\binom{[n]}{k}}|\ch_{Y,2}|~.
\]
The proof of the lemma hence follows from the following two claims:

{\bf Claim 1: $|\ch'|\le M_{n,k}(d,m-1) (k-1)$.} Indeed, for every function $f\in \ch'$ there are at most $k-1$ functions $f'\in \ch'$ with $f'|_{[m-1]}=f|_{[m-1]}$. Otherwise, there are $k$ different $f_1,\ldots,f_k\in\ch'$ that coincide on $[m-1]$. This implies a contradiction as one of these functions belongs to $\ch_{\{f_1(m),\ldots,f_k(m)\},2}$. Therefore, we conclude that $|\ch'|\le |\ch'_{[m-1]}|\cdot (k-1)\le M_{n,k}(d,m-1)\cdot (k-1)$.

{\bf Claim 2: for every $Y\in\binom{[n]}{k}$, $|\ch_{Y,2}|\le M_{n,k}(d-1,m-1)$.} Indeed, we note that $|\ch_{Y,2}|=|\ch_{Y,2}|_{[m-1]}|=|\ch_{Y,1}|_{[m-1]}|$. We also note that $\Dim_{k}(\ch_{Y_1}|_{[m-1]})=d-1$. Otherwise, we could add $m$ to a $d$-elements subset of $[m-1]$ that is $k$-shattered to obtain a $(d+1)$-elements subset of $[m]$ that is $k$-shattered.
\end{proof}
The next simple lemma calculates the values of $M_{n,k}(d,m)$ for the pairs $d,m$ for which the above recursion inequality does not apply.
\begin{lemma}\label{lem:initial}
For every $2\le k\le n$ and every $m\ge 1$ we have
\begin{itemize}
\item $M_{n,k}(0,m)=(k-1)^{m}$.
\item $M_{n,k}(m,m)=n^{m}$.
\end{itemize}
\end{lemma}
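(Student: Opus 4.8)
The plan is to establish each of the two equalities by a matching upper and lower bound, and the only point that needs genuine care is the translation of the hypothesis $\Dim_k(\ch)\le 0$ into a per-coordinate statement about $\ch$.

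\textbf{The identity $M_{n,k}(0,m)=(k-1)^m$.} First I would record two elementary facts about $k$-shattering. (i) Any subset of a $k$-shattered set is again $k$-shattered: given a $k$-shattered $A$ with witnessing sets $\{Y_a\}_{a\in A}$ and given $B\subseteq A$, any $f\colon B\to Y$ with $f(b)\in Y_b$ extends (by choosing arbitrary values in $Y_{a}$ for $a\in A\setminus B$) to a function realized by some $h\in\ch$, whose restriction to $B$ is $f$. In particular, $\Dim_k(\ch)\le 0$ is equivalent to the assertion that \emph{no singleton} $\{a\}\subset[m]$ is $k$-shattered, since the empty set is always $k$-shattered and every nonempty $k$-shattered set contains a $k$-shattered singleton. (ii) A singleton $\{a\}$ is $k$-shattered by $\ch$ if and only if the coordinate projection $\ch|_{\{a\}}=\{h(a):h\in\ch\}$ has size at least $k$: if $|\ch|_{\{a\}}|\ge k$ take $Y_a$ to be any $k$-subset of $\ch|_{\{a\}}$, and conversely the witnessing $Y_a$ is forced to lie inside $\ch|_{\{a\}}$. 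Combining (i) and (ii), $\Dim_k(\ch)\le 0$ holds exactly when $|\ch|_{\{a\}}|\le k-1$ for every $a\in[m]$.

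With this reformulation in hand, the upper bound is immediate: the coordinatewise embedding $\ch\hookrightarrow\prod_{a=1}^{m}\ch|_{\{a\}}$ (which is just $h\mapsto(h(1),\dots,h(m))$ and is injective) yields $|\ch|\le\prod_{a=1}^{m}|\ch|_{\{a\}}|\le(k-1)^m$. For the lower bound I would exhibit the class $\ch=\{1,\dots,k-1\}^{[m]}$ of all functions $[m]\to[n]$ taking values in a fixed $(k-1)$-element subset of $[n]$: it has exactly $(k-1)^m$ members, and since it realizes at most $k-1$ values in every coordinate, the reformulation above gives $\Dim_k(\ch)\le 0$, so this class witnesses $M_{n,k}(0,m)\ge(k-1)^m$.

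\textbf{The identity $M_{n,k}(m,m)=n^m$.} Here the upper bound is trivial and in fact holds for all $d$: any $\ch\subset[n]^{[m]}$ satisfies $|\ch|\le|[n]^{[m]}|=n^m$. For the lower bound it suffices to take $\ch=[n]^{[m]}$ itself; since a $k$-shattered set is by definition a subset of $[m]$, we automatically have $\Dim_k([n]^{[m]})\le m$, so this class is admissible in the definition of $M_{n,k}(m,m)$ and gives $M_{n,k}(m,m)\ge n^m$. There is no real obstacle in the argument; the one step worth stating explicitly is the equivalence ``$\Dim_k(\ch)\le 0$ $\iff$ every coordinate of $\ch$ takes at most $k-1$ values,'' which rests on fact (i) that a subset of a $k$-shattered set is $k$-shattered.
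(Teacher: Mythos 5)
Your proof is correct, and since the paper omits the argument (stating only that it is ``straightforward''), your write-up supplies exactly the expected reasoning: the reduction of $\Dim_k(\ch)\le 0$ to the per-coordinate condition $|\ch|_{\{a\}}|\le k-1$ via monotonicity of shattering, the coordinatewise injection for the upper bound, and the explicit extremal classes $\{1,\dots,k-1\}^{[m]}$ and $[n]^{[m]}$ for the lower bounds. No gaps.
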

The proof of the lemma is straightforward.
We will upper bound $M_{n,k}(d,m)$ in terms of the numbers $N_{n,k}(d,m)$ that are defined recursively as follows
\[
N_{n,k}(d,m)=(k-1)\cdot N_{n,k}(d,m-1)+\binom{n}{k}\cdot N_{n,k}(d-1,m-1)
\]
With the initial conditions $N_{n,k}(m,m)=n^m$ and $N_{n,k}(0,m)=(k-1)^{m}$. By lemmas \ref{lem:recursive} and \ref{lem:recursive} it is clear that $M_{n,k}(m,d)\le N_{n,k}(m,d)$. Therefore, theorem \ref{thm:sauer_gen} follows from the following lemma.
\begin{lemma}
For every $m\ge d\ge 1$ and $n\ge k$ we have
\begin{equation}\label{eq:2}
N_{n,k}(d,m)\le \sum_{i=0}^{d} \binom{m}{i} (k-1)^{m-i}\binom{n}{k}^{i} ~.
\end{equation}
\end{lemma}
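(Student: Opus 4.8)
The plan is to prove the bound~\eqref{eq:2} by induction on $m$, carrying the induction along the recursion that defines $N_{n,k}(d,m)$. The base cases are exactly the two families of initial conditions: for $d=0$ we have $N_{n,k}(0,m)=(k-1)^m$, which is precisely the single term $i=0$ on the right-hand side of~\eqref{eq:2}; and for $d=m$ we have $N_{n,k}(m,m)=n^m$, so we must separately check that $\sum_{i=0}^{m}\binom{m}{i}(k-1)^{m-i}\binom{n}{k}^{i}\ge n^m$. For the latter it is cleanest to observe that $\binom{n}{k}\ge \binom{n-1}{k-1}\cdot\frac{n}{k}\ge\ldots$, or more simply to note $\sum_{i=0}^m \binom{m}{i}(k-1)^{m-i}\binom{n}{k}^i=\left((k-1)+\binom{n}{k}\right)^m$ by the binomial theorem, and that $(k-1)+\binom{n}{k}\ge n$ holds for all $2\le k\le n$ (check $k=2$ gives $1+\binom n2\ge n$, $k=n$ gives $n-1+1=n$, and interpolate, e.g.\ by noting $\binom nk\ge n-k+1$).

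The inductive step is then a direct computation. Assume~\eqref{eq:2} holds for $m-1$ (for all relevant $d$), and that $1\le d<m$ so that the recursion applies. Then
\begin{align*}
N_{n,k}(d,m)
&=(k-1)N_{n,k}(d,m-1)+\binom{n}{k}N_{n,k}(d-1,m-1)\\
&\le (k-1)\sum_{i=0}^{d}\binom{m-1}{i}(k-1)^{m-1-i}\binom{n}{k}^{i}
+\binom{n}{k}\sum_{i=0}^{d-1}\binom{m-1}{i}(k-1)^{m-1-i}\binom{n}{k}^{i}.
\end{align*}
Reindexing the second sum by $j=i+1$ and then renaming $j$ back to $i$, the coefficient of $(k-1)^{m-i}\binom{n}{k}^{i}$ in the combined expression is $\binom{m-1}{i}+\binom{m-1}{i-1}=\binom{m}{i}$ for each $1\le i\le d$, while the $i=0$ term has coefficient $\binom{m-1}{0}=1=\binom{m}{0}$. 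Hence the sum telescopes into $\sum_{i=0}^{d}\binom{m}{i}(k-1)^{m-i}\binom{n}{k}^{i}$, as desired. The only mild care needed is to make sure the $d-1$ term is legitimate: if $d-1=0$ we invoke the $d=0$ base case, and if $d-1=m-1$ (i.e.\ $d=m$) we are not in the range $d<m$ of the inductive step, so that collision never arises.

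The main obstacle — such as it is — is purely bookkeeping: one must be careful that the induction is genuinely on $m$ alone (with $d$ ranging freely), that the recursion is only applied in the regime $m>d>0$ where it is stated, and that the two base cases cover everything else, including the corner $d=m$ which is \emph{not} handled by the recursion. The second base case, $n^m\le\left((k-1)+\binom{n}{k}\right)^m$, is the one genuinely-nontrivial inequality, and it is best disposed of up front via the Pascal-type identity $\binom nk\ge n-k+1$ (equivalently, by the binomial-theorem rewriting above). Everything after that is the standard Sauer–Shelah double-counting telescoping, now weighted by the factors $(k-1)$ and $\binom{n}{k}$ coming from Lemma~\ref{lem:recursive}.
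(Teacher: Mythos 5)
Your proof is correct and follows essentially the same inductive scheme as the paper's: handle the base cases $d=0$ and $d=m$, then close the induction via the recursion of Lemma~\ref{lem:recursive} and the Pascal-identity telescoping $\binom{m-1}{i}+\binom{m-1}{i-1}=\binom{m}{i}$. The one (cosmetic) difference is in the $d=m$ base case: you observe that the full right-hand side equals $\left((k-1)+\binom{n}{k}\right)^m$ by the binomial theorem and then verify $(k-1)+\binom{n}{k}\ge n$ via $\binom{n}{k}\ge n-k+1$, which unifies what the paper treats as two separate sub-cases $k<n$ (last term alone suffices) and $k=n$ (combinatorial count of $[n]^{[d]}$) and is arguably more transparent.
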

\begin{proof}
We use induction on $m+d$. We first deal with the cases $d=0$ and $d=m$. The case $d=0$ is clear. For the case $d=m$, we split into two sub-cases: $n=k$ and $k<n$. If $k<n$ then the last term, in the r.h.s. of equation (\ref{eq:2}) upper bounds $N_{d,m}$. If $k=n$ then the r.h.s. of equation (\ref{eq:2}) counts the number of functions in $[n]^{[d]}$, and therefore equal to $N_{n,k}(d,m)=n^d$. To see that, suppose that in order to choose a function in $[n]^{[d]}$, we first choose the pre-image of $1$ and after that, we assign values in $\{2,\ldots,n\}$ to the remaining points in $[d]$.

We proceed to the induction step. After dealing with the cases $d=0$ and $d=m$, we may assume that $m>d\ge 1$. Now, by the induction hypothesis and the definition of $N_{n,k}(d,m)$ we have
\begin{eqnarray*}
N_{n,k}(d,m) &=& (k-1)\cdot N_{n,k}(d,m-1)+\binom{n}{k}\cdot N_{n,k}(d-1,m-1)
\\
&\le &
(k-1)\cdot \sum_{i=0}^{d} \binom{m-1}{i} (k-1)^{m-i-1}\binom{n}{k}^{i}
+
\binom{n}{k}\cdot \sum_{i=0}^{d-1} \binom{m-1}{i} (k-1)^{m-i-1}\binom{n}{k}^{i}
\\
&= &
\sum_{i=0}^{d} \binom{m-1}{i} (k-1)^{m-i}\binom{n}{k}^{i}
+
\sum_{i=0}^{d-1} \binom{m-1}{i} (k-1)^{m-i-1}\binom{n}{k}^{i+1}
\\
&= &
\binom{m-1}{0} (k-1)^{m}\binom{n}{k}^{0}
+
\sum_{i=1}^{d} \left[ \binom{m-1}{i} +\binom{m-1}{i-1}\right](k-1)^{m-i}\binom{n}{k}^{i}
\\
&= &
\binom{m}{0} (k-1)^{m}\binom{n}{k}^{0}
+
\sum_{i=1}^{d}  \binom{m}{i} (k-1)^{m-i}\binom{n}{k}^{i}
\\
&= &
\sum_{i=0}^{d}  \binom{m}{i} (k-1)^{m-i}\binom{n}{k}^{i}
\end{eqnarray*}
\end{proof}

\section{Background: Mechanisms and Combinatorial Auctions}\label{sec:background}

\subsection{Mechanisms for Auctions}

In a \textit{combinatorial auction}, $m$ items are offered for sale to $n$ bidders. Each bidder $i\in [n]$ has a private valuation function $v_i:2^{[m]}\rightarrow[0,
\infty)$ that is \textit{normalized} (i.e. $v_i(\emptyset)=0$) and \textit{nondecreasing} (if $S\subset T$ then $v_i(S)\leq v_i(T)$). A {\em mechanism} is an algorithm that takes as input bidders' valuation functions (either directly or via oracle access) and outputs an {\em allocation} (i.e., a collection $S=\{S_i\}_{i\in [n]}$ of pairwise disjoint subsets of $[m]$) of items to the bidders, and a vector of prices $p_i\ge 0\;\; i\in [n]$ that each bidder has to pay.

We will consider combinatorial auctions under different assumptions on the valuation functions, and also {\em combinatorial auction with duplicates}: $d$ identical units of each item are available but each bidder desires at most 1 unit of each item, and the mechanism can output a {\em $d$-duplicate} allocation, i.e., a collection $S=\{S_i\}_{i\in [n]}$ of subsets of $[m]$ such that each $x\in [m]$ belongs to at most $d$ of these subsets.

Important desiderata in this context are:

\begin{itemize}
\item \textbf{Good outcomes}. We seek mechanisms that provide ``good'' approximations to the optimum social welfare, i.e., to the allocation $S^*=\{S^*_i\}_{i\in [n]}$ that maximizes the expression $\Sigma_i v_i(S^*_i)$.
\item \textbf{Truthfulness (a.k.a. incentive compatibility)}. Under truthful mechanisms, a bidder maximizes his utility $v_i(S_i)-p_i$ by reporting his actual valuation to the mechanism regardless of other bidders' reports (that is, truthfulness is a dominant strategy for each bidder).
\item \textbf{Tractability}. We will consider both the standard computational complexity model and the query complexity model (see Introduction).
\end{itemize}

\begin{definition}[Maximal-in-range (MIR)/VCG based mechanism]
A mechanism $M$ is called {\em MIR} if there exists a {\em bank of $d$-duplicate allocations} $\ch$ such that the output of $M$ for every $n$-tuple of valuation function $v_1,\ldots,v_n$ is an allocation $S$ that maximizes $\sum_{i=1}^n v_i(S_i)$ over all allocations $S\in\ch$.
\end{definition}

\subsection{Valuation Classes and Complexity Models}

Research on combinatorial auctions investigates the implications of different restrictions on valuation functions. We consider the following classes of valuations:
\begin{itemize}
\item $\mathbf{General}$: all (non-negative, nondecreasing and normalized) valuations.
\item $\mathbf{Multi-Minded}$ (0/1): all valuations $v:2^{[m]}\to\{0,1\}$.
\item $\mathbf{Single-Minded}$ (0/1): all valuations of the form $v(S)=1[T\subset S]$ for some $T\subset [m]$.
\item $\mathbf{Subadditive}$: all valuations satisfying $v(S\dot\cup T)\le v(S)+v(T)$.
\item $\mathbf{XOS}$: all valuations of the form $v(S)=\max_{j}w_j^T\chi(S)$, where each $w_j$ is a vector in $\mathbb{R}_+^n$ (here, $\chi(S)$ is the characteristic vector of $S$).
\item $\mathbf{Submodular}$: all valuations satisfying $v(S\cup T)\le v(S)+v(T)-v(S\cap T)$.
\item $\mathbf{Additive}$. all valuations of the form $v(S)=w^T\chi(S)$, where $w\in\mathbb{R}_+^n$.
\item $\mathbf{0/1-Additive}$. all valuations of the form $v(S)=w^T\chi(S)$, where $w\in\{0,1\}^n$.
\end{itemize}

These classes form the following hierarchies~\cite{nisan2007algorithmic}:
\[
\mathbf{0/1-Additive}\subset \mathbf{Additive}\subset\mathbf{Submodular}\subset \mathbf{XOS} \subset \mathbf{Subadditive}\subset\mathbf{General}
\]
and
\[
\mathbf{Single-Minded}\subset \mathbf{Multi-Minded}\subset\mathbf{General}~.
\]

All inclusions above are strict.

A naive representation of the ``input'' to combinatorial auctions---the valuation functions---requires specifying $2^m$ numbers. In contrast, mechanisms are expected to be efficient in the natural parameters of the problem: the number of bidders $n$ and the number of items $m$. We consider the two standard models for accessing the in combinatorial auctions:

\begin{itemize}

\item {\bf The computational complexity model}, in which valuation functions are succinctly encoded and mechanisms must run in time that is polynomial in the input length; for single-minded bidders, for instance, the encoding of valuation function $v_i$ is simply the bundle of items bidder $i$ desires.
    
\item {\bf The oracle model}, in which valuations are treated as black boxes that can only answer a certain type of queries, and complexity is measured in terms of the number of queries. Three types of queries are commonly considered: (i) a \emph{value query} to valuation $v_i$ is a subset $S\subseteq [m]$ and the reply is the value $v_i(S)$; (ii) a \emph{demand query} to $v_i$ is a vector $p = (p_1, \ldots p_m)$ of ``item prices'', and the reply is bidder $i$'s ``demand'' at these prices, that is, a subset $S$ that maximizes the expression $v_i(S) − \Sigma_{j\in S} p_j$; and (iii) the \emph{communication complexity} model, in which provide a unifying framework for essentially all existing inapproximability results for combinatorial auctions to dateoracles can answer any kind of query (addressed to a single valuation) and the complexity of the mechanism is measured in terms of the number of bits transmitted.

\end{itemize}

\section{Approximation w.r.t. Valuations Classes}\label{sec:results-first}

\begin{definition}
Fix a collection of valuations $\cf$. Let $\ch\subset P_d(X,Y)$ be a collection of allocations. We say that $\ch$ has {\bf approximation ratio of $\alpha\ge 1$ w.r.t. to $\cf$} if for every choice of valuations $\{v_y\}_y\in Y\subset \cf$ and every partition $\{S_y\}_{y\in Y}\in P(X,Y)$ there is an allocation $\{T_y\}_{y\in Y}\in \ch$ such that
\[
\sum_{y\in Y}v_y(T_y)\ge \frac{1}{\alpha}\cdot \sum_{y\in Y}v_y(S_y)
\]
\end{definition}
It is easy to see that $\ch$ has an approximation ratio of $\alpha$ w.r.t. $\cf$ if and only if running MIR mechanism using $\ch$ as the bank of allocations have an approximation ratio of $\alpha$ w.r.t. $\cf$.
We will consider questions of the following type:

\begin{question}\label{que:1}
Given a collection $\cf$ of valuation, what is the best (i.e. smallest) approximation ratio $\alpha\ge 1$ for which there exists a collection of allocations with approximation ratio of $\alpha$ and no large shattered pair of sets?
\end{question}

\subsection{Single-Minded Valuations}
Fix a set of items, $X$, of size $m$ and a set of bidders, $Y$, of size $n$.
In this section we consider question \ref{que:1} when $\cf$ is the class of general (i.e., nondecreasing and normalized) valuations or $\cf$ is the class of single minded valuations.

Consider first that case that $\cf$ is the class of general valuations and consider the allocation class $\ch$ consisting of all allocations that give {\em all} the items to a single player. That is, $\ch$ consists of all constant functions $f:X\to Y$. It is not hard to see $\ch$ does not shatter any pair $A\subset Y,\;S\subset X$ where $|S|\ge 2$. Also, it is not hard to see that $\ch$ has an approximation ratio of $m$ (even $\min(m,n)$) w.r.t. general valuations.

The following theorem shows that if we require bit more, that is, an approximation ratio of $m^{1-\epsilon}$ w.r.t. {\em just single minded valuations}, then we can already find a shattered pair $A,S$, where both $A$ and $S$ are large.

\begin{theorem}\label{thm_apx:shatter_single_minded}
Let  $\epsilon>0$, and let $\mathcal{H}\subset P_d([m],[n])$ such that $n\geq m^{1-3\epsilon/4}$
and $\mathcal{H}$ has an approximation ratio of $m^{1-\epsilon}$ w.r.t. single minded valuations. Then, there exist subsets $S\subset [m]$ and $A\subset [n]$ of sizes  $\tilde{\Omega}(m^{3\epsilon/4})$ and $m^{\epsilon/4}$ that are shattered by $\mathcal{H}$.
\end{theorem}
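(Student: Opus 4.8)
The plan is to obtain this statement as an essentially immediate consequence of Theorem~\ref{thm:shatter_single_minded_0}: the only thing that needs checking is that the approximation guarantee with respect to single-minded valuations implies the $m^{1-\epsilon}$-containment property (in the sense of Section~\ref{sec:partitions}), after which Theorem~\ref{thm:shatter_single_minded_0}, applied with the same $\epsilon$, yields a shattered pair of exactly the claimed sizes $\tilde\Omega(m^{3\epsilon/4})$ and $m^{\epsilon/4}$. Note that Theorem~\ref{thm:shatter_single_minded_0} is already stated for arbitrary $d$-duplicate banks $\ch\subset P_d(X,Y)$, so it directly covers the setting $\ch\subset P_d([m],[n])$ considered here.

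The translation step goes as follows. Fix an arbitrary allocation $\{T_y\}_{y\in Y}$ of the items, and let $Y'=\{y\in Y: T_y\neq\emptyset\}$. To each $y\in Y'$ associate the single-minded valuation $v_y(S)=1[T_y\subseteq S]$, and give every $y\notin Y'$ the zero valuation. Since the sets $\{T_y\}$ are pairwise disjoint we may, if we wish, enlarge one index's part by the leftover items $X\setminus\bigcup_y T_y$ to turn $\{T_y\}$ into a partition without changing any of the values $v_y(T_y)$; this partition achieves social welfare $|Y'|$, so the optimum over all allocations is at least $|Y'|$. Now invoke the hypothesis that $\ch$ has approximation ratio $m^{1-\epsilon}$ w.r.t. single-minded valuations (equivalently, that the MIR mechanism with bank $\ch$ has this ratio): there is some $\{S_y\}_{y\in Y}\in\ch$ with $\sum_{y} v_y(S_y)\ge m^{-(1-\epsilon)}\,|Y'|$. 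Since $v_y(S_y)=1[T_y\subseteq S_y]$ for $y\in Y'$, this is precisely
\[
\bigl|\{y : \emptyset\neq T_y\subseteq S_y\}\bigr| \;\ge\; m^{-(1-\epsilon)}\,\bigl|\{y:\emptyset\neq T_y\}\bigr| ,
\]
i.e.\ the $m^{1-\epsilon}$-containment property.

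With the containment property established and using the hypothesis $n\ge m^{1-3\epsilon/4}$, I would then simply apply Theorem~\ref{thm:shatter_single_minded_0} to produce the shattered pair $S\subset[m]$, $A\subset[n]$ of sizes $\tilde\Omega(m^{3\epsilon/4})$ and $m^{\epsilon/4}$, completing the argument. The only mildly delicate point is the bookkeeping around empty parts and the zero valuation—making sure that "approximation ratio w.r.t. single-minded valuations", as defined in Section~\ref{sec:results-first}, really does yield the containment inequality for \emph{every} allocation, including ones with many empty parts or with uncovered items; passing to the index set $Y'$ and, if needed, padding to a partition handles this cleanly. Beyond that there is no real obstacle: all the combinatorial content—the random-partition argument together with the invocation of the generalized Sauer--Shelah bound (Theorem~\ref{thm:sauer_gen})—has already been carried out inside the proof of Theorem~\ref{thm:shatter_single_minded_0}.
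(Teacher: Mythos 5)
Your proposal is correct and follows exactly the route the paper takes: the paper proves Theorem~\ref{thm_apx:shatter_single_minded} by the one-line observation that an approximation ratio of $\alpha$ w.r.t.\ single-minded valuations is equivalent to the $\alpha$-containment property, and then invokes Theorem~\ref{thm:shatter_single_minded_0}. Your write-up merely spells out the details of this equivalence (associating the single-minded valuation $v_y(S)=1[T_y\subseteq S]$ to each nonempty part, padding to a partition to match the benchmark in the definition of approximation ratio, and reading off the containment inequality), all of which is consistent with the paper's intent.
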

The theorem follows immediately from theorem \ref{thm:shatter_single_minded_0}, as we note that $\ch$ has the $\alpha$ containment property if and only if $\mathcal{H}$ has an approximation ratio of $m^{1-\epsilon}$ w.r.t. single minded valuations.

\subsection{0/1-Additive, Submodular, XOS and Subadditive Valuations}
Fix a set of items, $X$, of size $m$ and a set of bidders, $Y$, of size $n$.
In this section we consider question \ref{que:1} for additive, submodular, XOS and subadditive valuations. Recall that
\[
\mathbf{0/1-Additive}\subset \mathbf{Submodular}\subset \mathbf{XOS} \subset \mathbf{Subadditive}
\]
and all the inclusions are strict. In \cite{dobzinski2005approximation} it is shown that there is a class\footnote{This class is simply the collection of all allocations that either gives all items to a single bidder, or, gives at most a single item to each bidder.} $\ch\subset P(X,Y)$ with an approximation ratio of $m^{\frac{1}{2}}$ (even $\min\left(m^{\frac{1}{2}},n\right)$) w.r.t. subadditive valuations and with no shattered pair $S\subset X,a\in Y$ with $|S|\ge 3$.
Using this class \cite{dobzinski2005approximation} provided truthful mechanism with approximation ratio of $m^{\frac{1}{2}}$ w.r.t. subadditive valuations. As shown in \cite{buchfuhrer2009limits}, if we require bit more, namely, an approximation ratio of $m^{\frac{1}{2}-\epsilon}$ w.r.t. just to 0/1-additive valuations, we can find a shattered pair $S\subset X,a\in Y$ for a large set $S$.

As we show next, if we require a stronger approximation ratio, of $m^{\frac{1}{3}-\epsilon}$, we can find a shattered pair $S\subset X,A\subset Y$ with $|A|=2$ and a large set $S$. This will be useful later to establish communication lower bounds.
\begin{theorem}\label{thm:shatter_additive_apx}
Suppose $n\ge m^{\frac{1}{k+1}-\epsilon}$ and $\ch\subset P(X,Y)$ has an approximation ratio of $m^{\frac{1}{k+1}-\epsilon}$ w.r.t. 0/1-additive valuations. There is a shattered pair $S\subset X,\;A\in\binom{Y}{k}$ with $|S|\ge \frac{m^{(k+1)\epsilon}}{\log_2(m)}$.
\end{theorem}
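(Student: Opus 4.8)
The plan is to observe that this statement is essentially identical to Theorem \ref{thm:shatter_additive_0}, with the only difference being the phrasing: here the hypothesis is stated as ``$\ch$ has an approximation ratio of $m^{\frac{1}{k+1}-\epsilon}$ w.r.t.\ $0/1$-additive valuations'' rather than ``$\ch$ has the $\frac{n}{1+2k}$-intersection property.'' So the first step is to reconcile these two hypotheses. I would argue that if $\ch$ has approximation ratio $\alpha$ w.r.t.\ $0/1$-additive valuations, then $\ch$ has the $\alpha$-intersection property: given any partition $\{T_y\}_{y\in Y}$ of $X$, consider the $0/1$-additive valuations $v_y = \chi(T_y)^T\chi(\cdot)$ assigning value $1$ to each item that $y$ ``wants'' in $T_y$; then $\sum_y v_y(T_y) = \sum_y |T_y| = m$, so some $\{S_y\}\in\ch$ must satisfy $\sum_y v_y(S_y) = \sum_y |T_y\cap S_y| \ge \frac{1}{\alpha}m = \frac{1}{\alpha}\sum_y|T_y|$. (In fact the equivalence goes both ways, but one direction suffices.) Since we need the $\frac{n}{1+2k}$-intersection property, and here $\alpha = m^{\frac{1}{k+1}-\epsilon} \le n^{1} = n$ up to the relevant constants when $n = m^{\frac{1}{k+1}-\epsilon}$, one checks that the approximation ratio $m^{\frac{1}{k+1}-\epsilon}$ is at most $\frac{n}{1+2k}$ for $m$ large (since $n = m^{\frac{1}{k+1}-\epsilon}$ and $1+2k$ is a constant), so the $\alpha$-intersection hypothesis with this $\alpha$ implies the $\frac{n}{1+2k}$-intersection property.

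Once the intersection property is in hand, the theorem follows \emph{verbatim} from Theorem \ref{thm:shatter_additive_0}: it delivers exactly a shattered pair $S\subset X$, $A\in\binom{Y}{k}$ with $|S|\ge \frac{m^{(k+1)\epsilon}}{\log_2(m)}$. So the proof is a one-line reduction: ``This follows immediately from Theorem \ref{thm:shatter_additive_0}, since $\ch$ has approximation ratio $m^{\frac{1}{k+1}-\epsilon}$ w.r.t.\ $0/1$-additive valuations if and only if $\ch$ has the corresponding intersection property, and for $n=m^{\frac{1}{k+1}-\epsilon}$ we have $m^{\frac{1}{k+1}-\epsilon}\le \frac{n}{1+2k}$ for sufficiently large $m$.''

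The only genuine content, and hence the only place any care is needed, is the inequality $m^{\frac{1}{k+1}-\epsilon} \le \frac{n}{1+2k}$ (equivalently $m^{\frac{1}{k+1}-\epsilon}\le\frac{m^{\frac{1}{k+1}-\epsilon}}{1+2k}$, which is false as stated!) — so I expect the \textbf{main subtlety} to be getting the quantifiers right: the statement should really be read as saying that having approximation ratio $m^{\frac{1}{k+1}-\epsilon}$ is \emph{at least as strong as} having the $\frac{n}{1+2k}$-intersection property, i.e.\ $m^{\frac{1}{k+1}-\epsilon}\le\frac{n}{1+2k}$ needs $n$ to be somewhat larger than $m^{\frac{1}{k+1}-\epsilon}$, or else the constant $1+2k$ must be absorbed into the $\epsilon$ slack. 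The clean way around this is to note that $n\ge m^{\frac{1}{k+1}-\epsilon}$ and the approximation ratio is $m^{\frac{1}{k+1}-\epsilon}\le m^{\frac{1}{k+1}-\epsilon/2}/(1+2k)$ for large $m$ (shrinking $\epsilon$ by a constant factor at the start), so after replacing $\epsilon$ by $\epsilon/2$ we may invoke Theorem \ref{thm:shatter_additive_0} with the intersection property holding; this costs only a constant factor in the exponent, which is absorbed. I would state the reduction carefully with this $\epsilon$-adjustment and then cite Theorem \ref{thm:shatter_additive_0} to conclude.
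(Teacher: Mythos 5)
Your proposal is exactly the paper's argument: the paper also reduces Theorem~\ref{thm:shatter_additive_apx} to Theorem~\ref{thm:shatter_additive_0} by observing that having approximation ratio $\alpha$ with respect to $0/1$-additive valuations is equivalent to the $\alpha$-intersection property, and then citing the latter theorem directly. You are in fact \emph{more} careful than the paper: the constant-factor mismatch you flag is real, since with $n=m^{\frac{1}{k+1}-\epsilon}$ the hypothesis only yields the $m^{\frac{1}{k+1}-\epsilon}$-intersection property while Theorem~\ref{thm:shatter_additive_0} asks for the stronger $\frac{n}{1+2k}$-intersection property, and the paper's one-line justification silently elides this; your $\epsilon$-adjustment (absorbing the constant $1+2k$ into the polynomial slack, at the cost of a constant factor in the exponent of $|S|$) is the correct fix.
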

The theorem follows immediately from theorem \ref{thm:shatter_additive_0}, as we note that $\ch$ has the $\alpha$ intersection property if and only if $\mathcal{H}$ has an approximation ratio of $m^{1-\epsilon}$ w.r.t. single 0/1-additive valuations.
Theorem \ref{thm:shatter_additive_0} shows that if a class have a good approximation ratio w.r.t. 0/1-additive, we can find a set of constantly many bidders $A$ and a large subset $S\subset X$ such that the pair $S,A$ is shattered.
The following theorem shows that for arbitrarily good approximation ratio $\alpha>1$, there are classes with an approximation ratio of $\alpha$ w.r.t. XOS valuations, with no shattered pair $S,A$ where $S$ is large (say, super-logarithmic) and $|A|=\omega(\log(n))$. This indicates on limitation of the VC technique for establishing computational lower bound on combinatorial auctions with XOS valuations.

\begin{theorem}
For every $l\le |Y|$ there is a class $\ch\subset Y^X$ such that
\begin{itemize}
\item $\ch$ has an approximation ratio of $\frac{l}{l-1}$ w.r.t. XOS valuations.
\item For $l=2$, $\ch$ has an approximation ratio of $2$ w.r.t. subadditive valuations.
\item There is no shattered pair $S\subset X$ and $A\subset Y$ with $|S|\ge 8(l-1)\ln(n)$ and $|A|>24(l-1)\ln(m)$.
\end{itemize}
\end{theorem}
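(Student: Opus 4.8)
The plan is to construct $\ch$ explicitly as a kind of "tree" or "interval"-based allocation class that is small enough to have bounded shattering but rich enough to give a good approximation against XOS valuations. The natural candidate is the following: fix a partition of $X$ into $l$ roughly-equal blocks $X_1,\dots,X_l$ (or, more flexibly, consider all allocations obtained by choosing one block to "withhold" and distributing the rest). Concretely, I would let $\ch$ consist, for each index $y_0 \in Y$ and each choice of a single block $X_j$, of the allocation that gives $X \setminus X_j$ to $y_0$ and nothing to everyone else — together with a small number of refinements needed to handle XOS valuations with many nonzero clauses. The key quantitative point is that an XOS valuation $v_y = \max_j w_j^T\chi(\cdot)$ evaluated on a set $T$ is always at least $\frac{l-1}{l}$ of its value on the full set $X$ for at least one of the $l$ sets $X\setminus X_j$, by an averaging argument over the $l$ blocks (each item of the maximizing clause $w_{j^*}$ is omitted in exactly one block, so some block omits at most a $\frac1l$ fraction of the relevant weight). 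Summing over bidders and comparing to the optimum gives the $\frac{l}{l-1}$ approximation ratio.

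For the subadditive case with $l=2$: with two blocks $X_1, X_2$, for any target allocation $\{S_y\}$ we have $\sum_y v_y(S_y) \le \sum_y v_y(S_y \cap X_1) + \sum_y v_y(S_y \cap X_2)$ by subadditivity, so one of the two "halves" already captures at least half the welfare; assigning $X_1$ (resp. $X_2$) to the bidder maximizing $v_y(X_i)$ and routing the remaining items appropriately (or simply handing the whole better half to the best single bidder, up to a further loss that the bound absorbs) yields an approximation ratio of $2$. I would state this carefully so the same class $\ch$ serves both bullets.

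For the non-shattering claim, the plan is a union bound / probabilistic argument. Every allocation in $\ch$ is essentially determined by a choice of block and an index, so $|\ch| \le |Y|\cdot l \cdot(\text{small factor})$; more importantly, the restriction $\ch|_{S,A}$ to any pair $(S,A)$ is governed by how the blocks $X_1,\dots,X_l$ intersect $S$. If $(S,A)$ were shattered then $|\ch|_{S,A}| = |A|^{|S|}$, which is exponential in $|S|$; but the number of distinct restrictions $\ch$ can induce on $S$ is at most the number of (block, index) pairs times a factor accounting for which indices in $A$ are "hit", and this is polynomial in $|Y|,|X|$ and exponential only in the number of blocks meeting $S$. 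Balancing $|A|^{|S|}$ against this count and using $|S| \ge 8(l-1)\ln(n)$, $|A| > 24(l-1)\ln(m)$ forces a contradiction. The main obstacle I anticipate is getting the combinatorics of the non-shattering bound tight enough to hit exactly these constants: one must carefully track, for a given $S$, how many of the blocks $X_j$ intersect $S$ and how a single "withheld-block" allocation partitions $S\cap(X\setminus X_j)$ among $A$, since a naive count only rules out shattering when $|A|$ is large relative to $l$ and $|S|$ is small. I expect the clean way through is to fix the pair $(S,A)$, observe that each $h \in \ch$ sends all of $S\cap(X\setminus X_j)$ to a single element of $A$ (or to $*$), so $\ch|_{S,A}$ is contained in the set of functions constant on each of the $\le l$ pieces $S\cap X_j$; that set has size $\le (|A|+1)^{l}$, and requiring $(|A|+1)^l \ge |A|^{|S|}$ contradicts $|S| > l$ as soon as $|A| \ge 2$ — so the real content is arranging the construction and the averaging bounds so that the approximation guarantees survive while $l$ stays this small, and then the logarithmic thresholds come from the concentration estimate needed when $\ch$ is replaced by a randomized refinement rather than the bare block construction.
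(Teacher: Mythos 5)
Your construction is fundamentally different from the paper's, and unfortunately it does not work. You partition the \emph{item} set $X$ into $l$ blocks $X_1,\dots,X_l$ and take $\ch$ to be (roughly) the allocations that hand $X\setminus X_j$ to a single bidder $y_0$. This cannot achieve the claimed $\frac{l}{l-1}$ ratio: take $n=m$ bidders with additive valuations where bidder $i$ wants only item $i$ (these are XOS and subadditive). Any allocation that gives items only to one bidder achieves welfare $1$, while the optimum is $m$, so the ratio is $m$, not a constant. Your averaging observation — that $v_y(X\setminus X_j)\ge\frac{l-1}{l}v_y(X)$ for some $j$ — is true, but it controls a single bidder's loss relative to getting \emph{all} items; the step ``summing over bidders and comparing to the optimum'' does not go through, because an allocation in your $\ch$ cannot serve more than one bidder at a time. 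The hedges you add (``small number of refinements,'' ``randomized refinement'') are precisely where the entire construction would have to be invented.

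The paper's construction partitions the \emph{bidder} set $Y$, and does so \emph{locally per item}: for each $x\in X$ fix a partition $Y=Y_{x,1}\,\dot\cup\cdots\dot\cup\,Y_{x,l}$, and let $\ch^{\cy}$ be all functions $f:X\to Y$ for which some single index $i$ satisfies $f(x)\notin Y_{x,i}$ for every $x$. This is an exponentially large class (not a handful of constant-like allocations), which is essential: given any target partition $\{S_y\}$, one sets $S^i_y=S_y\setminus\{x:y\in Y_{x,i}\}$ and observes that each $\{S^i_y\}_y$ extends to a member of $\ch^{\cy}$; averaging over $i$, each item survives in exactly $l-1$ of the $l$ trials, yielding the $\frac{l-1}{l}$ fraction for the XOS clause that realizes $v_y(S_y)$. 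Your averaging idea is in the right spirit but applied to the wrong side of the bipartition. Consequently your non-shattering argument also does not match the statement: a class of size $O(nl)$ is trivially non-shattering but has no approximation guarantee, whereas the paper's $\ch^{\cy}$ is large and the logarithmic thresholds $8(l-1)\ln n$ and $24(l-1)\ln m$ come from a union bound over pairs $(S,A)$ for a \emph{random} choice of the local partitions $\{Y_{x,i}\}$: one shows that with $l$ ``bad points'' in $S$ (points $x$ where $A$ meets every block $Y_{x,i}$) the pair cannot be shattered, then bounds the probability that $S$ has fewer than $l$ bad points.
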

\begin{proof}
For simplicity, we assume that $l$ divides $|Y|$.
Fix $l\ge 2$ and let $\cy=\{Y_{x,j}\}_{x\in X,1\le j\le l}$ be non-empty subsets of $Y$ such that for every $x\in X$, $Y=Y_{x,1}\dot\cup\ldots\dot\cup Y_{x,l}$. Define $\ch^{\cy}$ to be the class of all functions $f:X\to Y$ for which there is $1\le i\le l$ such that $\forall x,\;f(x)\not \in Y_{i}$. As the following claim shows, for every $\cy$, $\ch^{\cy}$ satisfies the first two conclusions of the theorem. Later on, we will show that for some $\cy$, $\ch^{\cy}$ satisfies also the last conclusion.
\begin{claim} The following conditions hold:
\begin{itemize}
\item $\ch^{\cy}$ has an approximation ratio of $2$ w.r.t.
subadditive valuations.
\item $\ch^{\cy}$ has an approximation ratio of $\frac{l}{l-1}$ w.r.t. XOS valuations.
\end{itemize}
\end{claim}
\begin{proof}
We prove the second part. The proof of the first part is similar. Let $\{v_y\}_{y\in Y}$ be a collection of XOS valuation and let $\{S_y\}_{y\in Y}\in P(X,Y)$. We must show that there is $\{T_y\}_{y\in Y}\in \ch^{\cy}$ with
\[
\sum_{y\in Y}v_y(T_y)\ge \frac{l-1}{l} \sum_{y\in Y}v_y(S_y)~.
\]
For every $1\le i\le l$ let $\{S^i_y\}_{y\in Y}$ be the allocation defined by
\[
S^i_y=S_y\setminus\{x\in X\mid y\in Y_{x,i}\}~.
\]
The allocation $\{S^i_y\}_{y\in Y}$ is contained in one of the allocations in $\ch^{\cy}$. To see that, let $f:X\to Y$ be any function for which $f|_{S^i_y}$ is the constant function $y$ and $f(x)\not\in Y_{x,i}$ for every $x\in \cup_{y\in Y}\{x'\in S_y\mid y\in Y_{x',i}\}$. It is not hard to check that $f\in \ch^{\cy}$ and for all $y\in Y$, $S^i_{y}\subset f^{-1}(y)$.

Therefore it is enough to show that for some $i$,
\[
\sum_{y\in Y}v_y(S^i_y)\ge \frac{l-1}{l} \sum_{y\in Y}v_y(S_y)~.
\]
Indeed, we will show that for random $i$, the expected value of the l.h.s. is at least the r.h.s. To this end, it is enough to show that for any fixed $y\in Y$,
\[
\frac{1}{l}\sum_{i=1}^lv_y(S^i_y)\ge \frac{l-1}{l} v_y(S_y)~.
\]
suppose that $v_y(U)=\max_{j}w_j^T\chi(U)$. Let $j^*$ be the index for which $v_y(S_y)=w_{j^*}^T\chi(S_y)$. Now, we have
\begin{eqnarray*}
\sum_{i=1}^lv_y(S^i_y) &\ge& \sum_{i=1}^l w_{j^*}^T\chi(S_y^i)=(l-1)\cdot w_{j^*}^T\chi(S_y)=(l-1)v_y(S_y)
\end{eqnarray*}
where the left equality follows from the fact that each element $x\in S_y$ appears in exactly $l-1$ many $S^i_y$'s (namely, all $i$'s except the one for which $y\in Y_{x,i}$).
\end{proof}
Suppose now that for every $x\in X$, the partition $\{Y_{x,i}\}_{i\in [l]}$ is chosen at random, uniformly from all partitions of $Y$ into $l$ non empty subsets, each of which is of size $\frac{|Y|}{l}$. Now, fix subsets $S\subset X$ and $A\subset Y$.
\begin{claim}
The probability that the pair $S,A$ is shattered is $\le |S|^ll^{|S|}\left(\frac{l-1}{l}\right)^{|A|(|S|-l)}$.
\end{claim}
\begin{proof}
We say that $x\in S$ is a {\em bad point} if for every $1\le i\le l$, $A\cap Y_{x,i}\ne \emptyset$. We note that if there are $l$ bad points, then the pair $S,A$ is not shattered. Indeed, if $x_1,\ldots,x_l$ are bad points, then we can choose for every $1\le i\le l$ some $y_{x_i}\in A\cap Y_{x,i}$. It is not hard to see that there is no $f\in\ch^{\cy}$ satisfying $\forall i,\;f(x_i)=y_{x_i}$. Therefore, the pair $S,A$ is not shattered.

It is therefore enough to bound the probability that there are less than $l$ bad points. Indeed, for every $x\in S$, the probability that $A\cap Y_{x,i}=\emptyset$ is $\le\left(\frac{l-1}{l}\right)^{|A|}$. Hence, the probability that $x$ is good is $\le l\left(\frac{l-1}{l}\right)^{|A|}$. It follows that the probability that there are less than $l$ bad points is
\[
\le \sum_{i=1}^{l-1}\binom{|S|}{i}l^{|S|-i}\left(\frac{l-1}{l}\right)^{|A|(|S|-i)}\le |S|^ll^{|S|}\left(\frac{l-1}{l}\right)^{|A|(|S|-l)}
\]
\end{proof}
By the claim, it follows that the probability that some pair $S,A$ with $|S|=m'>8(l-1)\ln(n)$ and $|A|=n'>24(l-1)\ln(m)$ is shattered is
\[
\le \binom{n}{n'}\binom{m}{m'}m'^ll^{m'}\left(\frac{l-1}{l}\right)^{n'(m'-l)}\le n^{n'}m^{2m'+l}\left(\frac{l-1}{l}\right)^{n'(m'-l)}~.
\]
Taking $\ln$, and using the facts that $2l\le m'$ and that for every $0\le x\le 1$ we have that $\ln(1+x)\ge \ln(2)x\ge \frac{x}{2}$, we conclude that the $\ln$ of the last probability is
\begin{eqnarray*}
&&n'\ln(n)+(2m'+l)\ln(m)-\ln\left(\frac{l}{l-1}\right)n'(m'-l)
\\
&\le&
n'\ln(n)+(2m'+l)\ln(m)-\frac{1}{2(l-1)}n'(m'-l)
\\
&\le&
n'\ln(n)+3m'\ln(m)-\frac{1}{4(l-1)}n'm'
\\
&=&
n'\left(\ln(n)-\frac{m'}{8(l-1)}\right)+m'\left(3\ln(m)-\frac{n'}{8(l-1)}\right)<0~.
\end{eqnarray*}
Therefore, there is a positive probability that no such pair is shattered. Hence, there is some collection of partitions, $\cy$, for which $\ch^{\cy}$ satisfies the third conclusion of the theorem.
\end{proof}

\section{General Valuations}\label{sec:general_comp_apx}
In this section, we will show that unless $NP\subset P/poly$, there is no efficient truthful mechanism for general valuations with an approximation ratio of $m^{1-\epsilon}$. We will even show that for every $\epsilon>0$, there is a class of valuations for which efficient non-truthful mechanism can achieve an approximation ratio of $m^\epsilon$, but the approximation ratio of every efficient truthful mechanism is $\ge m^{1-\epsilon}$.
We will use the "direct hardness approach", introduced by Dobzinsky \cite{dobzinski2011impossibility},
together with the VC dimension technique.

We first define the valuation class that we will consider.
\begin{definition}
A valuation $v:2^{[m]}\to \mathbb R$ is called {\bf $k$-local} if there is  $T\in\binom{[m]}{k}$ such that, for all $S$,
\[
v(S\cap T)\ge v(S)-\frac{1}{2m^2}v([m])~.
\]
\end{definition}
For $k=k(m)$ we denote that class of $k$-local valuations by $\mathbf{k-local}$. The corresponding bidding language will be circuits.

\begin{theorem}\label{thm_apx:general valuations}
For any $\epsilon>0$,
\begin{itemize}
\item Unless $NP\subseteq P/poly$, no efficient truthful mechanism has an approximation ratio of $m^{1-\epsilon}$ w.r.t. $m^\epsilon$-local valuations.
\item There is an efficient non-truthful mechanism with an approximation ratio of $2k$ w.r.t. $k$-local valuations. Moreover, the auctioneer communicates with the bidders only trough value queries.
\end{itemize}
\end{theorem}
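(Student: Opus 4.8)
The plan is to handle the two bullets separately: the positive (non-truthful) part is a short greedy analysis, while the negative part combines the ``direct hardness'' machinery of Dobzinski with a VC/Sauer--Shelah extraction and a reduction from \textsc{Vertex-Cover}. For the positive part I would analyze Algorithm~\ref{alg:non_truth} directly. At each iteration it serves the still-active bidder $i$ maximizing $v_i(U)$ a size-$k$ bundle $S_i\subseteq U$ with $v_i(S_i)\ge v_i(U)-\frac{1}{2m}v_i([m])$; such a bundle exists because $v_i$ is $k$-local (restrict the witnessing set $T$ to $U$, and note the per-step $\frac{1}{2m^2}v_i([m])$ slack summed over at most $m$ steps is at most $\frac{1}{2m}v_i([m])$, wait — more carefully, the step~\ref{step:1} tolerance is already stated as $\frac{1}{2m}v_i([m])$). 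The approximation bound is a charging argument: fix an optimal allocation $\{S^*_j\}$; when bidder $i$ is served, $v_i(U)\ge v_j(U)$ for every unserved $j$ and $U$ still contains every not-yet-allocated item, so the value the optimum can still extract from the unallocated items is at most $k$ times (roughly) $v_i(U)\approx v_i(S_i)$ since each step removes only $k$ items; summing over steps and absorbing the accumulated $\frac{1}{2m}v_i([m])$ errors yields the $2k$ bound. Each step costs one value query per active bidder, so the mechanism is polynomial and uses only value queries.

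For the negative direction, assume $M$ is efficient, truthful, and $m^{1-\epsilon}$-approximate w.r.t.\ $m^\epsilon$-local valuations, with $n\ge m^{1-\epsilon/2}$. First I would invoke Lemma~\ref{lem:shatter_truthful}: feeding $M$ almost-single-minded $m^\epsilon$-local valuations $v_{-i}$, the approximation guarantee forces some bidder $i$ to have a menu $R_{v_{-i}}$ so rich that, binning its sets by cardinality $k$ and by price $p$ rounded to granularity $m^{-5}$, the pigeonhole principle produces a structured submenu $\cs(v_{-i},k,p)$ of size $\ge 2^{m^{\epsilon/2}/2}/(2m^6)$; the ``structured'' condition (any menu set strictly containing one of these costs at least $\frac{1}{m^3}$ more) is exactly what will later let me treat a chosen set as locally price-optimal. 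Then I apply the classical Sauer--Shelah lemma, i.e.\ Theorem~\ref{thm:sauer_gen} with $k=|Y|=2$, to the exponentially large family $\cs(v_{-i},k,p)\subseteq\{0,1\}^{[m]}$, obtaining a shattered set $X\subseteq[m]$ with $|X|=\Omega(m^{\epsilon/4})$: every subset of $X$ equals $W\cap X$ for some $W$ in the submenu.

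Next I extend the price function to all of $2^{[m]}$ by $p_{v_{-i}}(T)=\min\{\,p_{v_{-i}}(S):T\subseteq S\in R_{v_{-i}}\,\}$, and use Lemma~\ref{lem:truthful_comp} so that membership in $\cs(v_{-i},k,p)$ and evaluation of the extended $p_{v_{-i}}$ are computable by poly-size circuits. Given a graph $G$ on vertex set $X$ and a threshold $t$, I design an $m^\epsilon$-local valuation $v_i$ (described by a circuit) that (i) is large exactly on supersets of vertex covers of $G$ confined to $X$, and (ii) through the menu prices $p_{v_{-i}}$ makes covers of size $\ge t$ sufficiently expensive that a utility-maximizing bundle is a cover of size $<t$ whenever one exists. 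By the taxation principle, $M$ allocates bidder $i$ a set maximizing $v_i(S)-p_{v_{-i}}(S)$ over $R_{v_{-i}}$; shattering guarantees that every candidate small cover inside $X$ is genuinely realizable as $W\cap X$ for a menu set, so $M$'s output intersected with $X$ must be a cover of size $t$ exactly when $G$ has one, and otherwise the attained welfare is too small for an $m^{1-\epsilon}$-approximation (here the gap between the $2m^\epsilon$ achievable non-truthfully and the forced $m^{1-\epsilon}$ barrier is what makes the two bullets a genuine separation). Reading the cover off $M$'s output turns $M$ into a poly-size circuit for \textsc{Vertex-Cover}, giving $NP\subseteq P/poly$.

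The main obstacle is the last calibration: I must choose $v_i$ and the extended prices so the taxation-principle argument is airtight — the only attractive bundles are true small covers living inside $X$, never a cheap larger menu set and never a set exploiting items outside $X$ — while simultaneously keeping $v_i$ both $m^\epsilon$-local and succinctly describable, and keeping the losses from extending $p_{v_{-i}}$ and from the $m^\epsilon$-locality slack strictly below the $\frac{1}{m^3}$ price gaps supplied by the structured submenu. Lining up these polynomially small error terms ($m^{-2}$, $m^{-3}$, $m^{-5}$, and the $\frac{1}{m^3}|S|$ tie-breaker in the almost-single-minded valuations) is the technical heart of the proof.
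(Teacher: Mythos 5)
Your proposal follows essentially the same route as the paper for both bullets: a greedy-with-prune algorithm for the positive direction, and for the negative direction the direct-hardness chain (almost-single-minded $v_{-i}$ forcing a large structured submenu via Lemma~\ref{lem:shatter_truthful}, Sauer--Shelah extraction of a shattered item set, price extension plus Lemma~\ref{lem:truthful_comp}, taxation principle, and a \textsc{Vertex-Cover} reduction). One small caution: your charging argument for the $2k$ bound is stated loosely (``the optimum can still extract at most $k$ times $v_i(U)$''); the paper instead sorts the optimal bidders by value and notes that after $j-1$ rounds only $(j-1)k$ items are gone, so some bidder among the top $(j-1)k+1$ still has its entire optimal bundle intact, which is what makes the per-round charge to a block of $k$ consecutive optimal values go through cleanly.
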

We start by describing the efficient non-truthful mechanism.
\begin{algorithm}[th]\caption{} \label{alg:non_truth_apx}
\begin{algorithmic}[1]
\STATE {\bf Input:} Oracle access to $k$-local valuations $v_1,\ldots,v_n$.
\STATE {\bf Output:} An allocation $\{S_i\}_{i\in [n]}$
\STATE Initialize $U=[m]$ and $B=[n]$.
\WHILE {$U\ne \emptyset$ and $B\ne \emptyset $}
\STATE Choose $i\in B$ that maximizes $v_{i}(U)$.
\STATE\label{step:1_apx} Allocate $i$ a subset $S_{i}\subset U$ of size $k$ with $v_i(S_i)\ge v_i(U)-\frac{1}{2m}v_i([m])$
\STATE Set $B:=B\setminus\{i\}$ and $U:=U\setminus S_i$
\ENDWHILE
\end{algorithmic}
\end{algorithm}

\begin{lemma}
Algorithm \ref{alg:non_truth_apx} achieves an approximation ratio of $2k$  w.r.t. $k$-local valuations.
\end{lemma}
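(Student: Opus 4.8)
The plan is to show that the allocation $\{S_i\}$ returned by Algorithm~\ref{alg:non_truth_apx} has social welfare $\mathrm{ALG}:=\sum_i v_i(S_i)\ge\frac1{2k}W^{\ast}$, where $W^{\ast}=\sum_i v_i(S^{\ast}_i)$ is the optimal social welfare. (First note that the bundle required in Step~\ref{step:1_apx} exists when $|U|\ge k$: $k$-locality gives $v_i(U\cap T_i)\ge v_i(U)-\frac1{2m^2}v_i([m])$ and $|U\cap T_i|\le k$, so one pads $U\cap T_i$ up to a size-$k$ subset of $U$ without decreasing $v_i$; if $|U|<k$ the algorithm allocates all of $U$, which is only better.) The key first move is to \emph{localize} the optimum: for each bidder $j$ with $v_j(S^{\ast}_j)>0$ fix a $k$-locality witness $T_j$ and set $R_j:=S^{\ast}_j\cap T_j$. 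The sets $R_j$ are pairwise disjoint (they lie inside the disjoint optimal bundles) and $|R_j|\le k$, and since $v_j(S^{\ast}_j)\le v_j(S^{\ast}_j\cap T_j)+\frac1{2m^2}v_j([m])=v_j(R_j)+\frac1{2m^2}v_j([m])$ with $v_j([m])\le W^{\ast}$ (giving all items to $j$ is a feasible allocation), summing over the at most $m$ bidders with nonempty optimal bundle yields $\sum_j v_j(R_j)\ge(1-\frac1{2m})W^{\ast}$. So it suffices to recover most of the welfare carried by the \emph{disjoint} family $\{R_j\}$.

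Next, track the execution: let $i_1,\dots,i_L$ (with $L\le m/k+1$) be the selected bidders, and $[m]=U_1\supseteq U_2\supseteq\cdots$, $[n]=B_1\supseteq B_2\supseteq\cdots$ the surviving item/bidder sets. For each $j$ with $R_j\ne\emptyset$ let $\sigma_j$ be the first iteration at which $j$ is selected or an item of $R_j$ is deleted, and set $\sigma_j:=L$ if neither ever happens (by disjointness this is possible only for the fewer than $k$ bidders whose $R_j$'s all sit in the final leftover set). By minimality, at iteration $\sigma_j$ we have $j\in B_{\sigma_j}$ and $R_j\subseteq U_{\sigma_j}$, so the greedy choice satisfies $v_{i_{\sigma_j}}(U_{\sigma_j})=\max_{i\in B_{\sigma_j}}v_i(U_{\sigma_j})\ge v_j(U_{\sigma_j})\ge v_j(R_j)$, and therefore by the guarantee of Step~\ref{step:1_apx}
\[
v_{i_{\sigma_j}}(S_{i_{\sigma_j}})\ \ge\ v_j(R_j)-\tfrac1{2m}\,v_{i_{\sigma_j}}([m]).
\]
The payoff of localization is that a single iteration $t$ can be the critical iteration $\sigma_j$ for only $O(k)$ bidders: if $\sigma_j=t$ then an item of $R_j$ lies in the $\le k$ deleted items $S_{i_t}$ (for $j=i_t$ selected this uses $R_{i_t}\subseteq U_t\cap T_{i_t}\subseteq S_{i_t}$), and the $R_j$'s are pairwise disjoint, so at most $k+1$ indices $j$ satisfy $\sigma_j=t$ (at most $k$ in the natural padding implementation).

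Finally, sum the displayed inequality over all $j$ with $R_j\ne\emptyset$ and group by iteration: the left-hand sides total $\ge(1-\frac1{2m})W^{\ast}$; the $v_{i_{\sigma_j}}(S_{i_{\sigma_j}})$ terms total $\le O(k)\cdot\mathrm{ALG}$ by the $O(k)$-to-one charging; and the additive slacks total $O(\mathrm{ALG})$ --- here one uses that the \emph{first} greedy step gives $v_{i_1}(S_{i_1})\ge(1-\frac1{2m})v_{i_1}([m])$ with $i_1$ maximizing $v_i([m])$ over \emph{all} bidders, hence $v_i([m])\le 2\,\mathrm{ALG}$ for every $i$, together with $L\le m/k+1$. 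Rearranging gives $\mathrm{ALG}\ge\frac1{2k}W^{\ast}$; the sharp constant $2k$ follows by observing that in the two clean stopping regimes the charging is in fact $k$-to-one (if the loop stops with $U=\emptyset$ every nonempty $R_j$ has been deleted; if it stops with $B=\emptyset$ every bidder has been served), so no ``leftover'' bidders arise and $k\cdot\mathrm{ALG}+O(\mathrm{ALG})\le 2k\cdot\mathrm{ALG}$ for $m$ large.

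The main obstacle is precisely that the witness sets $T_j$ may overlap arbitrarily, so \emph{before} localization a single deleted item could be blamed by arbitrarily many optimal bidders, which would only give an $O(m)$ bound; replacing $T_j$ by $R_j=S^{\ast}_j\cap T_j$ costs a negligible $\frac1{2m^2}v_j([m])$ in welfare per bidder but buys pairwise disjointness and hence the $O(k)$-to-one charging that yields the factor $k$. Everything else --- existence of the size-$k$ bundle, and keeping the accumulated $\frac1{2m}$- and $\frac1{2m^2}$-slacks below the constant hidden in ``$2k$'' --- is routine bookkeeping, made safe by the uniform bound $v_i([m])\le 2\,\mathrm{ALG}$.
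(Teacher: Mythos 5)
You take a genuinely different route from the paper. The paper's proof sorts bidders by $v_1(S^{*}_1)\ge\cdots\ge v_n(S^{*}_n)$, observes that at the start of iteration $j$ only $(j-1)k$ items have been removed so at least one of the pairwise-disjoint bundles $S^{*}_1,\dots,S^{*}_{(j-1)k+1}$ is still intact, lower bounds the $j$-th iteration's captured value by $v_{(j-1)k+1}(S^{*}_{(j-1)k+1})-\frac1{2m}OPT$, and sums over the at most $m/k$ iterations. You instead localize each optimal bundle to the ($\le k$-element, pairwise-disjoint) core $R_j=S^{*}_j\cap T_j$ and charge each bidder $j$ to the first iteration $\sigma_j$ at which $j$ is served or $R_j$ is touched. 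Your charging bookkeeping is, if anything, the more careful of the two: the paper's per-iteration step tacitly needs the owner $\ell$ of the intact bundle to still lie in $B_j$, which can fail when $\ell$ was served earlier on a bundle disjoint from $S^{*}_\ell$, and your $\sigma_j$ mechanism is designed precisely to absorb that case.

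That said, your argument does not actually deliver the constant $2k$, and the gap is the $k$-to-one refinement. It rests on the asserted inclusion $R_{i_t}\subseteq U_t\cap T_{i_t}\subseteq S_{i_t}$, but Step~\ref{step:1_apx} only guarantees $v_{i_t}(S_{i_t})\ge v_{i_t}(U_t)-\frac1{2m}v_{i_t}([m])$; nothing forces $S_{i_t}\supseteq U_t\cap T_{i_t}$. For instance, if $T_{i_t}=\{a,b\}$, $R_{i_t}=\{b\}$, and $v_{i_t}$ takes the same value on $\{a\}$, $\{b\}$, and $\{a,b\}$, the greedy deletion may well return $S_{i_t}=\{a,c\}$ with $c\notin T_{i_t}$ and leave $R_{i_t}$ untouched, so the served bidder is a genuine extra charge. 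Without that inclusion you only get $(k+1)$-to-one charging, and combined with your slack bound $\mathrm{slack}\le ALG$ (from $v_i([m])\le 2\,ALG$ summed over $\le m$ bidders) this yields $(k+2)\,ALG\ge\bigl(1-\tfrac1{2m}\bigr)OPT$, i.e.\ ratio $\approx k+2$. This matches or beats $2k$ once $k\ge 2$, but for $k=1$ it gives about $3$ against the claimed $2$, so the lemma as stated is not established in that case. Closing it requires either a legitimate $k$-to-one charge (e.g.\ by constraining Step~\ref{step:1_apx} so that $S_{i_t}$ always contains $U_t\cap T_{i_t}$ when $|U_t\cap T_{i_t}|\le k$, which is permissible by $k$-locality but is not something the stated algorithm enforces) or a different accounting of the slack.
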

\begin{proof}
We first explain how to implement step \ref{step:1_apx} in the algorithm using only value queries.
Given a set $U$ consisting at least
$k$ elements, we need to find a set $S\in
\binom{U}{k}$ with $v(S)\ge v(U) -\frac{1}{2m}v([m])$. This can be done as follows. Simply,
as long as $|U|\ge k$, we throw an element $x\in U$ that
maximizes $v(U\setminus\{x\})$. Since at each step we loose at most $\frac{1}{2m^2}v([m])$, the total loss is $\le \frac{1}{2m}v([m])$.

Next, we show that the algorithm indeed achieves an approximation ratio of $2k$. Let $S_1,\ldots,S_n$ be an optimal allocation. Assume, w.l.o.g. that $v_1(S_1)\ge \ldots\ge v_n(S_n)$ and denote $OPT=v_1(S_1)+\ldots+ v_n(S_n)$. Note that in the beginning of the $j$'th iteration $[m]\setminus U$ consists of $(j-1)k$ elements. Therefore, one of the sets $S_1,\ldots, S_{(j-1)k+1}$ is contained in $U$. Therefore the value of the allocation at this iteration is
\[
\ge v_{(j-1)k+1}(S_{(j-1)k+1})-\frac{1}{2m}OPT\ge
\frac{1}{k}\sum_{i=1}^kv_{(j-1)k+i}(S_{(j-1)k+1})-\frac{1}{2m}OPT
\]
(here, we used the fact that $v_1(S_1)\ge \ldots\ge v_n(S_n)$ and that $v_i([m])\le OPT$ for all bidders $i$). Summing over all iteration, and using the fact that there at most $\frac{m}{k}$ iterations, we conclude that the welfare of the allocation returned by the algorithm is $\ge \frac{1}{k}OPT-\frac{m}{k}\frac{1}{2m}OPT=\frac{1}{2k}OPT$
\end{proof}

Next, we prove the second part of theorem \ref{thm_apx:general valuations}.
We begin by defining a class of valuations:

\begin{definition}
A valuation $v:2^{[m]}\to \mathbb R$ is called {\bf almost single minded} is there is  $T\subset [m]$ such that $\forall S\subset [m],\;\;v(S)=1[T\subset S]+\frac{1}{m^3}|S|$.
\end{definition}

These valuations will serve us in order to show existence of a big \textit{menu} for one of the bidders: Given the valuations $v_{-i}$ of all bidders except bidder $i$, the \textit{menu} $R_{v_{-i}}$ of bidder $i$, is defined to be all possible bundles that may be allocated to $i$, that is
\[
R_{v_{-i}}:=\{T\subset [m]|\exists v_i\text{ for which }T\text{ is allocated to $i$ under the valuations } v_i,v_{-i}\}
\]
We recall here the {\em taxation principle}. Given valuations $v_{-i}$, there is a nondecreasing function $p_{v_{-i}}:R_{v_{-i}}\to \mathbb R_+$ such that the $i$'th bidder is allocated a set that maximizes $v_i(S)-p_{v_{-i}}(S)$ over all the sets in $R_{v_{-i}}$ and pays $p_{v_{-i}}(S)$.

\begin{definition}[structured menu, \cite{dobzinski2012computational}]
Given valuations $v_{-i}$, a number $0\le k\le m$ and $0\le p$ we define the {\em structured submenu} $\cs(v_{-i},k,p)$ as all the sets $S\subset [m]$ with
\begin{itemize}
\item $S\in R_{v_{-i}}$
\item $|S|=k$
\item $p-\frac{1}{m^5}< p_{v_{-i}}(S)\le p$
\item For all $T\in R_{v_{-i}}$ such strictly contains $S$, $p_{v_{-i}}(T)\ge p_{v_{-i}}(S)+\frac{1}{m^3}$
\end{itemize}
\end{definition}

\begin{lemma}\label{lem:shatter_truthful_apx}
Let $M$ be a mechanism with an approximation ratio of $m^{1-\epsilon}$ w.r.t. almost single minded, $m^{\epsilon}$-local valuations and assume that $n\ge m^{1-\frac{1}{2}\epsilon}$.
There exists a bidder $i$, almost single minded $m^{\epsilon}$-local valuations $v_{-i}$, and numbers $1\le k\le m$ and $0\le p\le 2$ such that $|\cs(v_{-i},k,p)|\ge \frac{2^{\frac{m^{\frac{1}{2}\epsilon}}{2}}}{2m^6}$.
\end{lemma}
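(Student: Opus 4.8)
The plan is to follow the direct-hardness strategy of Dobzinski \cite{dobzinski2011impossibility,dobzinski2012computational}: design an almost single-minded, $m^\epsilon$-local profile $v_{-i}$ for the other $n-1\ge m^{1-\epsilon/2}-1$ bidders so that the $m^{1-\epsilon}$-approximation guarantee of $M$, together with the taxation principle, forces bidder $i$'s menu $R_{v_{-i}}$ to contain exponentially many sets of price at most $2$; a pigeonhole over size-and-price buckets then isolates one large structured submenu. I would fix $X\subseteq[m]$ with $|X|=\ell:=m^{\epsilon/2}$ and let the other bidders desire small pairwise-disjoint blocks that together cover $[m]$, arranged so that the items of $X$ are individually contested. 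The point of such a choice is that $M$'s approximation ratio should force the (extended) price function $p_{v_{-i}}$ to take value at most $2$ on an exponentially large family $\mathcal T\subseteq 2^{X}$ of bundles of size $\le m^\epsilon$, while still charging more than $\tfrac1{m^3}$ for each extra item. For each $T\in\mathcal T$ one feeds bidder $i$ the $m^\epsilon$-local valuation $v_i^T(S)=c\cdot\mathbf 1[T\subseteq S]+\tfrac1{m^3}|S|$ (with $c\ge 2$, so $v_i^T$ is genuinely $m^\epsilon$-local, using $|T|\le m^\epsilon$) and uses $M$'s guarantee on the whole $m^\epsilon$-local class together with the taxation principle to conclude that bidder $i$ is allocated some $S_T$ with $T\subseteq S_T$ and $p_{v_{-i}}(S_T)=p_{v_{-i}}(T)\le 2$.

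The perturbation $\tfrac1{m^3}|S|$ is what turns ``$S_T\in R_{v_{-i}}$'' into membership in a structured submenu: since $S_T$ maximizes $v_i^T(S)-p_{v_{-i}}(S)$ and contains $T$, every strict superset $S'\supsetneq S_T$ (which also contains $T$) must obey $p_{v_{-i}}(S')\ge p_{v_{-i}}(S_T)+\tfrac{|S'|-|S_T|}{m^3}\ge p_{v_{-i}}(S_T)+\tfrac1{m^3}$ — exactly the defining inequality of $\cs(v_{-i},|S_T|,p)$ for $p$ the $\tfrac1{m^5}$-bucket of $p_{v_{-i}}(S_T)$, and $p\le 2$ since $p_{v_{-i}}(S_T)\le 2$. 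Hence $\{S_T:T\in\mathcal T\}\subseteq\bigcup_{1\le k\le m,\;0\le p\le 2}\cs(v_{-i},k,p)$, and since there are at most $2m^6$ admissible pairs $(k,p)$, if this union has size at least $2^{\ell/2}$ then some single $\cs(v_{-i},k,p)$ has size at least $2^{\ell/2}/(2m^6)=2^{m^{\epsilon/2}/2}/(2m^6)$, which is the claim.

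The hard part will be establishing that $|\{S_T:T\in\mathcal T\}|\ge 2^{\ell/2}$ — i.e.\ that the map $T\mapsto S_T$ separates exponentially many bundles rather than collapsing them onto a few greedily grabbed supersets — together with the dovetailed claim that one can indeed choose $v_{-i}$ so that $M$ is simultaneously forced to make each $T\in\mathcal T$ reachable and to price $S_T$ at most $2$. This is where the combinatorial design of the blocks in $v_{-i}$ and the interplay with $M$'s approximation ratio on auxiliary profiles (exploiting that $M$ must also be good when bidder $i$ is ``small'' and $S_T\setminus T$ is desired by the other bidders) must be carried out with care; I expect to lose a constant factor in the exponent there, which is why the target is $2^{m^{\epsilon/2}/2}$ rather than $2^{m^{\epsilon/2}}$.
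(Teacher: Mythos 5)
Your framing matches the paper's in its outer shell: both use the taxation principle, the perturbation $\tfrac1{m^3}|S|$ to get the "strict-superset price jump," the bound $p_{v_{-i}}(S)\le 2$ for almost-single-minded reachable sets, and a pigeonhole over the $\le 2m^6$ pairs $(k,p)$ to pass from an exponentially large reachable family to a single large structured submenu. (The paper packages the taxation step as a cited claim from \cite{dobzinski2011impossibility} about $\cs_{v_{-i}}$, the sets bidder $i$ can obtain with an almost-single-minded bid, and you re-derive it from scratch, but that part is fine.)

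The genuine gap is exactly the step you flag as "the hard part": establishing that some $v_{-i}$ yields an exponentially large reachable family in the first place. Your proposal tries to do this constructively — fix one clever profile $v_{-i}$ in advance so that the approximation guarantee directly forces exponentially many $T\subseteq X$ to be reachable at low price, and then argue that $T\mapsto S_T$ does not collapse. But nothing in your sketch prevents the mechanism from giving all those bidders the same few sets $S_T$, and you have no handle on how $M$ behaves on the fixed profile you chose; your "dovetailed claim" is precisely the theorem you are trying to prove. The paper avoids this by arguing \emph{indirectly}: assume for contradiction that $|\cs_{v_{-i}}|<\tfrac{2^{m/2n}}{m}$ for \emph{every} almost-single-minded $v_{-i}$, then pick a \emph{random} disjoint single-minded profile $(S_1,\dots,S_n)$ with $|S_i|=\tfrac m{2n}$. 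For any fixed bidder, the chance that its random block $S_i$ is contained in any one fixed set of size $\le m/4$ is $\le 2^{-m/2n}$, so by a union bound over the $<\tfrac{2^{m/2n}}{m}$ candidate sets in $\cs_{v_{-i}}$, the chance bidder $i$ is satisfied by a small set is $<\tfrac1m<\tfrac1n$; summing over bidders, with positive probability no bidder is satisfied by a set of size $\le m/4$, so at most $4$ bidders are satisfied and the welfare is $O(1)$, contradicting the $m^{1-\epsilon}$-approximation against the welfare-$n$ optimum. This contrapositive-plus-union-bound is the missing engine that your constructive plan replaces with an unproven combinatorial design; without it (or an equivalent counting argument) the proposal does not go through.
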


\begin{proof}
For simplicity, we assume that $n=m^{1-\frac{1}{2}\epsilon}$. Given almost single minded  valuations $v_{-i}$ we denote by $\cs_{v_{-i}}$ all the sets $T\subset [m]$ that the bidder $i$ can obtain by biding with an almost single minded valuation.
\begin{claim}[\cite{dobzinski2011impossibility}]
\begin{itemize}
\item For every $S\in\cs_{v_{-i}}$ and every $T\in R_{v_{-i}}$ that strictly contains $S$, $p_{v_{-i}}(T)\ge p_{v_{-i}}(S)+\frac{1}{m^3}$.
\item For every $S\in\cs_{v_{-i}}$, $0\le p_{v_{-i}}(S)\le 2$.
\end{itemize}
\end{claim}
By this claim, it is enough to show that for some $v_{-i}$, $|\cs_{v_{-i}}|\ge \frac{2^{\frac{m}{2n}}}{m}$. Indeed, in that case, by the claim, we have
\[
S_{-v_i}\subset\cup_{0\le k\le m,0\le p'\le 2m^5 }\cs\left(v_{-i},k,\frac{p'}{2m^5}\right)~.
\]
By the pigeonhole principle, there is some $\cs\left(v_{-i},k,\frac{p'}{2m^5}\right)$ consisting of $\ge \frac{2^{\frac{m}{2n}}}{2m^6}$ sets.

Suppose toward a contradiction that it is not the case.
Consider random almost single minded valuations $v_1,\ldots, v_n$ sampled uniformly from all such collections for which each $v_i$ corresponds to a set $S_i$ of size $\frac{m}{2n}$ and the sets $S_i$ are mutually disjoint.

Let $\{T_i\}_{i\in[n]}$ be the allocation given by the mechanism operating on these random valuations.
We say that the bidder $i$ is satisfied if $S_i\subset T_i$.

\begin{claim}
With a positive probability, non of the bidders is satisfied by a set of size $\le \frac{m}{4}$.
\end{claim}
This entails contradiction since in that case there are at most $4$ satisfied bidders, and therefore the total welfare is $\le 5$. On the other hand, there is some allocation that satisfies all bidders, and therefore its welfare is $\ge m^{1-\frac{1}{2}\epsilon}$. This contradicts the assumption that the approximation ratio is $m^{1-\epsilon}$. It is therefore left to prove the claim:

\begin{proof}
It is enough to show that the probability that the $n$'th bidder is satisfied by a set of size $\le \frac{m}{4}$ is $<\frac{1}{n}$. Indeed, given the sets $S_1,\ldots,S_{n-1}$, the set $S_n$ is a random subset of $X'=X\setminus \left(\cup_{i=1}^{n-1}S_i\right)$. Since $|X'|\ge \frac{1}{2}m$, the probability that the $n$'th bidder is satisfied by a {\em fixed} set $U\subset X$ of size $\le \frac{n}{4}$ is at most $\left(\frac{1}{2}\right)^{\frac{m}{2n}}$. Now, since $|S_{v_{-n}}|< \frac{2^{\frac{m}{2n}}}{m}$, there are less than $ \frac{2^{\frac{m}{2n}}}{m}$ potential such sets, and the claim follows.
\end{proof}
\end{proof}

We proceed by applying the Sauer-Shelah lemma which provides us with a subset of items $X\subseteq[m]$ of size
$|X|=\Omega\left(m^{\frac{\epsilon}{4}}\right)$ that is shattered by some $\cs(v_{-i},k,p)$. The fact that $X$ is shattered by a structured submenu with the same level of prices, allows us to solve a hard computational problem by considering a valuation that relies on these prices. In order to define such valuation, we extend the prices from $R_{v_{-i}}$ to all subsets of $[m]$. We do so as follows:
Given  $T\subset[m]$,
set $p_{v_{-i}}(T):=\min_{T\subset S,\, S\in R_{v_{-i}}}p_{v_{-i}}(S)$.
We will need the following two facts:
\begin{lemma}[\cite{dobzinski2012computational}]\label{lem:truthful_comp_apx}
Let $M$ be an efficient truthful mechanism and let $v_{-i}$ be some valuations with polynomial description. Then membership in $\cs(v_{-i},k,p)$ and calculation of (the extended) $p_{v_{-i}}$ can be realized by a circuit of polynomial size.
\end{lemma}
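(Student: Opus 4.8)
The plan is to reconstruct the argument of Dobzinski~\cite{dobzinski2012computational}. The starting point is the taxation principle: for a fixed $v_{-i}$ there is a nondecreasing $p_{v_{-i}}:R_{v_{-i}}\to\mathbb{R}_+$ so that $M$ on input $(v_i,v_{-i})$ allocates bidder $i$ a bundle $A\in R_{v_{-i}}$ maximizing $v_i(A)-p_{v_{-i}}(A)$ (with $M$'s fixed tie-breaking). Since $M$ is efficient and $v_{-i}$ has a polynomial-size description, simulating $M$ gives a polynomial-size circuit $C$ that, on any polynomially described $v_i$, outputs the bundle allocated to $i$ together with its price. So the task reduces to showing that (a) the extended price $p_{v_{-i}}(T)$ of an arbitrary $T\subseteq[m]$ can be computed from polynomially many calls to $C$, and (b) membership in $\cs(v_{-i},k,p)$ reduces to a handful of such extended-price computations.

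For (a) I would probe $C$ with the single-minded valuations $v_i^{\lambda}(Q)=\lambda\cdot 1[T\subseteq Q]$, which are legal normalized nondecreasing valuations with an $O(\log m)$-bit description once $\lambda$ is restricted to polynomial precision in a bounded range. The key observation is that $R_{v_{-i}}$ always contains a bundle of price $0$ (e.g. by individual rationality a bidder reporting the zero valuation is not charged), so the ``outside option'' of a bidder who values only $T$-supersets has value $0$; consequently $C$ allocates bidder $i$ a bundle containing $T$ exactly when $\lambda\ge p_{v_{-i}}(T):=\min\{p_{v_{-i}}(A):A\in R_{v_{-i}},\ T\subseteq A\}$. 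A binary search on $\lambda$ over a polynomially bounded range, each step running $C$ once and testing whether the returned bundle contains $T$, then pins down $p_{v_{-i}}(T)$ to precision $1/m^{5}$ in $O(\log m)$ steps (or certifies it exceeds the range, which already places it outside the relevant regime $p\le 2$). The boundary case $\lambda=p_{v_{-i}}(T)$ is immaterial because of the $1/m^{5}$ and $1/m^{3}$ slacks used below. All of this is realized by a polynomial-size circuit $C'$.

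For (b) I would prove the equivalence: for $0\le p\le 2$ and $|S|=k$, one has $S\in\cs(v_{-i},k,p)$ iff $p-\tfrac{1}{m^{5}}<p_{v_{-i}}(S)\le p$ (extended price) and $p_{v_{-i}}(S\cup\{x\})\ge p_{v_{-i}}(S)+\tfrac{1}{m^{3}}$ for every $x\in[m]\setminus S$. Forward direction: if $S$ is in the submenu, every $A\in R_{v_{-i}}$ with $A\supseteq S$ has $p_{v_{-i}}(A)\ge p_{v_{-i}}(S)$ (strict supersets by the last defining condition, $A=S$ trivially), so $S$ attains the minimum and its extended and true prices agree, giving the window; and any $A\in R_{v_{-i}}$ containing some $S\cup\{x\}$ is a strict superset of $S$, so its price, hence the extended price of $S\cup\{x\}$, exceeds $p_{v_{-i}}(S)$ by $\ge 1/m^{3}$. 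Converse: let $q:=p_{v_{-i}}(S)$ (extended); since $q\le p\le 2$ the defining minimum is attained by some $A^{*}\in R_{v_{-i}}$ with $A^{*}\supseteq S$ and $p_{v_{-i}}(A^{*})=q$; were $A^{*}\supsetneq S$, picking $x\in A^{*}\setminus S$ gives $q=p_{v_{-i}}(A^{*})\ge p_{v_{-i}}(S\cup\{x\})\ge q+1/m^{3}$, a contradiction, so $A^{*}=S$ lies in $R_{v_{-i}}$ with true price $q$, which then sits in the window; and any $T\in R_{v_{-i}}$ strictly containing $S$ contains some $S\cup\{x\}$, so $p_{v_{-i}}(T)\ge p_{v_{-i}}(S\cup\{x\})\ge q+1/m^{3}$, which is the submenu's gap condition. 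Given this equivalence, testing $S\in\cs(v_{-i},k,p)$ is one extended-price query for $S$ and $m-|S|$ more for the sets $S\cup\{x\}$, i.e. $\poly(m)$ calls to $C$, while the extended price function is directly $C'$; both are therefore polynomial-size circuits.

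I expect the only genuinely delicate step to be the probe-threshold analysis in (a) — in particular pinning the outside-option value to $0$ via the price-$0$ menu bundle and absorbing the ``exactly at threshold'' ambiguity into the $1/m^{5}$ and $1/m^{3}$ margins; once that is in place, (b) is a short combinatorial argument about the price function and the remainder is routine accounting of precision and circuit size.
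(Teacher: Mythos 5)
The paper does not actually prove this lemma; it is imported verbatim (as Lemma~\ref{lem:truthful_comp_apx}) from \cite{dobzinski2012computational} with no local derivation, so there is no ``paper's own proof'' to compare against. Judged on its own, your reconstruction is the right shape and appears to follow the spirit of the Dobzinski--Vondr\'ak argument: simulate $M$, use the taxation principle to turn allocation queries into price queries, probe with threshold single-minded valuations to locate $p_{v_{-i}}(T)$ by binary search, and reduce structured-submenu membership to a constant number of extended-price comparisons. Your equivalence in (b) -- that $S\in\cs(v_{-i},k,p)$ iff the extended price of $S$ lies in the window and every $p_{v_{-i}}(S\cup\{x\})$ exceeds it by $\ge 1/m^{3}$ -- is a clean and correct reformulation, and the argument that the minimizer $A^{*}$ must equal $S$ (else the $S\cup\{x\}$ bound is violated) is exactly the point.

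Two places are handled too lightly, though neither seems fatal. First, your part~(a) silently assumes the menu $R_{v_{-i}}$ contains a price-$0$ bundle (the ``outside option''). This does follow if the mechanism is individually rational, but the paper's statement of the taxation principle never asserts IR, so you should either add IR as a hypothesis or observe that the threshold comparison only needs the minimum menu price $p_{0}$, and detect $T$-containment by whether the probe is allocated a $T$-superset at $\lambda$ large versus small, with the threshold at $p_{v_{-i}}(T)-p_{0}$, which still determines $p_{v_{-i}}(T)$ since $p_{0}$ can be found by a $v_i\equiv 0$ probe. Second, the claimed precision $1/m^{5}$ of the binary search is exactly the width of the submenu window $(p-1/m^{5},p]$, so ``the slacks absorb the boundary'' does not quite close the argument: a point whose true extended price sits within $1/m^{5}$ of $p$ or of $p-1/m^{5}$ could be misclassified. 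You need either to binary-search to a strictly finer resolution (which is free, $O(\poly(m))$ more steps) and then argue that the relevant comparisons only occur at grid values $p=p'/(2m^{5})$ chosen in Lemma~\ref{lem:shatter_truthful_apx}, or to invoke that an efficient $M$ outputs prices of polynomial bit length, hence the search terminates at an exact value. Finally, a one-line remark that the probe valuations $v_i^{\lambda}$ must lie in the class for which $M$ is truthful (they do, in the paper's $m^{\epsilon}$-local setting, provided $|T|\le m^{\epsilon}$) would make the argument self-contained.
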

We are now ready to prove the second part of theorem \ref{thm_apx:general valuations}.

\begin{lemma}\label{vertex cover_apx}
An efficient truthful mechanism cannot provide an approximation ratio of $m^{1-\epsilon}$ w.r.t. $m^{\epsilon}$-local valuations, unless $NP\subseteq P/poly$.
\end{lemma}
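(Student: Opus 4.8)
The plan is to reduce \textsc{Vertex-Cover} to the problem of computing the mechanism's outcome, using the machinery assembled above. Suppose $M$ is an efficient truthful mechanism with approximation ratio $m^{1-\epsilon}$ w.r.t. $m^{\epsilon}$-local valuations; I want to derive an efficient (non-uniform) algorithm for \textsc{Vertex-Cover}. First I would invoke Lemma~\ref{lem:shatter_truthful_apx} to obtain a bidder $i$, almost-single-minded $m^{\epsilon}$-local valuations $v_{-i}$, and parameters $k,p$ with $|\cs(v_{-i},k,p)|\ge 2^{m^{\epsilon/2}/2}/(2m^6)$. Then I would apply the Sauer--Shelah Lemma (the classical one, i.e.\ $k=|Y|=2$ in Theorem~\ref{thm:sauer_gen}) to the family $\cs(v_{-i},k,p)$ viewed as a subset of $\{0,1\}^{[m]}$: the exponential lower bound on its size forces a shattered set of items $X\subseteq[m]$ with $|X|=\Omega(m^{\epsilon/4})$. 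So within $\cs(v_{-i},k,p)$, every subset of $X$ is realized as $S\cap X$ for some $S$ in the submenu, and crucially all these $S$ sit at the \emph{same} price level $p$ (up to the $1/m^5$ window) and share the ``upward price jump'' property: any $T\in R_{v_{-i}}$ strictly containing such an $S$ costs at least $p_{v_{-i}}(S)+1/m^3$ more.

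Next I would encode a \textsc{Vertex-Cover} instance $(G,t)$ on vertex set $X$ into a valuation $v_i$ for bidder $i$. Using the extended price function $p_{v_{-i}}$ (extended to all subsets via $p_{v_{-i}}(T)=\min_{T\subseteq S,\,S\in R_{v_{-i}}}p_{v_{-i}}(S)$, which by Lemma~\ref{lem:truthful_comp_apx} is poly-time computable by a circuit given $v_{-i}$ hard-wired), I would define $v_i$ so that $v_i(W)$ is large (say $\approx p + \text{const}$) exactly when $W\cap X$ is a vertex cover of $G$ of size $\le t$, and otherwise $v_i(W) - p_{v_{-i}}(W)$ is pushed below that threshold — either because $W\cap X$ fails to be a cover, or because covering forces $|W|>t$ which via the price-jump property makes $W$ too expensive. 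One must also make sure $v_i$ is $m^{\epsilon}$-local (it is, since it only depends on $W\cap X$ and $|X|=O(m^{\epsilon/4})$) and nondecreasing/normalized, possibly with a tiny additive $\frac{1}{m^3}|W|$ tie-breaking term as in the ``almost single minded'' definition. By the taxation principle, the bundle $M$ allocates to $i$ maximizes $v_i(S)-p_{v_{-i}}(S)$ over $R_{v_{-i}}$; by construction this maximum is attained at a set whose trace on $X$ is a size-$\le t$ cover precisely when one exists. Since $G$ has a vertex cover of size $t$ iff it has one of size $\le t$, reading off $M$'s allocation to $i$ decides the instance.

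The correctness argument then has two directions. If $G$ has a cover $C\subseteq X$ of size $\le t$: since $(X,\cdot)$ is shattered by the submenu there is $S\in\cs(v_{-i},k,p)$ with $S\cap X = C$ (for the containment/covering-style shattering we may need $|S\cap X|$ free to equal $|C|$ exactly, which the shattering gives), $v_i(S)$ is the large value, $p_{v_{-i}}(S)\le p$, so $v_i(S)-p_{v_{-i}}(S)$ beats every alternative and $M$ allocates such an $S$ to $i$; hence $T:=M\text{'s bundle}$ satisfies $T\cap X$ is a cover of size $\le t$. Conversely, if $M$ allocates $T$ with $T\cap X$ a cover of size $\le t$, then $G$ has such a cover directly. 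Combining, \textsc{Vertex-Cover} on $|X|=\Omega(m^{\epsilon/4})$ vertices is solved by a poly-size circuit, so $NP\subseteq P/poly$.

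The main obstacle is the valuation-design step: one must simultaneously (i) make ``$W\cap X$ is a cover of size $\le t$'' the winning condition under the \emph{fixed, externally imposed} price function $p_{v_{-i}}$ that the designer does not control, (ii) exploit the $1/m^3$ price-jump to penalize oversized covers without the penalty being swamped by the $1/m^5$ slack in the price window or by the $\frac{1}{m^3}|W|$ additive term, (iii) keep $v_i$ monotone, normalized, and $m^{\epsilon}$-local, and (iv) ensure $v_i$ is efficiently computable given the circuit from Lemma~\ref{lem:truthful_comp_apx}. Balancing the $1/m^3$ versus $1/m^5$ scales (and making sure the cover/non-cover value gap dominates both) is the delicate quantitative heart of the proof; everything else is bookkeeping with the taxation principle and the shattering guarantee.
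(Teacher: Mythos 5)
Your high-level plan is exactly the paper's: invoke Lemma~\ref{lem:shatter_truthful_apx} to get a large structured submenu $\cs(v_{-i},k,p)$, apply the classical Sauer--Shelah Lemma to extract a shattered item set $U$ of size $\Omega(m^{\epsilon/4})$, encode \textsc{Vertex-Cover} on $U$ into a valuation $v_i$ for bidder $i$, and read off the answer from the bundle the taxation principle forces $M$ to allocate. The auxiliary lemmas you cite (the shattering lemma, poly-size circuit realizability of $p_{v_{-i}}$ and of membership in $\cs$, and the taxation principle) are precisely the ones the paper uses, and your localization observation (that $v_i$ depends only on $W\cap U$, hence is $m^\epsilon$-local) is correct.

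The unresolved step you flag --- the valuation design --- is indeed where your sketch, as written, would not go through. Making ``$v_i(W)$ large exactly when $W\cap U$ is a cover of size $\le t$'' cannot be monotone: a cover of size $\le t$ can be strictly contained in a cover of size $>t$. The paper sidesteps this by decomposing $v_i=v_i^1+v_i^2$ where $v_i^1(S)=6m^2$ whenever $S\cap U$ is \emph{any} vertex cover (so $v_i^1$ is trivially monotone and overwhelms prices, forcing the allocated bundle's trace on $U$ to be a cover), and then $v_i^2$ is a small bonus of $3$ awarded precisely to sets $S\in\cs$ with $|S\cap U|=t$, extended monotonically by also awarding $3$ to any $S$ with $p_{v_{-i}}(S)>p$ and $|S|>k$ (which is exactly what the structured-submenu price-jump guarantees for supersets of a set in $\cs$). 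Because all sets in $\cs$ sit at essentially the same price level $p$, that bonus tie-breaks in favor of a size-$t$ cover whenever one exists, and shattering guarantees that such an $S\in\cs$ with $S\cap U=C$ exists for any candidate cover $C$. So the fix is not a quantitative balancing of $1/m^3$ versus $1/m^5$ scales, but a structural choice: reward all covers uniformly with the large term, and use the small term only to select among submenu sets; your proposal would have you discover this once you tried to verify monotonicity.
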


\begin{proof}
Let $M$ be a mechanism as specified. Fix almost single mined, $m^{\epsilon}$-local valuations $v_{-i}$, $k$ and $0\le p\le 2$ for which  $|\cs(v_{-i},k,p)|\ge \frac{2^{\frac{m^{\frac{1}{2}\epsilon}}{2}}}{2m^6}$. Denote $\cs= \cs(v_{-i},k,p)$. By the Sauer-Shelah lemma, there is a set $U\subset [m]$ of size $m^{\frac{\epsilon}{4}}$ (for sufficiently large $m$) that is shattered by $\cs$.

We will reach a contradiction by showing that this entails an efficient (non-uniform) algorithm that detect whether a graph on the vertex set $U$ has a vertex-cover of size $t$.

Consider the valuation $v_i=v_i^1+v_i^2$ defined as follows: $v_i^1(S)=6m^2$ if $S\cap U$ is a vertex-cover and $0$ otherwise. While
\[
v^2_i(S)=
\begin{cases}
3 & (S\in \cs\text{ and }|S\cap U|=t)\text{ or }(p_{v_{-i}}(S)> p\text{ and }|S|>k)
\\
0 & \text{otherwise}
\end{cases}
\]
It is not hard to see that $v_i$ is nondecreasing, $m^\epsilon$-local, and by lemma \ref{lem:truthful_comp_apx} can be realized by a polynomial sized circuit. Also, by the taxation principle, it is not hard to check that if there is a cover $C$ of size $t$ of $U$, then the mechanism, operating on $v_1,\ldots,v_n$, will allocate the bidder $i$ a set $T$ such that $T\cap U$ is a cover of size $t$. This concludes the proof.
\end{proof}

\section{Single- and Multi-Minded Valuations}
Recall that single minded bidders are interested in a single bundle of items and are willing to pay a fixed price for the whole bundle (or for any superset), and pay nothing in any other case.
Though simply defined and easily represented, these valuation are complex enough such that even without requiring a mechanism to be incentive-compatible, one could not expect an approximation ratio better than $O(\sqrt{m})$. On the other hand, one of the few positive results in this scenario is a greedy algorithm \cite{lehmann2002truth} that obtains an  $O(\sqrt{m})$-approximation ratio while maintaining incentive compatibility.
One can verify that this greedy mechanism is not VCG-based, but still expect similar performance from VCG-based mechanisms. Our first theorem asserts that this is not the case.

\begin{theorem}\label{single minded thm}
For every $\epsilon>0$, no efficient VCG-based mechanism obtains an approximation ratio of $m^{1-\epsilon}$ w.r.t. single-minded valuations, unless $NP\subseteq P/poly$.
\end{theorem}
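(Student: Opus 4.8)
The plan is to observe that this statement is the special case $d=1$ of Theorem~\ref{single minded duplicates} (equivalently, it restates Theorem~\ref{thm:SM,NP}), and to run that argument; the dividend of $d=1$ is that the combinatorial problem one reduces from becomes ordinary \textsc{Set-Packing}, which is $NP$-hard unconditionally, so no Unique Games assumption is needed. Concretely, I would argue by contradiction: assume $M$ is an efficient VCG-based mechanism with bank of allocations $\ch\subset P([m],[n])$ achieving approximation ratio $m^{1-\epsilon}$ with respect to single-minded valuations, and consider it on instances with a number of bidders $n\ge m^{1-3\epsilon/4}$. Since $M$ is MIR with bank $\ch$, the collection $\ch$ has approximation ratio $m^{1-\epsilon}$ with respect to single-minded valuations, so Corollary~\ref{thm:shatter_single_minded} produces a shattered pair $S\subset[m]$, $A\subset[n]$ with $|S|=\tilde\Omega(m^{3\epsilon/4})$ and $|A|=m^{\epsilon/4}$.

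Next I would set up a reduction from \textsc{Independent-Set} (equivalently \textsc{Set-Packing}). Given a graph $G$ with at most $|A|=m^{\epsilon/4}$ vertices and an integer $C$, inject the vertices of $G$ into $A$ and, for each vertex $a$, let bidder $a$ be single-minded for the bundle $E_a$ of edges of $G$ incident to $a$; all remaining bidders (the unused ones in $A$ and all those outside $A$) get the zero valuation. This is legitimate since $|E(G)|\le\binom{|A|}{2}=O(m^{\epsilon/2})=o(|S|)$, so $E(G)$ embeds into $S$, with the leftover items of $S$ and all items outside $S$ playing no role. Observe that $E_a\cap E_b=\emptyset$ exactly when $\{a,b\}\notin E(G)$.

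The correctness of the reduction is where the shattering is used. In any allocation, the set of satisfied bidders lies in $A$ and has pairwise-disjoint desired bundles (the allocated bundles are disjoint since $d=1$), hence corresponds to an independent set of $G$; so the welfare of \emph{any} allocation is at most $\alpha(G)$. Conversely, given an independent set $W$ of $G$, the bundles $\{E_a\}_{a\in W}$ are pairwise disjoint; extend them to a function $f\colon S\to A$ with $E_a\subseteq f^{-1}(a)$ for $a\in W$, which by shattering of $(S,A)$ equals $T_a\cap S$ for all $a\in A$ for some $\{T_i\}\in\ch$, and that allocation satisfies all $|W|$ bidders of $W$. Since $M$ is MIR it outputs an allocation of maximum welfare over $\ch$, so the welfare of $M$'s output is exactly $\alpha(G)$; in particular it is $\ge C$ if and only if $(G,C)$ admits an independent set of size $\ge C$.

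Finally I would close the complexity accounting: choosing $m$ to be the least integer with $m^{\epsilon/4}\ge|V(G)|$ makes $m$ a fixed polynomial in the size of $G$, so building the auction, running the efficient mechanism $M$, and reading off the welfare of its output all take time $\poly(m)=\poly(|V(G)|)$, and the only non-uniform ingredient is the pair $(S,A)$, of description length $O(m)$. Hence \textsc{Independent-Set} would be decided in $P/poly$, forcing $NP\subseteq P/poly$. I do not expect a genuine obstacle beyond what Corollary~\ref{thm:shatter_single_minded} already supplies; the two points needing a little care are the parameter bookkeeping (that the edge universe $\binom{|A|}{2}$ of an $|A|$-vertex graph fits comfortably inside the shattered item set $S$ of size $\tilde\Omega(m^{3\epsilon/4})$, so that an $NP$-hard instance does embed) and the elementary but crucial observation that $d=1$ needs only $2$-packing, i.e.\ plain \textsc{Set-Packing}, so the UGC invoked in Theorem~\ref{single minded duplicates} is not required here.
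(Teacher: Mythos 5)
Your argument is correct and follows the paper's own route: invoke Corollary~\ref{thm:shatter_single_minded} to extract a shattered pair $(S,A)$ and then embed an NP-hard packing problem using the shattered domain and the MIR property, noting (as the paper also does, in the proof of Theorem~\ref{single minded duplicates}) that $d=1$ corresponds to $2$-packing, i.e.\ plain \textsc{Set-Packing}, so no UGC assumption is needed. Your only cosmetic deviation is phrasing the reduction via \textsc{Independent-Set} with edge-incidence sets rather than directly from a given \textsc{Set-Packing} instance; these are interchangeable, and your parameter check that $|E(G)|=O(m^{\epsilon/2})$ fits inside $|S|=\tilde\Omega(m^{3\epsilon/4})$ closes the same bookkeeping the paper leaves implicit.
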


In cases where there are $d$ duplicate copies of each item, one might hope to boost up the poor performance of mechanisms, predicated by the above theorem. Indeed, a mechanism devised by Bartal et al. \cite{bartal2003incentive}, is both incentive compatible, have an approximation ratio of $dm^{\frac{1}{d-2}}$, and efficient whenever demand queries can be carried out efficiently (in particular, it is efficient for single minded bidders).

Once again, one might hope that the universal technique of VCG-based mechanisms can achieve similar (or even better) approximation ratios. A generalization of theorem \ref{single minded thm} shows that it is impossible.

\begin{theorem}\label{single minded duplicates_apx}
Assuming the UGC, for every constant $d$ and every $\epsilon>0$, no efficient VCG-based mechanism with $d$ duplicates have an approximation ratio of $m^{1-\epsilon}$ w.r.t. single-minded valuations, unless  $NP\subseteq P/poly$.
\end{theorem}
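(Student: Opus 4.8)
\emph{Proof plan.} The plan is to reduce an NP-hard promise problem to the exact optimization task that a maximal-in-range mechanism solves over a shattered sub-domain, paralleling the proof of Theorem~\ref{single minded duplicates}. Suppose, for contradiction, that $M$ is an efficient VCG-based mechanism for combinatorial auctions with $d$ duplicates whose bank of allocations $\mathcal{H}\subset P_d([m],[n])$ has approximation ratio $m^{1-\epsilon}$ with respect to single-minded valuations; we restrict attention to instances with $n\ge m^{1-3\epsilon/4}$. First I would observe that this guarantee is exactly the $m^{1-\epsilon}$-containment property of Section~\ref{sec:partitions}: if the bidders are single-minded and bidder $i$ desires the bundle $S_i$, then the allocation $M$ outputs must satisfy a $1/m^{1-\epsilon}$ fraction of the satisfiable bidders, hence some $\{T_i\}\in\mathcal{H}$ covers that many of them. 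Therefore Corollary~\ref{thm:shatter_single_minded} --- itself obtained from Theorem~\ref{thm:shatter_single_minded_0} by applying the generalized Sauer--Shelah Lemma (Theorem~\ref{thm:sauer_gen}) to the restricted class $\mathcal{H}_{S,A}$ --- yields a shattered pair $X\subset[m]$, $B\subset[n]$ with $|B|=m^{\epsilon/4}$ and $|X|=\tilde{\Omega}(m^{3\epsilon/4})$: every partition of the items of $X$ among the indices of $B$ is induced by some $d$-duplicate allocation in $\mathcal{H}$.

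Next I would set up the reduction from the $k$-packing promise problem with $k=d+1$, taking the shattered item set $X$ to be the universe $U$. Given input sets $S_1,\dots,S_{|B|}\subset X$ and a threshold $C$, assign bidder $j\in B$ the single-minded valuation desiring $S_j$ and give every bidder outside $B$ the zero valuation, then run $M$ and report YES iff the resulting social welfare is at least $C$. In a YES instance there are $C$ pairwise disjoint sets among the $S_j$; since $(X,B)$ is shattered and $M$ is maximal-in-range, the allocation giving each of those $C$ bidders its desired bundle belongs to $\mathcal{H}$, so $M$'s output has welfare $\ge C$. In a NO instance, every family of $C$ of the sets has an element of $X$ covered by $k=d+1$ of them, so no $d$-duplicate allocation --- in particular none in $\mathcal{H}$ --- can satisfy $C$ bidders at once, and $M$'s output has welfare $<C$. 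As $M$ is efficient and the constructed single-minded valuations have polynomial description, this is a (non-uniform) polynomial-time decision procedure for the promise problem.

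Finally I would invoke the hardness of the $k$-packing promise problem. For $k=2$ it is exactly SET-PACKING and is NP-hard with no assumption; the point where the Unique Games Conjecture is used is the assertion that for every constant $k$ it remains NP-hard, which follows by adapting the UGC-based inapproximability of vertex cover / independent set in $k$-uniform hypergraphs of Bansal and Khot~\cite{bansal2010inapproximability} into a gap instance of precisely the ``$C$ pairwise disjoint sets'' versus ``every $C$ sets force an element of multiplicity $k$'' form (this adaptation, pointed out to us by Khot, is carried out separately). Combining the three steps gives $NP\subseteq P/poly$, contradicting the hypothesis. Taking $d=1$, i.e.\ $k=2$, no assumption beyond $P\ne NP$ is needed, which recovers Theorem~\ref{single minded thm}.

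I expect the only genuinely new ingredient, and thus the main obstacle, to be this last step: extracting from the Bansal--Khot hardness a gap instance in exactly the promised form, while keeping the number of sets and the universe size polynomial in $|B|=m^{\epsilon/4}$ (equivalently, polynomial in $m$), so that the overall reduction runs in polynomial time. The shattering half is a direct application of the machinery of Section~\ref{sec:partitions}, the only extra care being to track the $d$-duplicate version of shattering; and translating a hard instance into single-minded valuations is immediate.
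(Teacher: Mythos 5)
Your proposal is correct and follows essentially the same route as the paper's proof: it invokes Corollary~\ref{thm:shatter_single_minded} (equivalently Theorem~\ref{thm:shatter_single_minded_0} via the containment property) to extract the shattered pair $(X,B)$, and then reduces the $(d+1)$-packing promise problem over universe $X$ with bidders $B$ to the exact optimization performed by the MIR mechanism, with the UGC-hardness of that promise problem (derived from the Bansal--Khot hypergraph result, as in Appendix~\ref{apx:k-packing}) supplying the final contradiction. The only addition beyond the paper's proof is your explicit remark that the $d=1$ case recovers Theorem~\ref{single minded thm} without UGC, which is an accurate observation.
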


Single minded valuations, are thus computationally hard to deal with, while trivial from a communicational perspective - only $O(m)$ bits are required to encode each valuation. Considering not one, but several bundles of items, changes the picture:
\begin{theorem} \label{multi-minded thm_apx}
For every constant $d$ and $\epsilon>0$, if a VCG-based mechanism for combinatorial auctions with $d$-duplicates obtains an approximation ratio of $m^{1-\epsilon}$ w.r.t. multi-minded valuations, it must use exponential communication.
\end{theorem}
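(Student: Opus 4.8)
The plan is to follow the two-step template of Section~\ref{sec:vcg-techniques} — extract a shattered pair from the mechanism's bank, then embed a hard problem in it — but with the hard instance now drawn from a \emph{family-valued} (``multi-minded'') analogue of the packing promise problem behind Theorem~\ref{single minded duplicates}, so that each bidder's private input carries exponentially much information. For the first step, note that $\mathbf{Single-Minded}\subset\mathbf{Multi-Minded}$, so a VCG-based mechanism $M$ with bank $\ch\subseteq P_d([m],[n])$ that $m^{1-\epsilon}$-approximates welfare for multi-minded valuations does so a fortiori for single-minded ones; taking $n=\lceil m^{1-3\epsilon/4}\rceil$, Corollary~\ref{thm:shatter_single_minded} yields $S\subseteq[m]$, $A\subseteq[n]$ with $|S|=\tilde{\Omega}(m^{3\epsilon/4})$, $|A|=m^{\epsilon/4}$, shattered by $\ch$ (every partition of $S$ among the bidders of $A$ is induced by some allocation in $\ch$).

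For the reduction, consider the communication approximate-disjointness promise problem whose input is distributed among the $t:=|A|$ bidders, bidder $i$ holding a family $\mathcal{F}_i\subseteq 2^{S}$ with $\emptyset\notin\mathcal{F}_i$, and which asks to distinguish (YES) there is a partition $\{P_i\}_{i\in A}$ of $S$ with $P_i\supseteq B_i$ for some $B_i\in\mathcal{F}_i$ for every $i$, from (NO) no $d$-duplicate allocation $\{T_i\}_{i\in A}$ of $S$ has $T_i\supseteq B_i$ for some $B_i\in\mathcal{F}_i$ for every $i$; these two cases are exclusive since a partition is a $d$-duplicate allocation. On such an instance, let bidder $i\in A$ bid the multi-minded valuation $v_i(T):=\max_{B\in\mathcal{F}_i}1[B\subseteq T\cap S]$, let every bidder outside $A$ bid the zero valuation, and run $M$. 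Since $M$ is MIR, the welfare it attains equals $\max_{\{Z_i\}\in\ch}\sum_{i\in A}v_i(Z_i)$; because $(S,A)$ is shattered this is at least the optimal welfare over partitions of $S$ among $A$ (hence $=t$ in the YES case, the witnessing partition being induced by an allocation of $\ch$), and it is at most the optimal welfare over $d$-duplicate allocations of $S$ among $A$ (hence $\le t-1$ in the NO case). So a transcript of $M$ on this profile decides the promise problem, and $M$ uses at least its communication complexity.

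It remains to show this promise problem requires communication exponential in $m$. When $d=1$ this is clean: the NO condition is simply that the $t$ bidders cannot all be simultaneously satisfied, so, designating $t-2$ of them as single-minded ``fillers'' each desiring a distinct reserved item of $S$, the residual problem for the two remaining bidders over $S_0:=S\setminus\{\text{fillers' items}\}$ is exactly $2$-party set-disjointness over the universe $\binom{S_0}{|S_0|/2}$ of size $2^{\Theta(|S|)}$ (one bidder's family is its input collection of half-sized subsets of $S_0$, the other's is their complements; a disjoint representative pair exists iff two of the half-sets coincide iff the two input collections intersect), which requires $2^{\Omega(|S|)}=2^{\tilde\Omega(m^{3\epsilon/4})}$ bits. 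For general constant $d$ the NO side must additionally rule out \emph{$d$-duplicate} feasibility, so — exactly as the parameter $k=d+1$ enters Theorem~\ref{single minded duplicates} — one keeps $d+1$ ``active'' bidders whose families are $(d+1)$-fold overlapping in every joint selection and inherits the lower bound from the $(d+1)$-party variant of set-disjointness / approximate set packing; crucially, in the communication world this lower bound is unconditional, which is why the theorem needs no complexity-theoretic assumption (unlike the computational Theorem~\ref{single minded duplicates}).

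The main obstacle is precisely this last point for $d\ge2$: one must engineer a $(d+1)$-party, family-valued packing instance that is at once (i) satisfiable by an honest \emph{partition} of $S$ among $A$ in the YES case — so the witness lies in $\ch$ by shattering — (ii) infeasible for \emph{every} $d$-duplicate allocation in the NO case, and (iii) exponentially hard to communicate. Reconciling (i) and (ii) across the gap between partitions and $d$-duplicate allocations while preserving (iii) is the delicate part; it is handled by the same packing-instance design used in Theorem~\ref{single minded duplicates}, now phrased with one family per bidder, together with fillers that single-mindedly soak up the remaining $t-(d+1)$ items. Modulo this, the argument is a routine composition of the VC step (Corollary~\ref{thm:shatter_single_minded}) with the exactness of MIR optimization on the shattered domain.
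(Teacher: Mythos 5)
Your high-level architecture matches the paper's: use $\mathbf{Single\text{-}Minded}\subset\mathbf{Multi\text{-}Minded}$ to invoke the shattering corollary, fix a shattered pair $(S,A)$, and reduce from a multi-party disjointness problem by encoding each bidder's input as a multi-minded valuation over $S$. Your $d=1$ construction also checks out: pairing a family of half-sized subsets of $S_0$ with their complements turns the YES/NO cases into exactly two-party set-disjointness over the universe $\binom{S_0}{|S_0|/2}$, which has exponential size. (One remark: the paper only shatters $d+1$ bidders, not $m^{\epsilon/4}$; since $d$ is constant, this is both sufficient and cleaner, as the $r^4$ loss in the multi-party disjointness bound stays constant.)

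However, for $d\ge 2$ there is a genuine gap, and you essentially flag it yourself in the last paragraph. What you lack is the precise combinatorial gadget that simultaneously (i) makes the YES witness a partition (so that shattering places it in the bank), (ii) makes the NO instance infeasible for every $d$-duplicate allocation, and (iii) remains a faithful encoding of a hard communication problem. The paper's device is to fix a family $\cf=\{D^s\}_{s=1}^t$ of partitions of $S$ into $|A|=d+1$ parts with the \emph{$(d+1)$-wise intersection property}: $\bigcap_{i=1}^{d+1} D_i^{s_i}=\emptyset$ iff $s_1=\cdots=s_{d+1}$. Such a family of size $t=e^{\Omega(|S|/(d+1)^{d+2})}$ exists by a probabilistic argument (the paper's Lemma~\ref{int. prop. col.}). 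Bidder $i$ with input $A_i\subseteq[t]$ bids $v_i(T)=1$ iff $D_i^s\subseteq T$ for some $s\in A_i$. If $\bigcap_i A_i\ni s$, the single partition $D^s$ satisfies all $d+1$ bidders and lies in $\ch$ by shattering; if the $A_i$ are pairwise disjoint, any feasible choices $s_1,\ldots,s_{d+1}$ are not all equal, so $\bigcap_i D_i^{s_i}\neq\emptyset$ and some item would have to appear in all $d+1$ parts, contradicting $d$-duplication. This yields welfare $d+1$ vs.\ $\le d$, which the MIR mechanism must distinguish, and the reduction to $(d+1)$-party approximate disjointness (communication $\Omega(t/(d+1)^4)$) is immediate. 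Your appeal to ``the same packing-instance design used in Theorem~\ref{single minded duplicates}'' doesn't close this: that theorem's $k$-packing promise problem is a computational NP-hardness gadget under UGC, not a communication object, and it does not by itself produce the partition family with the $(d+1)$-wise intersection property that the communication reduction hinges on. Without that lemma (or an equivalent), the $d\ge 2$ case of your argument is incomplete.
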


\subsection{Single-Minded Computational Bound -- Proof of Theorem \ref{single minded duplicates_apx}}
Suppose that $M$ is VCG-based, and obtains an approximation ratio of $m^{1-\epsilon}$  for  combinatorial auctions with $d$ duplicates w.r.t. single-minded valuations. Let $\mathcal{H}\subset P_d ([m],[n])$ denote its bank of allocations, and let $S\in P([m],[n])$ represent single minded bidders (bidder $i$ desires the bundle $S_i$). The approximation ratio implies that must be some $T\in \mathcal{H}$ such that
\[
n\leq m^{1-\epsilon}\cdot | \{i|S_i\subset T_i\}|~.
\]
Therefore, by theorem \ref{thm:shatter_single_minded}, there is a shattered pair $B\subset [n]$, $X\subset [m]$ of sizes $m^{\epsilon/4}$, $\Omega(m^{3\epsilon/4})$.

We will show computational hardness by a reduction from the \textbf{k-packing promise problem}. In this problem, the input consists of $r$ sets $S_1,....,S_r\subset U$ and
an integer $C>0$, and the goal is to distinguish between the following two options:
\begin{itemize}
	\item \textbf{Positive:} There are $C$ sets in $S_1,....,S_n$ that are pairwise disjoint.
	\item \textbf{Negative:} For every collection $\mathcal{F}$ of $C$ sets out of $S_1,....,S_n$, there is some $x\in U$ that is covered by $k$ sets of $\mathcal{F}$.
\end{itemize}
Observe that setting $k=2$ yields the famous "set packing" problem, which is known to be NP-hard. We shall make use of the fact that assuming UGC, this problem is NP-hard for every constant $k$. (The proof, based on \cite{bansal2010inapproximability}, can be found in the appendix).

The reduction is given by taking $k=d+1$, and using the shattered $X$ set as the universe $U$: Given $S_1,...,S_{m^{\epsilon/4}}\subset X$ and an integer $C>0$, suppose that each bidder $i\in B$ desires $S_i$, and all bidders not in $B$ have the zero valuation.
Now, in the ``yes" cases, since there is a collection of $C$ pairwise disjoint subsets from $S_1,...,S_{m^{\epsilon/4}}$, and since the pair $X,B$ is shattered and $M$ is MIR, the social welfare, $\sum_{i} v_i(S_i)$, will be at least $C$. On the other hand, in the ``NO" cases, it is impossible to satisfy $C$ bidders by {\em any} $d$-duplicate allocation, and therefore, the social welfare will be less that $C$.

\subsection{Multi-Minded Communication Bound -- Proof of Theorem \ref{multi-minded thm_apx}}

\begin{definition}
Let $\mathcal{F}=\{A^s\}_{s=1}^t$ be a collection of partitions such that each $A^s$ partition the universe $X$ into $k$ sets. We say that $\cf$ has the $k$-wise intersection property if for every choice of $k$ indices $1\leq s_1,...,s_k\leq t$, we have $\cap_{i=1}^k A_i ^{s_i} =\emptyset$ if and only if $s_1=\ldots=s_k$.
\end{definition}

The following lemma asserts that a large collection of partitions with such intersection property exists.

\begin{lemma}[essentially \cite{dobzinski2005optimal}]  \label{int. prop. col.}
There exists a collection of partitions $\mathcal{F}=\{A^s\}_{s=1}^t$ of the set $X$ into $k$ sets, such that $|\mathcal{F}|=t=e^{\frac{|X|}{k^{k+1}}}$, and $\cf$ has the $k$-wise intersection property holds.
\end{lemma}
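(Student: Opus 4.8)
The plan is to produce $\cf$ by a direct probabilistic construction, in the spirit of the intersecting families of partitions used in \cite{dobzinski2005optimal}. First I would draw, for each index $s\in[t]$, a colouring $c_s:X\to[k]$ uniformly at random, all the $c_s$ independent, and set $A^s=(A^s_1,\ldots,A^s_k)$ with $A^s_j:=c_s^{-1}(j)$. Each $A^s$ is then a partition of $X$ into $k$ (possibly empty) parts, so the family has the right shape; the content is to show that with positive probability it has the $k$-wise intersection property, after which the lemma follows by fixing one good outcome.

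The "only if'' direction is deterministic: if $s_1=\cdots=s_k$ then $\bigcap_{i=1}^k A_i^{s_i}$ is an intersection of pairwise distinct parts of the single partition $A^{s_1}$, hence empty (and likewise any tuple in which some index repeats forces an empty intersection between two parts of one partition). For the reverse direction I would fix a tuple $(s_1,\ldots,s_k)$ of pairwise distinct indices and bound the probability that $\bigcap_{i=1}^k A_i^{s_i}=\emptyset$. For a fixed $x\in X$ the event $c_{s_i}(x)=i$ for all $i$ has probability exactly $k^{-k}$, since it is an intersection of $k$ independent events governed by $k$ distinct random colourings; these events are independent over the $|X|$ choices of $x$, so $\Pr\left(\bigcap_{i=1}^k A_i^{s_i}=\emptyset\right)=(1-k^{-k})^{|X|}<e^{-|X|/k^{k}}$.

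Finally I would union-bound over the at most $t^{k}$ relevant tuples: the probability that some tuple of distinct indices gives an empty intersection is at most $t^{k}e^{-|X|/k^{k}}$. Substituting $t=e^{|X|/k^{k+1}}$ yields $t^{k}=e^{|X|/k^{k}}$, so this product is strictly below $1$ thanks to the strict inequality $(1-k^{-k})^{|X|}<e^{-|X|/k^{k}}$. Hence with positive probability every tuple of distinct indices has a non-empty intersection, which together with the deterministic analysis of tuples with a repeated (in particular constant) index shows that the constructed $\cf$ has the $k$-wise intersection property and $|\cf|=t=e^{|X|/k^{k+1}}$.

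I do not expect a genuine obstacle here: this is a textbook first-moment computation. The only points needing care are (i) setting up the independence correctly, namely that for pairwise distinct indices the $k$ events $\{c_{s_i}(x)=i\}$ depend on disjoint blocks of the randomness and therefore multiply, and that they are also independent across $x\in X$; and (ii) matching the exponents so that the union bound closes at exactly $t=e^{|X|/k^{k+1}}$, which is precisely why the strict estimate $1-p<e^{-p}$ (rather than $1-p\le e^{-p}$) is invoked. Alternatively, one may simply cite the construction of \cite{dobzinski2005optimal} and verify that it meets the stated parameters.
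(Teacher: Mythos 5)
Your proposal is correct and follows essentially the same route as the paper: a probabilistic construction via independent uniform colourings, the per-element probability computation $\Pr\left[\bigcap_{i}A_i^{s_i}=\emptyset\right]=(1-k^{-k})^{|X|}$, and a union bound over at most $t^k$ index tuples. Two small points where you are slightly more careful than the paper's writeup. First, the paper fixes ``$k$ indices at least two of which are different'' and then applies the bound $\Pr[j\notin\bigcap_iA_i^{s_i}]\le 1-k^{-k}$, but this estimate is only valid when the indices are pairwise distinct; you correctly restrict the first-moment computation to pairwise-distinct tuples and treat repeated-index tuples separately as deterministically empty (which is all the downstream application needs, and in fact the lemma's ``if and only if'' should be read with this convention). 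Second, the paper concludes ``as long as $t^k<e^{|X|/k^k}$'' while stating $t=e^{|X|/k^{k+1}}$, for which $t^k$ equals rather than is strictly below $e^{|X|/k^k}$; your observation that $(1-k^{-k})^{|X|}$ is \emph{strictly} below $e^{-|X|/k^k}$ closes this off-by-a-hair gap cleanly. Substantively the two proofs are the same.
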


\begin{proof}
We use a probabilistic construction: Choose each partition $A^s$ independently at random (that is, each element in $X$ will be placed in $A_j ^s$ with equal probability). Fix $k$ indices $1\leq s_1,...,s_k\leq t$ at least two of which are different. We have that
\[
\Pr[\cap_{i=1}^{k} A_i ^{s_i} =\emptyset]=\prod_{j=1}^{|X|} \Pr[j\notin\cap_{i=1}^{k} A_i ^{s_i}]\le\left(1-\frac{1}{k^k}\right)^{|X|} \leq e^{-\frac{|X|}{k^k}}
\]
As there are at most $t^{k}$ choices of indices, using the union bound we conclude that as long as $t^{k}<e^{\frac{|X|}{k^k}}$ such a collection of partitions exists.
\end{proof}

\begin{proof}
\textbf{(of Theorem \ref{multi-minded thm_apx})}.
Suppose that $M$ is VCG-based, and obtains an approximation ratio of $m^{1-\epsilon}$  for  combinatorial auctions with $d$ duplicates w.r.t. multi-minded valuations. Let $\mathcal{H}$ denote its bank of allocations. As single-minded valuations contained in this class, $\mathcal{H}$ shatters a pair of subsets $B\subset [n]$, $X\subset [m]$ of sizes $d+1$, $\Omega(m^{\epsilon/2})$. Applying Lemma \ref{int. prop. col.}, yields a collection of partitions $\mathcal{F}=\{D^s\}_{s=1}^t\subset P(X,B)$ of size $t=e^{m^{\frac{1}{2}\cdot\epsilon} / (d+1)^{d+2}}$ which satisfy the $(d+1)$-wise intersection property.

We now use this collection of partitions in order to show a reduction from the \textbf{approximate disjointness problem}: This problem deals with $r$ players, each of which holds a $t$-bit string that specify a subset $A_i\subseteq\{1,...,t\}$. The players are required to distinguish between the following two extreme cases:

\begin{itemize}
	\item \textbf{Positive:} $\cap_{i=1}^{r} A_i\neq\emptyset$.
	\item \textbf{Negative:} $A_i\cap A_j=\emptyset$ for all $1\leq i\neq j\leq r$.
\end{itemize}

The communication complexity of the approximate disjointness problem is known to be $\Omega\left(\frac{t}{r^4}\right)$ \cite{nisan2002communication}. The reduction is given as follows: Suppose that each of the bidders in $B$ holds a subset $A_i\subset \{1,\ldots,t\}$ which determines the multi-minded valuation
\[
v_i(T)=\begin{cases}
1, & \exists s\in A_i\ {s.t.\ } D_i^s\subseteq T \\
0, & else
\end{cases}
\]
Observe that if $\cap_{i=1}^{r} A_i\neq\emptyset$, indicated by some $s\in \cap_{i=1}^{r} A_i$, then the allocation $A^s$ satisfies all bidders in $B$, and since the pair $B,X$ is shattered, the social welfare will be $d+1$.

On the other hand, if $A_i\cap A_j=\emptyset$ for all $1\leq i\neq j\leq r$, then it is not hard to see that it is impossible to satisfy all the bidders in $B$ by {\em any} $d$-duplicate allocation, and therefore, the social welfare will be $\le k$.

We conclude that simulating these valuations (and zero valuations for all bidders not in $B$) solves the approximate disjointness problem, and the communication complexity of $M$ is therefore
\[
\Omega\left(\frac{t}{(d+1)^4}\right)=\Omega\left(\frac{e^{m^{\frac{1}{2}\cdot\epsilon} / (d+1)^{d+2}}}{(d+1)^4}\right)
\]
which proves the claim.
\end{proof}

\section{Submodular valuations}\label{sec:results-last}
In this section we sketch four lower bounds for submodular valuations and subclasses of submodular valuations.
This demonstrate the power and generality of the VC technique, as we show how to prove under this framework three known results and one that strengthen a previous result of \cite{dobzinski2007limitations}. Besides unification, the proofs sketched here are somewhat simpler. In particular, theorem \ref{thm:submodular_comm} that substantially simplifies (and strengthen) \cite{dobzinski2007limitations}

\subsection{Communication lower bounds on VGC mechanisms}

\begin{theorem} \label{thm:submodular_comm}
For every $\epsilon>0$, if a VCG-based mechanism have an approximation ratio of $m^{\frac{1}{3}-\epsilon}$ w.r.t. submodular valuations, it must use exponential communication.
\end{theorem}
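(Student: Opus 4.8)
\textbf{Proof proposal for Theorem~\ref{thm:submodular_comm}.}

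The plan is to reuse the two-step VCG recipe: first a VC-style shattering step, then a reduction from a communication-complexity problem. For the shattering step I would exploit the fact that $\mathbf{0/1\text{-}Additive}\subset\mathbf{Submodular}$, so a VCG-based mechanism with approximation ratio $m^{1/3-\epsilon}$ w.r.t. submodular valuations in particular has approximation ratio $m^{1/3-\epsilon}$ w.r.t. $0/1$-additive valuations. Applying Corollary~\ref{thm:shatter_additive} with $k=2$ (so that $\frac{1}{k+1}=\frac13$) and with a number of bidders $n\ge m^{1/3-\epsilon}$, I obtain a shattered pair $S\subset[m]$, $A\in\binom{[n]}{2}$ with $|S|\ge \frac{m^{3\epsilon}}{\log_2 m}$; that is, \emph{all} partitions of the item set $S$ between the two bidders of $A$ appear (as restrictions of allocations) in the bank $\ch$ of the mechanism. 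This is exactly the structure needed to embed a two-player hard problem.

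For the reduction I would use the Natarajan-dimension special case of the machinery together with a hard two-party communication problem for welfare maximization with submodular valuations. Concretely, the classical lower bound (Dobzinski--Nisan--Schapira type) gives two players, Alice and Bob, holding submodular valuations $v_1,v_2$ over a ground set of size $N$, such that distinguishing $\mathrm{OPT}\ge$ (large) from $\mathrm{OPT}\le$ (small) requires $2^{\Omega(N)}$ communication; moreover one can take valuations supported on (and "caring about") only the ground set $S$, extended by $0$ elsewhere, so that the optimal split uses only the shattered items. Since the pair $(S,A)$ is shattered by $\ch$ and $M$ is MIR, $M$ optimizes \emph{exactly} over all partitions of $S$ between the two bidders in $A$; hence running $M$ on the instance where the two bidders of $A$ hold $v_1,v_2$ (and all other bidders hold the zero valuation) computes the optimal welfare over $S$, thereby solving the hard communication problem. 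With $N=|S|=\tilde\Omega(m^{3\epsilon})$, the communication of $M$ is $2^{\tilde\Omega(m^{3\epsilon})}$, i.e. exponential, contradicting a polynomial-communication assumption.

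The main obstacle is obtaining a clean two-player communication lower bound for submodular welfare maximization at the \emph{right gap} --- a factor that, after the parameter bookkeeping, yields the $m^{1/3-\epsilon}$ threshold rather than a weaker one (this is precisely where the improvement over the $m^{1/6}$ bound of~\cite{dobzinski2007limitations} comes from, and why $k=2$, i.e. the Natarajan dimension, is the sweet spot). I would handle this by taking a known hard instance for $2$-bidder submodular welfare where no allocation does better than a constant fraction of $\mathrm{OPT}$ and amplifying it across the $\tilde\Omega(m^{3\epsilon})$ shattered items; care is needed to ensure (i) the valuations remain genuinely submodular after restricting their support to $S$ and zero-extending, and (ii) the "all partitions of $S$ between two bidders" guarantee really does let $M$ recover the value of the hard instance (the MIR property is what makes this step clean, since exact optimization over the shattered sub-family is automatic). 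Apart from this, the argument is a routine instantiation of the Step~I/Step~II template already used for Theorems~\ref{single minded duplicates} and~\ref{multi-minded thm}.
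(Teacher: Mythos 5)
Your proposal matches the paper's proof in its essential structure: use Corollary~\ref{thm:shatter_additive} with $k=2$ (exploiting $\mathbf{0/1\text{-}Additive}\subset\mathbf{Submodular}$) to obtain a shattered pair $S\subset[m]$, $A\in\binom{[n]}{2}$ with $|S|$ polynomial in $m$, then zero out all other bidders and embed a two-bidder submodular welfare communication problem supported on $S$. One point where you overcomplicate the second step: because the mechanism is MIR and the pair $(S,A)$ is shattered, the mechanism \emph{exactly} maximizes $v_1(T_1)+v_2(T_2)$ over all partitions of $S$ between the two bidders of $A$, so you can reduce directly from exact two-bidder submodular welfare maximization, which is known to require exponential communication (Nisan--Segal); no gap instance, amplification, or ``right gap'' bookkeeping is needed, and the $m^{1/3-\epsilon}$ threshold is fixed entirely by the shattering step (the exponent $\frac{1}{k+1}=\frac13$ at $k=2$), not by the tightness of the downstream communication lower bound. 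Your worry about submodularity being preserved under restriction to $S$ and zero-extension is also a non-issue and is glossed over in the paper for the same reason.
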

This result strengthen a similar result due to \cite{dobzinski2007limitations}, who showed VCG mechanisms with approximation ratio of $m^{\frac{1}{6}-\epsilon}$ w.r.t. submodular valuations must use exponential communication. In addition, the proof sketched here is substantially simpler.
\begin{proof}(sketch)
Let $M$ be a VCG mechanism having an approximation ratio of $m^{\frac{1}{3}-\epsilon}$ w.r.t. submodular valuations. Let $\ch\subset P([m],[n])$ be its bank of allocations. Using theorem \ref{thm:shatter_additive_apx}, we can find a shattered pair $S,A$ where $|S|\ge m^{\epsilon}$ and $|A|=2$. Now, assume that all the players besides the players in $A:=\{1,2\}$ have the zero valuation, and the valuations of the bidders in $A$ depends only on the items in $S$.

The output of the mechanism induce an allocation of $S=A_1\dot\cup A_2$ that maximizes $v_1(A_1)+v_2(A_2)$. Therefore, the problem finding an allocation that maximizes the sum of two submodular functions can be solved using the mechanism. Since the solution of this problem requires exponential communication~\cite{nisan2006communication} the theorem follows.
\end{proof}

\subsection{Computational and value queries lower bounds}
\begin{theorem}\label{thm:sub_comp}
For any $\epsilon>0$, unless $NP\subseteq P/poly$, no efficient truthful mechanism has an approximation ratio of $m^{\frac{1}{2}-\epsilon}$ w.r.t. submodular valuations.
\end{theorem}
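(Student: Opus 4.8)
The plan is to follow, mutatis mutandis, the ``direct hardness'' $+$ VC argument developed for general valuations in Appendix~\ref{sec:general_comp_apx} (Theorem~\ref{thm_apx:general valuations}, Lemmas~\ref{lem:shatter_truthful_apx}--\ref{vertex cover_apx}), retuning it to bite at approximation ratio $m^{1/2-\epsilon}$ and, crucially, to use only submodular valuations. Two changes are needed. First, the probing class cannot be the almost single-minded valuations used there, since $1[T\subseteq S]$ is super-, not sub-, modular; in its place one uses a succinctly-representable submodular surrogate for single-minded behaviour (a capped-additive family in the spirit of \cite{buchfuhrer2010inapproximability}, or the matroid-rank family of \cite{dobzinski2012computational}), perturbed by the usual negligible additive term $\tfrac{1}{m^{3}}|S|$ so that the structured-submenu machinery applies. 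Second, the parameters live at the $\sqrt m$ scale: $n\approx m^{1/2-\epsilon/2}$ bidders with pairwise disjoint demand sets of size $\approx m^{1/2+\epsilon/2}$, so that these sets exactly saturate the $m$ items and an allocation serving all bidders has welfare $\approx n$, which a ratio-$m^{1/2-\epsilon}$ mechanism must essentially achieve.

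With this setup the argument has three stages. In Stage~1 I would reproduce the direct-hardness step of Lemma~\ref{lem:shatter_truthful_apx} --- which is in essence the content of \cite{dobzinski2012computational} --- to conclude that, unless the mechanism fails its approximation guarantee on the random disjoint instance, there is a bidder $i$ and a choice $v_{-i}$ of submodular probing valuations for the remaining bidders whose menu $R_{v_{-i}}$, and hence some structured submenu $\cs(v_{-i},k,p)$, has size $2^{m^{\Omega(1)}}$; here one invokes the taxation principle (bidder $i$ always receives a bundle maximizing $v_i(\cdot)-p_{v_{-i}}(\cdot)$ over $R_{v_{-i}}$, a set that does not depend on $v_i$) together with a concentration$+$union-bound argument ruling out that all menus are small. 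In Stage~2 one views $\cs(v_{-i},k,p)$ as a family of subsets of $[m]$ and applies Theorem~\ref{thm:sauer_gen} with $k=|Y|=2$ --- the classical Sauer--Shelah bound --- to extract a set $X\subseteq[m]$ of size $m^{\Omega(1)}$ all of whose subsets occur as $S\cap X$ for bundles $S\in\cs(v_{-i},k,p)$ that carry, up to $m^{-5}$, the same price and the same size $|S|=k$.

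Stage~3 (the reduction) then follows Lemma~\ref{vertex cover_apx}: one encodes a graph $G$ on the vertex set $X$ into bidder $i$'s valuation so that, through the (extended) menu prices $p_{v_{-i}}$, vertex covers of size $\ge t$ are made unattractive; by Lemma~\ref{lem:truthful_comp_apx} membership in $\cs(v_{-i},k,p)$ and the extended $p_{v_{-i}}$ are computed by polynomial-size circuits, so the whole construction is itself a polynomial-size circuit, and by the taxation principle $M$ allocates bidder $i$ a bundle whose $X$-part is a minimum-size vertex cover of $G$; reading off its size decides vertex cover on $m^{\Omega(1)}$ vertices, whence $NP\subseteq P/poly$.

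The main new difficulty relative to the general-valuations proof is that bidder $i$'s valuation must now be submodular and monotone. The coverage-style part --- which to each $S$ assigns the number of edges of $G$ met by $S\cap X$ --- is fine, being monotone submodular; but the corrector term $v_i^{2}$ of Lemma~\ref{vertex cover_apx} that pins the allocated bundle inside the structured submenu is monotone yet \emph{not} submodular (it jumps to a fixed value on $\cs(v_{-i},k,p)$ and on strict supersets, so its marginals can increase), and monotonicity forbids imposing the ``size exactly $t$'' constraint through $v_i$ directly. Resolving this --- following \cite{dobzinski2012computational}, by reducing from a coverage/packing gap problem and passing to a structured-menu variant engineered so that the reduction valuation lies in a clean submodular subclass, with a negligible additive term used only for tie-breaking --- is the heart of the matter; the remaining work is the routine bookkeeping that makes the Stage~1 contradiction trigger precisely at ratio $m^{1/2-\epsilon}$, using that $\sqrt m$ disjoint bundles of size $\sqrt m$ exactly saturate the items. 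If one instead wants a self-contained proof rather than a re-packaging, Stage~1 itself is the hard part and is due to \cite{dobzinski2012computational}.
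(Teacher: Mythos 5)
Your high-level plan (large structured submenu via \cite{dobzinski2012computational}, Sauer--Shelah to shatter a set $U\subset[m]$, then embed an NP-hard problem through a suitable $v_i$ via the taxation principle) matches the paper's. You also correctly locate the crux: the valuation used for the embedding must itself be monotone submodular, and the ``corrector term'' $v_i^2$ from Lemma~\ref{vertex cover_apx} is not. But that is exactly where your proposal stops, deferring to ``following \cite{dobzinski2012computational}'' rather than supplying the construction — and the paper's whole point here (``the step of embedding a hard problem given a `large menu' is somewhat simplified'') is a new, cleaner way to do precisely this step, so citing it away is a genuine gap, not a deferral of routine bookkeeping.

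Concretely, the paper does not reduce from VERTEX-COVER as you propose. It defines, for an arbitrary collection $B$ of subsets of the shattered set $U$, a single explicit monotone submodular $v_i$ with a staircase structure: linear $t|S|$ for $|S|<k$; on $|S|=k$ it equals $tk$ when $S\in\cs$ and $S\cap U\in B$, dips by $\frac{1}{m^4}$ when $S\in\cs$ and $S\cap U\notin B$, and equals $t(k-2^{-k})$ when $S\notin\cs$; for $|S|>k$ it is $t(k-2^{-|S|})$ or $tk$ depending on whether $p_{v_{-i}}(S)<p+\frac{1}{m^5}$. For large $t$ the taxation principle forces the mechanism to hand bidder $i$ a set $S\in\cs$ with $S\cap U\in B$ whenever such a set exists, so the mechanism answers ``is $B$ nonempty (restricted to $\cs|_U=2^U$)?''. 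The reduction is then from MAXCUT: $B$ is the collection of $S\subset U$ whose induced cut has at least $r$ edges, which by Lemma~\ref{lem:truthful_comp_apx} yields a polynomial-size circuit for $v_i$. Two smaller points: there is no need for a ``submodular surrogate for single-minded behaviour'' in Stage~1 — the paper takes $v_{-i}$ additive, which is already submodular and is exactly what Lemma~2.3 of \cite{dobzinski2012computational} provides; and the shattered set is used not as the vertex set of a graph whose covers are priced out, but as the index set of a membership oracle $B$, a more flexible encoding that sidesteps the submodularity obstruction you flagged.
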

\begin{theorem}\label{thm:sub_val}
For any $\epsilon>0$ any truthful mechanism that communicates with the bidders only by value queries, and has an approximation ratio of $m^{\frac{1}{2}-\epsilon}$ w.r.t. submodular valuations, must use exponentially many value queries.
\end{theorem}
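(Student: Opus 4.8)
The plan is to run the argument behind Theorem~\ref{thm_apx:general valuations} (specifically Lemmas~\ref{lem:shatter_truthful_apx}, \ref{lem:truthful_comp_apx} and~\ref{vertex cover_apx}) in the submodular, value-query setting, swapping out three ingredients: the ``almost single-minded'' family that is used to build a large menu is replaced by a submodular family; the polynomial-circuit realizability of Lemma~\ref{lem:truthful_comp_apx} is replaced by value-query realizability; and the \textsc{vertex-cover} reduction is replaced by a value-query lower bound for submodular maximization. Since $M$ is an arbitrary truthful (not VCG-based) mechanism it has no explicit bank of allocations, so --- exactly as in the general-valuations case, and unlike the VCG route of Corollary~\ref{thm:shatter_additive} --- I would fix all bidders but one and work with that bidder's menu $R_{v_{-i}}$, the taxation function $p_{v_{-i}}$, and the structured submenus $\cs(v_{-i},k,p)$. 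The point of a structured submenu is that within it all prices coincide up to $\tfrac{1}{m^5}$, so bidder $i$'s allocation inside $\cs(v_{-i},k,p)$ is governed by $v_i$ alone and $M$ effectively \emph{optimizes} $v_i$ over that submenu.

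The first (VC) step is the analogue of Lemma~\ref{lem:shatter_truthful_apx}: from the hypotheses (namely, $M$ is $m^{1/2-\epsilon}$-approximate w.r.t.\ submodular valuations, with a suitable number $n$ of bidders) one extracts a bidder $i$, \emph{submodular} valuations $v_{-i}$ with a $\poly(m)$-size value-query description, and parameters $k,p$ with $|\cs(v_{-i},k,p)| \ge 2^{\Omega(m^{\delta})}$ for a constant $\delta = \delta(\epsilon) > 0$. The proof follows the same template as before --- give the other bidders random disjoint desired bundles, invoke the approximation ratio to force a large menu, then pigeonhole over the pair $(k,p)$ to land inside a single structured submenu --- except that the other bidders must now hold submodular valuations (e.g.\ additive or budget-additive ones, which are submodular yet still rich enough to realize an exponentially large collection of achievable bundles for bidder $i$). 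Applying the Sauer--Shelah Lemma to $\cs(v_{-i},k,p) \subseteq \{0,1\}^{[m]}$ then produces an item set $X \subseteq [m]$, $|X| = \Omega(m^{\delta'})$ for a constant $\delta' > 0$, that is shattered: $\cs(v_{-i},k,p)|_X = 2^X$.

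The second step embeds a value-query-hard instance into $X$. Extend $p_{v_{-i}}$ to all of $2^{[m]}$ by $p_{v_{-i}}(T) := \min_{T \subseteq S,\, S \in R_{v_{-i}}} p_{v_{-i}}(S)$, as in the general-valuations proof, and define a submodular, monotone, normalized valuation $v_i$ built from a hard ``$\mathrm{YES}$ versus $\mathrm{NO}$'' instance of submodular maximization on ground set $X$ (plus a negligible additive perturbation to break ties), scaled so that its maximum over the shattered family $2^X$ is large in the $\mathrm{YES}$ case and small in the $\mathrm{NO}$ case. Because $X$ is shattered at a single price level and $M$ is truthful, the taxation principle forces $M$, run on $(v_i,v_{-i})$, to allocate bidder $i$ a bundle $T$ with $T \cap X$ maximizing $v_i$; hence the instance is decided by reading off $T \cap X$. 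As in the circuit analogue (Lemma~\ref{lem:truthful_comp_apx}), membership in $\cs(v_{-i},k,p)$ and the extended prices are computable from the fixed $v_{-i}$, and each value query that $M$ issues to $v_i$ can be answered with $O(1)$ queries to the hidden instance; therefore a mechanism making $q$ value queries yields an $O(q)$-query algorithm for a problem requiring $2^{\Omega(|X|)} = 2^{\Omega(m^{\delta'})}$ queries, forcing $q$ to be exponential. Running the same reduction from an \textsc{NP}-hard problem instead of a query-hard one gives the companion (computational) statement of Theorem~\ref{thm:sub_comp}.

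The main obstacle is the VC step: engineering a large menu for bidder $i$ using only \emph{submodular} valuations for $v_{-i}$, and checking that the structured-submenu bookkeeping (a fixed cardinality $k$, a price window of width $\tfrac{1}{m^5}$, and the stability condition that strict supersets have price at least $\tfrac{1}{m^3}$ higher) still carries through with such valuations and yields an \emph{exponentially} large submenu rather than merely a large menu. The secondary obstacle is the design of $v_i$: it must remain submodular, monotone and normalized even after the hardness structure and the tie-breaking perturbation are incorporated, while still ensuring both that $M$'s allocation to bidder $i$ faithfully reveals the answer and that each value query to $v_i$ is cheap to simulate from the embedded instance.
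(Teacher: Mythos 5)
Your proposal matches the paper's approach in all the essential structural steps: fix one bidder, use the taxation principle and structured submenus $\cs(v_{-i},k,p)$, show some submenu is exponentially large, apply Sauer--Shelah to shatter an item set $U$, craft a monotone submodular $v_i$ that encodes a hard Boolean object $B$ on $U$ while remaining price-consistent with the submenu, and simulate value queries to $v_i$ by value queries to $B$. Two small points of divergence from the paper, both minor: (1) the paper does not re-derive the exponential-submenu lemma via the ``random disjoint bundles'' template you sketch --- it simply cites Lemma~2.3 of \cite{dobzinski2012computational}, which already produces an exponentially large $\cs(v_{-i},k,p)$ with \emph{additive} (hence submodular) $v_{-i}$, so the ``main obstacle'' you flag is actually outsourced; and (2) the paper's embedded hard problem is not generic submodular maximization but the cleaner task ``given value-query access to a set family $B\subseteq 2^U$, decide whether $B=\emptyset$'', which trivially requires $2^{|U|}-1$ queries. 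You should also note that the concrete $v_i$ in the paper uses a scale factor $t\ge 1$ so that, via the taxation principle, the price term $p_{v_{-i}}$ is dominated and bidder $i$ is allocated a set $S\in\cs$ with $S\cap U\in B$ whenever one exists --- this is the piece that makes ``$M$ decides emptiness of $B$'' rigorous, and your write-up gestures at it (``single price level'', ``tie-breaking perturbation'') without pinning it down.
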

The first theorem was proved by \cite{dobzinski2012computational} under the weaker assumption that $NP\ne P$. The second theorem was proved by \cite{dobzinski2011impossibility}. The proof sketched here is close to the proofs in \cite{dobzinski2011impossibility,dobzinski2012computational}, but the step of embedding a hard problem given a ``large menu" is somewhat simplified.

\begin{proof} (of theorems \ref{thm:sub_comp} and \ref{thm:sub_val} -- sketch)
We will use the terminology of section \ref{sec:general_comp_apx}.
Assume $n=m^{\frac{1}{2}-\epsilon}$. Let $M$ be a truthful mechanism with approximation ratio $m^{\frac{1}{2}-\epsilon}$ w.r.t. submodular valuations. By lemma 2.3 in \cite{dobzinski2012computational}, there is $1\le k\le m$, $0\le p\le m$ and additive valuations $v_{-i}$ for which $|\cs(v_{-i},k,p)|\ge\frac{e^{m^{2\epsilon}}}{10m^8}$. Denote $\cs=\cs(v_{-i},k,p)$. By the Sauer-Shelah lemma, $\cs$ shatters a set $U\subset [m]$ of size $\ge m^\epsilon$ (for large enough $m$). Given collection $B$ of subsets of $S$, consider the following valuation
\[
v_i(S)=
\begin{cases}
t|S| & |S|< k
\\
tk & |S|=k,\;S\in \cs\text{ and }S\cap U\in B
\\
tk-\frac{1}{m^4} & |S|=k,\;S\in \cs\text{ and }S\cap U\not\in B
\\
t(k-2^{-k}) & |S|=k\text{ and }S\not\in \cs

\\
t(k-2^{-|S|}) & |S|> k\text{ and }p_{v_{-i}}(S)< p+\frac{1}{m^5}
\\
tk & |S|>k\text{ and }p_{v_{-i}}(S)\ge p+\frac{1}{m^5}
\end{cases}
\]
It is not hard to check that $v_i$ is nondecreasing, normalized and submodular for every $t\ge 1$. Also, by the taxation principle, for sufficiently large $t$, if the mechanism operates on $v_i,v_{-i}$ then it will allocate the bidder $i$ a sets $S\in \cs$ such that $S\cap U\in B$ (provided that such a set exists).

We now split the proof into two proofs (one for theorem \ref{thm:sub_comp} and the other for theorem \ref{thm:sub_val}).

{\it Theorem \ref{thm:sub_val}:} We note that given, $v_{-i},\cs$ and a {\em value query access} to the (characteristic function of the) collection $B$, it is possible to simulate a value query to $v_i$ by a single value query to $B$. Therefore, the mechanism can be used to solve the following problem: Given a value query access to $B$, determine whether $B$ is empty. It is clear that the query complexity of this last problem is $2^{|U|}-1$, and therefore, the query complexity of $M$ is at least $2^{|U|}-1\ge 2^{m^\epsilon}-1$.

{\it Theorem \ref{thm:sub_comp}:} We note that if membership in $B$ can be evaluated by a polynomial sized circuit, then, by lemma \ref{lem:truthful_comp_apx}, $v_i$ can be also realized by a polynomial sized circuit. Using this fact, it is not hard to show a (non-uniform) polynomial time algorithm for, say, MAXCUT. Given a graph on the vertices $U$ and a number $r$, we will define $B$ as the collection of all subsets $S\subset U$ such that the cut $(S,U\setminus S)$ consists of $\ge k$ edges. By running the mechanism with $v_i$ corresponding to this $B$, we will be able to efficiently identify whether the graph has a cut of size $\ge k$.
\end{proof}

\subsection{Computational lower bounds on capped additive valuations}

Our next result considers the class of valuations known as capped-additive (or budget-additive), where a bidder has an additive valuation over the items, together with a restriction on the maximal total price she is willing to pay.

\begin{definition}
A valuation function $v$ is said to be a \textbf{capped-additive} if there exist an additive valuation $a$ and a real value B, such that for every bundle
$S\subset [m]$, $v(S)= \min \{a(S),B\}$.
\end{definition}

\begin{theorem}
Let $\epsilon>0$. There is no polynomial-time VCG mechanism with an approximation ratio of $m^{0.5-\epsilon}$ w.r.t. capped additive valuations unless $NP\subset P/poly$.
\end{theorem}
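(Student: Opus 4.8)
The plan is to reuse the two-step template of Section~\ref{sec:vcg-techniques}: (Step~I) show that an approximation ratio of $m^{\frac12-\epsilon}$ forces the bank $\ch$ of the MIR mechanism to \emph{shatter} a pair $(S,A)$ of items and bidders with $|A|=2$ and $|S|=m^{\Omega(\epsilon)}$; and (Step~II) embed an NP-hard exact optimization problem in this shattered domain, using that an MIR mechanism optimizes \emph{exactly} over $\ch$. Note first that the ``large menu for a single bidder'' route used for submodular and general valuations is of no help here: facing a structured submenu that shatters an item set $U$, a single capped-additive bidder only has to solve $\max_{U'\subseteq U,\ |U'|\le k}\min\{a(U'),B\}$, which is trivial. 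Hence the argument must genuinely shatter \emph{two} bidders.

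\emph{Step~I (the VC step).} Let $M$ be a polynomial-time VCG-based mechanism with bank $\ch\subseteq P([m],[n])$ (in the hard instances we take $n=m^{1-\Theta(\epsilon)}$ bidders) and approximation ratio $m^{\frac12-\epsilon}$ w.r.t.\ capped-additive valuations. Since every $0/1$-additive valuation is capped-additive, $\ch$ has the $m^{\frac12-\epsilon}$-intersection property; but Theorem~\ref{thm:shatter_additive_0} produces a pair with $|A|=2$ only when the ratio is at most $m^{\frac13}$, so it does not apply here and we must exploit the budget cap. The plan is to probe $M$ with capped-additive profiles in which $[m]$ is partitioned at random into cells of equal size $\ell=m^{\Theta(\epsilon)}$, each cell assigned to a pair of bidders, both bidders of a pair holding a capped-additive valuation supported on their cell with the budget placed at the ``boundary'', so that the jointly optimal welfare on a cell can be attained only by a balanced split of its items between the two bidders. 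An averaging argument in the spirit of the proofs of Theorems~\ref{thm:shatter_single_minded_0} and~\ref{thm:shatter_additive_0} then forces $|\ch_{S,\{y_1,y_2\}}|\ge 2^{\Omega(|S|)}$ for some cell $S$ and some pair $\{y_1,y_2\}$, and Theorem~\ref{thm:sauer_gen} with $k=2$ (here $|Y|=|A|=2$) converts this into a shattered pair $(S',\{y_1,y_2\})$ with $|S'|=\Omega(\ell)=m^{\Omega(\epsilon)}$.

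\emph{Step~II (the reduction).} Fix the shattered pair $(S,A)$, $A=\{1,2\}$, and give every bidder outside $A$ the zero valuation. Given a \textbf{Partition} instance with weights $(w_x)_{x\in S}$ and total weight $W=\sum_{x\in S}w_x$, let bidders $1$ and $2$ both hold the capped-additive valuation $v(T)=\min\{\sum_{x\in T\cap S}w_x,\ W/2\}$. For any allocation one has $v(S_1)+v(S_2)\le W$, with equality iff $\{S_1\cap S,\ S_2\cap S\}$ splits $S$ into two parts of weight exactly $W/2$. Because $(S,A)$ is shattered, $\ch$ contains an allocation realizing every such split; and because $M$ is MIR, the welfare of its output equals $\max_{S_1\dot\cup S_2}\bigl(v(S_1)+v(S_2)\bigr)$, which is $W$ iff the \textbf{Partition} instance is a yes-instance. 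Since $|S|=m^{\Omega(\epsilon)}$ may be taken polynomially large and the weights need only $O(|S|)=\poly(m)$ bits, this is a bona fide reduction from \textbf{Partition}: a polynomial-time MIR mechanism would decide it, and as the shattered pair is obtained non-uniformly this gives $NP\subseteq P/poly$. (For a strongly-NP-hard variant one instead embeds a \textbf{3-Partition}-type instance into a constant number $|A|=O(1)$ of bidders.)

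\emph{Where the difficulty lies.} Step~II is routine; the weight of the theorem is in Step~I. Capped-additive valuations are expressively ``flat'' --- essentially weighted thresholds --- so neither the containment route behind the single-minded bound (capped-additive cannot express $1[T\subseteq S]$ for $|T|\ge2$) nor the generic intersection bound of Theorem~\ref{thm:shatter_additive_0} (too lossy at ratio $m^{\frac12-\epsilon}$) is available. Extracting a $2$-dimensional shattering of a polynomially large item set from the budget cap alone, with the correct dependence of $|S|$ on $\epsilon$, is the main obstacle, and is exactly what pins the threshold at $m^{\frac12-\epsilon}$.
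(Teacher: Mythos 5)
Your diagnosis of the Step~I difficulty is correct as far as it goes: plugging $k=2$ into Theorem~\ref{thm:shatter_additive_0} requires the ratio to be $m^{1/3-\epsilon'}$, because the hypothesis needs the $\frac{n}{5}$-intersection property (forcing $n\gtrsim m^{1/2-\epsilon}$) while the pigeonhole over $\binom{n}{2}$ bidder pairs inside its proof costs a factor $\approx n^2$ that a $2$-dimension of order $\tilde{\Theta}(m/n)$ cannot absorb once $\epsilon<1/6$. But you should be aware that the paper's own (one-paragraph) proof of this theorem \emph{does} go precisely via Theorem~\ref{thm:shatter_additive_apx}: ``capped-additive $\supset$ additive, apply Theorem~\ref{thm:shatter_additive_apx} to get a shattered $A\in\binom{[n]}{2}$, $|S|\ge m^{\epsilon/4}$, embed SUBSET-SUM.'' So your proposal is not rederiving the paper's argument --- it is, in effect, flagging that the paper's sketch does not close the gap for small $\epsilon$. (The original argument of \cite{buchfuhrer2010inapproximability,buchfuhrer2009limits} extracts a fixed two-valued labeling of a large item set \emph{before} invoking a Sauer--Shelah-type bound, so no pigeonhole over bidder pairs is ever paid; that is the move the paper's sketch compresses away.)

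What you propose instead --- a random partition of $[m]$ into cells of size $\ell=m^{\Theta(\epsilon)}$, each assigned to a pair of bidders with budget caps at the cell boundary, followed by an averaging argument --- is therefore a genuinely different Step~I, and it is plausible. As written, however, it is a plan rather than a proof. Two gaps would need to be filled: (i) the welfare guarantee $\ge m^{1/2+\epsilon}$ does not by itself force any single cell to be split in exact halves (many cells could contribute a small amount each, never hitting the caps), so before you can condition on a cell $S$ and pair $\{y_1,y_2\}$ and conclude $|\ch_{S,\{y_1,y_2\}}|\ge 2^{\Omega(\ell)}$ you need a quantitative accounting (e.g., randomized per-item weights as in Theorem~\ref{thm:shatter_single_minded_0}, or a threshold on which cells have both caps attained); (ii) once the count is in hand, applying Theorem~\ref{thm:sauer_gen} with $k=2$ over the two-letter alphabet $\{y_1,y_2\}$ is correct, but you must justify that the pair is fixed at the conditioning step so that no fresh pigeonhole loss is incurred. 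Your Step~II (Partition, two shattered bidders, MIR optimizes exactly) coincides with the paper's SUBSET-SUM embedding and is routine once the shattering is established.
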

This theorem have been proved in \cite{mossel2009vc,buchfuhrer2010inapproximability,buchfuhrer2009limits,dughmi2009amplified}, which first applied the notion of VC dimension in order to lower-bound the performance of mechanisms. The proof presented here somewhat simplifies the step of finding a shattered set.

\begin{proof} (sketch)
First, observe that capped additive valuations strictly contains additive valuations. Hence, the existence of such a mechanism $M$ together with theorem \ref{thm:shatter_additive_apx} implies that its bank of allocations  $\mathcal{H}$ shatters a pair $A\in \binom{[n]}{2}$, $S\subset [m]$ with, $|S|\ge m^{\epsilon/4}$. This allows us to embed SUBSET-SUM (see \cite{buchfuhrer2009limits}), and computaional hardness follows.
\end{proof}

\section{Hardness of the $k$-packing promise problem}\label{apx:k-packing}
Recall that the \textbf{k-packing promise problem} is the following. The input consists of $r$ sets $S_1,....,S_r\subset U$ and
an integer $C>0$, and the goal is to distinguish between the following two options:
\begin{itemize}
	\item \textbf{Positive:} There are $C$ sets in $S_1,....,S_n$ that are pairwise disjoint.
	\item \textbf{Negative:} For every collection $\mathcal{F}$ of $C$ sets out of $S_1,....,S_n$, there is some $x\in U$ that is covered by $k$ sets of $\mathcal{F}$.
\end{itemize}

\begin{theorem}
For every constant $k$, the $k$-packing promise problem is UGC-hard.
\end{theorem}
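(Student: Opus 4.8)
The plan is to reduce from the Unique Games problem via the structured intermediate problem of hypergraph vertex cover, following the approach of Bansal and Khot~\cite{bansal2010inapproximability}. Recall that~\cite{bansal2010inapproximability} establishes, assuming the UGC, that it is NP-hard to distinguish instances of $k$-uniform hypergraph vertex cover in which there is a vertex cover using a $\left(\tfrac1k+\delta\right)$-fraction of the vertices from instances in which every independent set has density at most $\delta$ (equivalently, every set of $(1-\delta)$-fraction of the vertices is a vertex cover, but no smaller one is). Dually, this is a statement about \emph{independent sets}: it is hard to tell whether the hypergraph has an independent set of relative size almost $1-\tfrac1k$, or whether all independent sets are tiny.

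First I would take such a hard hypergraph instance $\mathcal{G}=(U,E)$ with $k$-uniform edges and pass to its ``complement'' as a set system: for each vertex $u\in U$ introduce a set $S_u\subseteq E$ consisting of all hyperedges incident to $u$, and set $C = (1-\delta)|U|$ or an appropriate threshold. The crucial translation is: a family of pairwise disjoint sets $\{S_u\}_{u\in W}$ corresponds exactly to an \emph{independent set} $W$ in $\mathcal{G}$ (no two chosen vertices share a hyperedge), while the event that some element $x\in E$ (i.e.\ some hyperedge) is covered by $k$ of the chosen sets corresponds to all $k$ endpoints of that hyperedge being chosen. Next I would check the two promise cases line up: in the completeness case, an independent set of size $\geq (1-\tfrac1k-\delta)|U|$... wait --- here one must be slightly careful, because $k$ pairwise-disjoint sets is the \emph{packing} side, whereas the ``every $k$ of them has a common element'' is the \emph{covering/soundness} side; I would instead set up the reduction so the $C$ pairwise-disjoint sets in the Positive case come from a large independent set, and in the Negative case use that when every independent set is small, any $C$-subfamily must contain a hyperedge all of whose $k$ endpoints are present, which is exactly a point covered $k$ times. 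The parameters $C$, $k$, and $\delta$ are chosen so that Bansal--Khot's gap is preserved.

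The key steps in order: (i) state the Bansal--Khot UGC-hardness for $k$-uniform hypergraph vertex cover with the appropriate constants; (ii) define the set-system instance $\{S_u\}_{u\in U}$ with $S_u = \{e\in E : u\in e\}$ over universe $E$, and fix the integer $C$; (iii) verify that the Positive case of the hypergraph problem (large independent set) maps to the Positive case of $k$-packing ($C$ pairwise-disjoint sets); (iv) verify the contrapositive: if no $x\in U$ is covered $k$ times by any $C$-subfamily, extract a $C$-sized independent set in $\mathcal{G}$, contradicting soundness; (v) note the reduction is polynomial-time and conclude UGC-hardness.

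The main obstacle I expect is step (iv), the soundness analysis: one has to get the quantifiers exactly right so that ``no $C$ of the $S_u$ have a common $k$-fold intersection element'' genuinely produces an independent set (or near-independent set) large enough to violate the Bansal--Khot soundness guarantee, and to confirm that a ``point covered by $k$ sets'' in the $k$-packing formulation matches ``a hyperedge fully inside the chosen vertex set'' rather than some weaker multiplicity condition. A secondary subtlety is that~\cite{bansal2010inapproximability} is stated for a slightly different promise (fractional sizes and the $(1/k+\delta)$ gap), so the constants $C$ and the relationship between $C$ and $|U|$ need to be instantiated carefully; but once the dictionary between independent sets and set packings is fixed, this is bookkeeping rather than a new idea.
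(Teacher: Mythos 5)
Your dictionary between set systems and hypergraphs is the right one --- you correctly observe that $\{S_u\}_{u\in W}$ being pairwise disjoint means ``no hyperedge contains two vertices of $W$'' (a \emph{strong} independent set), and that an element of the universe covered by $k$ of the chosen sets means ``a hyperedge fully contained in the chosen vertex set.'' The reduction itself (vertices $\to$ sets of incident hyperedges, hyperedges $\to$ ground elements) is exactly the one the paper uses.

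The gap is in step~(iii), and it is not merely bookkeeping. The version of Bansal--Khot you recall has, in the completeness case, a \emph{weak} independent set of size $\approx(1-1/k)|U|$: a vertex set $W$ such that no hyperedge is entirely contained in $W$. This does not imply that $\{S_u\}_{u\in W}$ are pairwise disjoint; two vertices $u,v\in W$ can still lie in a common hyperedge (as long as the hyperedge's other $k-2$ vertices are outside $W$), in which case $S_u\cap S_v\neq\emptyset$. So the completeness side of the reduction, as you set it up with $C=(1-\delta)|U|$, does not go through. What is actually needed is the \emph{strong}-independence formulation that the paper states and attributes to~\cite{bansal2010inapproximability}: YES, there is $V'$ with $|V'|\geq \frac{1}{2k}|V|$ such that no hyperedge contains \emph{two or more} vertices of $V'$; NO, every $V'$ of that size contains a hyperedge. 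With that promise and $C=\frac{1}{2k}|V|$ (much smaller than your $(1-\delta)|U|$), your translation works verbatim: a strong independent set of size $C$ gives $C$ pairwise-disjoint sets, and in the NO case any $C$ vertices span a fully-contained $k$-edge, i.e.\ a ground element covered $k$ times. So the fix is to invoke the strong form of the Bansal--Khot gap (or observe that their dictatorship construction yields strong independence in the YES case), rather than the weak-independent-set form you were implicitly using.
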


\begin{proof}
We shall use a reduction from the following promise problem: Given a $k$-uniform hypergraph $G = (V, E)$, one should distinguish between the following cases:

\begin{itemize}
\item \textbf{Yes case}: there exists $V'\subset V$, satisfying $|V'|\geq \frac{1}{2k} |V|$, such that no hyper-edge contains two or more vertices from $V'$.
\item \textbf{No case}: every set $V'$ of size $\ge \frac{1}{2k}|V|$ contains an edge.
\end{itemize}
This problem (actually, a more general one) was shown to be UGC-hard in \cite{bansal2010inapproximability}. The reduction to the $k$-packing promise problem is given by referring to the hyper-edges as the elements of the universe. For each vertex $i\in V$ define $S_i$ to be the set of all hyper-edges that contain $i$. Finally, let $C=\frac{1}{2k}|V|$.

It is straight-forward to verify that this defines a valid reduction to the $k$-packing promise problem.
\end{proof}

\section{Open Questions}

\noindent{\bf Inapproximability results for deterministic truthful mechanisms.} As Table~\ref{table} shows, important questions about the approximability of deterministic truthful mechanisms remain open, including the approximability of general truthful mechanisms for general and submodular valuations in the communication complexity model, and the approximability of general truthful mechanisms for capped-additive (and other classes of succinctly-described valuations) in the computational complexity model.

\vspace{0.05in}\noindent{\bf Inapproximability results for randomized truthful mechanisms.} We applied VC machinery to establish lower bounds for deterministic mechanisms. What about randomized mechanisms? Consider, e.g., the subclass of truthful mecahnisms that are maximal in \emph{distributional} range (MIDR~\cite{dobzinski2009power}), i.e., always output the best probability distribution over outcomes in a predetermined set of probability distributions (that need not contain all distributions). Can VC machinery be extended to prove lower bounds for MIDR (as with MIR)?

\vspace{0.05in}\noindent{\bf Optimization over partial domains.} SAT$_L$ is an example of an interesting, and largely unexplored, class of problems in complexity theory: optimization over ``partial domains'', i.e., when the space of possible solutions is (potentially) incomplete. Intuitively, any NP-hard problem remains hard so long as the space of solutions is ``large''. Our results here can, from this perspective, be regarded as providing the means for proving this in the context of set-packing problems. Partial-domain variants of other classical problems seem to require new machinery Consider, e.g., k-SET-COVER, in which the input is a collection of subsets $S_1,\ldots,S_n$ of and the goal is to determine whether $U$ can be covered by $k$ of these subsets. What about k-SET-COVER$_L$, in which the indices of the $k$ subsets must now belong to some large predetermined set $L\subseteq [n]^k$?

\vspace{0.05in}\noindent{\bf Tight bounds for the new dimension.} We observe that when $k=n$ the bound of Theorem~\ref{thm:sauer_gen} is tight for every choice of $d,m$. To see this, fix $m\ge d\ge 0$ and $n\ge k\ge 2$ and onsider the class $\ch\subset [n]^{[m]}$ consisting of all functions $f:[m]\to [n]$ for which $|\{x\mid f(x)\ge k\}|\le d$. Now, observe that $\Dim_{k}(\ch)=d$ and
\[
|\ch|=\sum_{i=0}^{d}\binom{m}{i}(k-1)^{n-i}(n+1-k)^{i}~.
\]

However, if $k<n$ the bounds derived from Theorem~\ref{thm:sauer_gen} are loose. Sharpening the upper/lower bounds is left for future research.

\end{document}